\newcommand\Tx[1]{\mathrm{#1}}
\newcommand\Se[1]{\mathcal{#1}}
\newcommand\Db[1]{\mathbb{#1}}
\newcommand\MB[1]{\left[#1\right]}
\newcommand\LB[1]{\{#1\}}
\newcommand\SB[1]{\left(#1\right)}
\newcommand{\RN}[1]{\textup{\uppercase\expandafter{\romannumeral#1}}}
\newtheorem{theo}{Theorem}
\newtheorem{lemma}{Lemma}
\newtheorem{exam}{Example}
\newtheorem{defi}{Definition}
\newtheorem{rem}{Remark}
\newtheorem{cons}{Construction}
\def\BState{\State\hskip-\ALG@thistlm}
\definecolor{apple green}{rgb}{0.17,0.75,0.13}
\definecolor{lara}{rgb}{0,0,0}
\definecolor{robert}{rgb}{0.1,0.1,0.8}
\definecolor{prisca}{rgb}{0.1,0.8,0.1}
\newcommand*{\algrule}[1][\algorithmicindent]{\makebox[#1][l]{\hspace*{.5em}\vrule height 0.9 \baselineskip depth 0.3\baselineskip}}%
\def\ALG@printindent{%
    \ifnum \theALG@nested>0
        \ifx\ALG@text\ALG@x@notext
            \addvspace{0pt}
        \else
            \unskip
            \ALG@printindent@tempcnta=1
            \loop
                \algrule[\csname ALG@ind@\the\ALG@printindent@tempcnta\endcsname]%
                \advance \ALG@printindent@tempcnta 1
            \ifnum \ALG@printindent@tempcnta<\numexpr\theALG@nested+1\relax
            \repeat
        \fi
    \fi
    }%
\patchcmd{\ALG@doentity}{\noindent\hskip\ALG@tlm}{\ALG@printindent}{}{\errmessage{failed to patch}}
\def\old@comma{,}
    \old@comma\discretionary{}{}{}%
\begin{document}
\bstctlcite{IEEEexample:BSTcontrol}
\title{Hierarchical Coding for Cloud Storage: Topology-Adaptivity, Scalability, and Flexibility}

\author{Siyi~Yang,~\IEEEmembership{Student Member,~IEEE},
        Ahmed~Hareedy,~\IEEEmembership{Member,~IEEE},
        Robert~Calderbank,~\IEEEmembership{Fellow,~IEEE},
        and~Lara~Dolecek,~\IEEEmembership{Senior Member,~IEEE}
\thanks{Elements of this paper were presented in part at the IEEE Global Communications Conference, Waikoloa, Hawaii, USA, December 2019 \cite{Yang2019HC}, and at the IEEE International Symposium on Information Theory, Los Angeles, California, USA, June 2020 \cite{Yang2020DSN}.}
\thanks{Siyi Yang and Lara Dolecek are with the Department of Electrical and Computer Engineering, University of California, Los Angeles, Los Angeles, CA 90095 USA (email: siyiyang@ucla.edu; dolecek@ee.ucla.edu).} 
\thanks{Ahmed Hareedy and Robert Calderbank are with the Department of Electrical and Computer Engineering, Duke University, Durham, NC 27708 USA (email: ahmed.hareedy@duke.edu; robert.calderbank@duke.edu).}
}

\maketitle

\begin{abstract} In order to accommodate the ever-growing data from various, possibly independent, sources and the dynamic nature of data usage rates in practical applications, modern cloud data storage systems are required to be scalable, flexible, and heterogeneous. The recent rise of the blockchain technology is also moving various information systems towards decentralization to achieve high privacy at low costs. While codes with hierarchical locality have been intensively studied in the context of centralized cloud storage due to their effectiveness in reducing the average reading time, those for decentralized storage networks (DSNs) have not yet been discussed. In this paper, we propose a joint coding scheme where each node receives extra protection through the cooperation with nodes in its neighborhood in a heterogeneous DSN with any given topology. This work extends and subsumes our prior work on coding for centralized cloud storage. In particular, our proposed construction not only preserves desirable properties such as scalability and flexibility, which are critical in dynamic networks, but also adapts to arbitrary topologies, a property that is essential in DSNs but has been overlooked in existing works.

\begin{IEEEkeywords}
Joint hierarchical coding, cooperative data protection, decentralized storage networks, scalability, flexibility.
\end{IEEEkeywords}
\end{abstract}

\IEEEpeerreviewmaketitle

\section{Introduction}
\label{sectoin: introduction}

\IEEEPARstart{I}{n} response to the rapidly growing demand of data management, cloud storage such as Microsoft Azure and Amazon Web Services \textcolor{lara}{have} become among the most widely deployed public cloud services. In these centralized cloud services, a tech giant takes full custodianship over data of all its customers\textcolor{lara}{; this} situation can result in expensive infrastructure maintenance and may lead to privacy violations. Decentralized storage networks (DSNs) such as Storj \cite{storj2018}, in which no entity is solely responsible for all data, have emerged as a secure and \textcolor{lara}{economic} alternative to centralized cloud services. DSNs are believed to be economically attractive since extra capacity can be afforded by utilizing idle storage space on devices at the edge of the network. Despite all advantages of decentralization, practical management of personal devices also faces challenges from component failures, high churn rates, heterogeneous bandwidths and link speeds, in addition to dynamic node balancing for content delivery of hot files \cite{storj2018}. While erasure correction (EC) codes are widely used to combat component failures, EC schemes that address the aforementioned issues are relatively overlooked. In this paper, we propose EC solutions that are tailored to tackle those challenges pertaining to DSNs.

Latency and reliability are among the most critical factors that customers care about \textcolor{lara}{in} cloud storage. However, DSNs naturally impose numerous challenges on simultaneously maintaining low latency and high reliability. EC solutions with large block lengths are more resilient to large weight errors, but they simultaneously slow down the recovery for the more frequent cases where only few erasures occur. To reach a better trade-off between data reliability and latency, codes enabling multi-level access are desired. In these codes, any node is allowed to access different sets of helper nodes to retrieve the data, where the sizes of the sets get reduced if the number of erasures to be recovered is small enough. This architecture is referred to as codes with hierarchical localities. While hierarchical coding in the context of centralized storage \cite{hassner2001integrated,huang2017multi,cassuto2017multi,wu2017generalized,ballentine2018codes,zhang2018generalized,blaum2018extended,balaji2019tight,Yang2019HC} has been intensively studied, codes for DSNs have been mostly discussed without considering localities \cite{dimakis2010distributed,kong2010decentralized,ye2018cooperative}.

More recently, codes with localities in multi-rack storage, a special case of DSNs, have also been investigated, where either the system is considered to be homogeneous \cite{tebbi2019multi,hou2019rack,chen2019explicit,prakash2018storage}, or the network topology has a simple structure \cite{li2010tree,pernas2013non}. However, DSNs typically have more sophisticated topologies characterized by heterogeneity among bandwidths of communication links and erasure statistics of nodes due to the arbitrary and dynamic nature of practical networks \cite{pernas2013non,wang2014heterogeneity,ibrahim2016green,sipos2018network,sipos2016erasure}. Instead of solutions for simplified models, schemes that fit into any topology (a property referred to as \textcolor{lara}{\textbf{topology-adaptivity}} later on) with customizable data lengths and redundancies are desired to exploit the existing resources.

Another major challenge for DSNs comes from the high churn rate, namely, participants join the network and leave without a predictable pattern. Therefore, it is essential for a DSN to enable its organic growth, i.e., enable expanding the backbone network to accommodate additional node operators, without rebuilding the entire infrastructure \cite{rimal2009taxonomy}; this property is referred to as \textcolor{lara}{\textbf{scalability}}. Moreover, data stored at certain nodes occasionally become \textcolor{lara}{hotter} than anticipated\textcolor{lara}{, and the download rate can thus exceed} the bandwidth limit. \textcolor{lara}{In such a scenario,} dynamic node balancing is required for content delivery \textcolor{lara}{to reach a lower latency}. In particular, the cloud (node) should be split into smaller clouds without worsening the global erasure correction capability or changing the remaining components. This property is referred to as \textcolor{lara}{\textbf{flexibility}} and has been firstly investigated for dynamic data storage systems under the discussion of \textcolor{lara}{the} so-called sum-rank codes \cite{martnez2018universal}. However, sum-rank codes require a \textcolor{lara}{Galois field} size that grows \textit{\textcolor{lara}{exponentially}} with the maximum local block length, which is a major obstacle to being implemented in real world applications \cite{martnez2018universal}.

In this paper, we strategically combine hierarchical locality and topological properties of a DSN. We develop a hierarchical coding scheme that is topologically-adaptive. The scheme is built upon our prior work on centralized cloud storage \cite{Yang2019HC} and preserves desirable properties including scalability and flexibility. The \textcolor{lara}{Galois field} size of this scheme grows \textit{\textcolor{lara}{linearly}} with the local block length. Our proposed coding scheme enables joint encoding and decoding of the data stored at all nodes such that nodes in \textcolor{lara}{a} neighborhood cooperatively protect and validate their stored data collectively in the DSN. Cooperation in DSNs further improves the reliability since information propagates from more reliable nodes to less reliable nodes through paths connecting them.

The rest of the paper is organized as follows. In \Cref{section notation and preliminaries}, we introduce the DSN model and necessary preliminaries. In \Cref{section coop}, we define \textcolor{lara}{erasure correction (EC)} hierarchies as well as their depth to systematically describe the \textcolor{lara}{maximal number of recoverable erasures} corresponding to different access levels. We present a coding scheme with depth $1$ that results in a better recovery speed \textcolor{lara}{compared} with existing schemes that are not topologically-adaptive \cite{li2010tree,pernas2013non}. We also discuss the recoverable erasure patterns of the proposed construction and show that our scheme enables correction of erasure patterns relevant to DSNs. In \Cref{section multi-level cooperation}, we extend the single-level construction (depth $1$) to have higher-level cooperation. In the hierarchical scheme, the cooperation between nodes in the DSN is described by the so-called \textcolor{lara}{\textbf{cooperation graphs}}. We also present sufficient conditions on any graph to be a cooperation graph, and refer to graphs satisfying these conditions as \textcolor{lara}{\textbf{compatible graphs}}. In \Cref{section desired properties}, we first present an algorithm that searches for a cooperation graph on any DSN with a given topology. Next, we show that our coding scheme supports scalability and flexibility. Finally, we summarize our results in \Cref{section conclusion}.

\section{Notation and Preliminaries}
\label{section notation and preliminaries}

In this section, we discuss the model and mathematical representation of a DSN, as well as necessary preliminaries. Throughout the remainder of this paper, $\MB{N}$ refers to $\{1,2,\dots,N\}$. 
For a vector $\bold{v}$ of length $n$, $v_i$, $1\leq i\leq n$, represents the $i$-th component of $\bold{v}$, and $\bold{v}\MB{a:b}=(v_{a},\dots,v_b)$. For a matrix $\bold{M}$ of size $a\times b$, $\bold{M}\MB{i_1:i_2,j_1:j_2}$ represents the sub-matrix $\bold{M}'$ of $\bold{M}$ such that $(\bold{M}')_{i-i_1+1,j-j_1+1}=(\bold{M})_{i,j}$, $i\in\MB{i_1:i_2}$, $j\in\MB{j_1:j_2}$. For vectors $\bold{u}$ and $\bold{v}$ of the same length $p$, $\bold{u}\succ\bold{v}$ and $\bold{u}\prec \bold{v}$ means $u_i>v_i$ and $u_i<v_i$, for all $i\in\left[p\right]$, respectively; $\bold{u}\succeq \bold{v}$ and $\bold{u}\preceq \bold{v}$ means $u_i \geq v_i$ and $u_i \leq v_i$, for all $i\in\left[p\right]$, respectively. For any $m,n\in\mathbb{N}$, an identity matrix of size $n\times n$ is denoted by $\bold{I}_n$, and a zero matrix of size $m \times n $ is denoted by $\bold{0}_{m\times n}$. For any $q\in \mathbb{N}$, $\textup{GF}(q)$ refers to a Galois field with size $q$. In this paper, we constrain $q$ to be a power of $2$.

\begin{figure}
\centering
\includegraphics[width=0.5\textwidth]{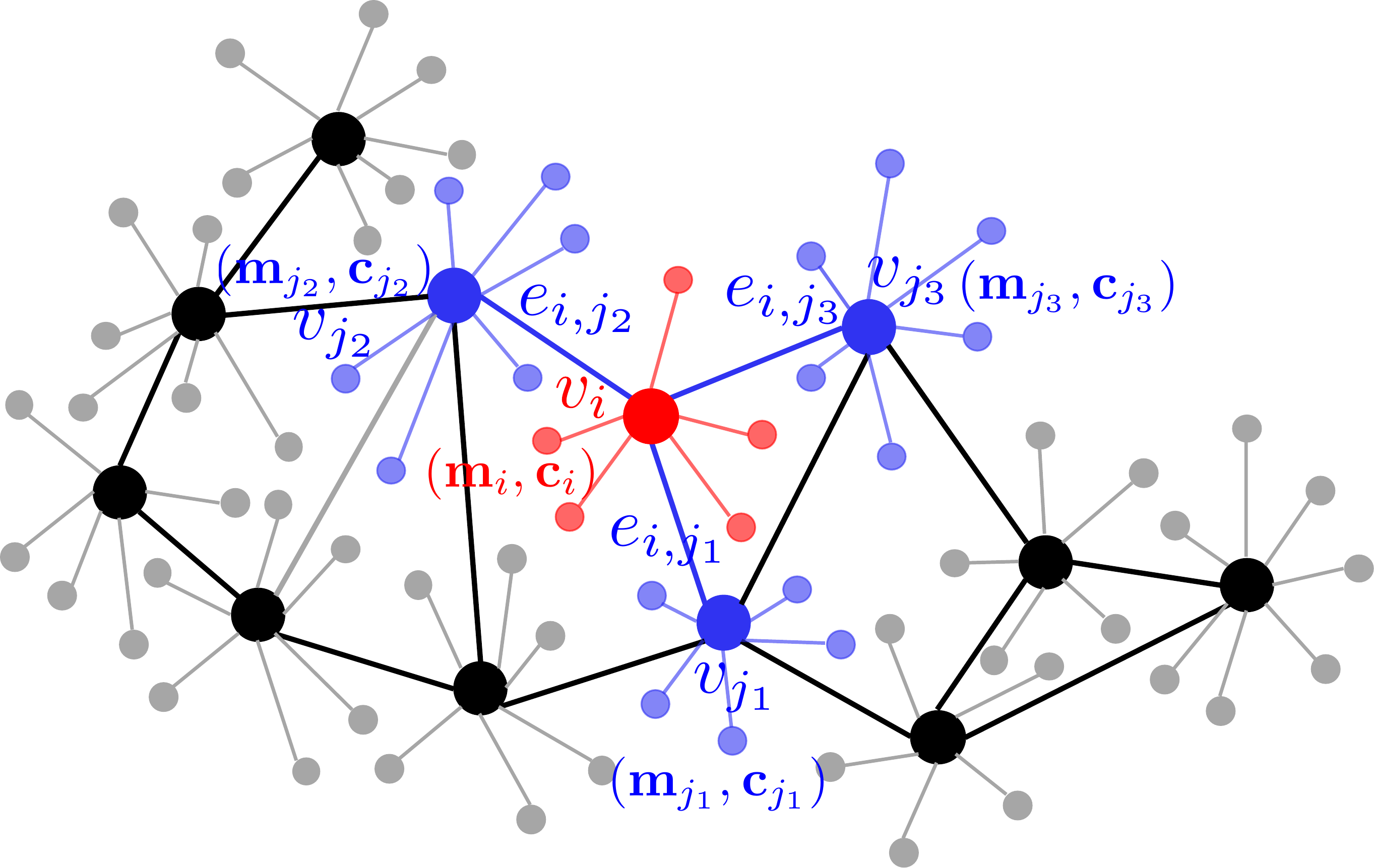}
\caption{Decentralized storage network (DSN). For the cluster with the master node $v_i$, message $\bold{m}_i$ is encoded to $\bold{c}_i$, and symbols of $\bold{c}_i$ are stored distributively among non-master nodes that are locally connected to $v_i$. In the figures after Fig.~\ref{fig: figmodel}, we omit the local non-master nodes for clarity of figures.}
\label{fig: figmodel}
\end{figure}

\subsection{Decentralized Storage Network}
\label{subsec: DSN model}
In a DSN, nodes are typically partitioned into distributed clusters of nodes, where each cluster has a ``master node'' that functions in this cluster similar to that of a central node in a centralized network, which is what the ``decentralization'' refers to. As shown in Fig.~\ref{fig: figmodel}, each master node, represented by big bold-colored nodes, communicates with both its neighboring master nodes and other nodes in the cluster it belongs to, whereas each non-master node, represented by small light-colored nodes, only communicates with the master node of the cluster it belongs to. For the cluster with the master node $v_i$, message $\bold{m}_i$ is encoded to $\bold{c}_i$, and symbols of $\bold{c}_i$ are stored distributively among non-master nodes that are locally connected to $v_i$. For clarity and simplicity of figures and notation, we omit the non-master nodes in figures after Fig.~\ref{fig: figmodel}. We then refer to master nodes and the communication links among them by ``nodes'' and ``edges'', respectively, in the remainder of this paper.

As shown in Fig.~\ref{fig: figmodel}, a DSN is modeled as a graph $G(V,E)$, where $V$ and $E$ denote the set of nodes (master only) and edges, respectively. Codewords are stored among the nodes in a cluster. A failed node in a cluster is regarded as an erased symbol in the codeword stored at this cluster. A cluster is represented in $G$ by its master node $v_i\in V$ solely. Each edge $e_{i,j}\in E$ represents a communication link connecting node $v_i$ and node $v_j$, through which $v_i$ and $v_j$ are allowed to exchange information. Denote the set of all neighbors of node $v_i$ by $\mathcal{N}_i$, e.g., $\mathcal{N}_i=\{v_{j_1},v_{j_2},v_{j_3}\}$ in Fig.~\ref{fig: figmodel}, and refer to it as the \textbf{neighborhood} of node $v_i$. Messages $\{\bold{m}_i\}_{v_i\in V}$ are jointly encoded as $\{\bold{c}_i\}_{v_i\in V}$, and $\bold{c}_i$ is stored at the cluster of nodes containing $v_i$. For simplicity, we instead say ``$\bold{c}_i$ is stored at node $v_i$'' in the rest of the paper.

For a DSN denoted by $G(V,E)$, let $p=|V|$. Suppose $G$ is associated with a tuple $(\bold{n},\bold{k},\bold{r})\in\left(\mathbb{N}^p\right)^3$, where $\bold{k},\bold{r}\succ\bold{0}$ and $\bold{n}=\bold{k}+\bold{r}$. Note that $k_i$ represents the length of the message $\bold{m}_i$ associated with $v_i\in V$; $n_i$ and $r_i$ denote the length of $\bold{c}_i$ stored at $v_i$ and its syndrome, respectively. Let $\bold{m}=(\bold{m}_1,\bold{m}_2,\dots,\bold{m}_p)$, and $\bold{c}=(\bold{c}_1,\bold{c}_2,\dots,\bold{c}_p)$. A set $\Se{C}$ is called an $(n,k,d)_q$-code if $\Se{C}\subset \textup{GF}(q)^n$, $\Tx{dim}(\Se{C})=k$, and $\min\limits_{\bold{c},\bold{c}'\in \Se{C}, \bold{c}\neq\bold{c}'} d_\textup{H}(\bold{c},\bold{c}')-1=d$, where $d_\textup{H}$ refers to the Hamming distance. We next define a family of codes with double-level access. Note that our discussion is restricted to linear block codes.

\subsection{Cauchy Matrices}
\label{subsec: cauchy matrices}
\textcolor{lara}{Before we describe the constructions in detail}, we first introduce the so-called \textbf{Cauchy matrices} that are used as major components in the generator matrices of our codes. Codes based on Cauchy matrices, the so-called Cauchy Reed-Solomon (CRS) codes, have been studied in \cite{van1986minimum,bloemer1995xor}. CRS codes present desirable properties, as discussed later, and have been proposed to be applied to distributed storage systems in \cite{plank2006optimizing,wu2015efficient}. In our work, we further exploit the scaling property of CRS codes, which makes them an ideal choice to accommodate hierarchical access on arbitrarily deployed nodes in DSNs.

\begin{defi} \emph{\textbf{(Cauchy matrix)}} \label{CauchyMatrix} Let $s,t\in\Db{N}$ and $\textup{GF}(q)$ be a Galois field of size $q$. Suppose $a_1,\dots,a_s,b_1,\dots,b_t$ are $s + t$ distinct elements in $\textup{GF}(q)$. The following matrix is known as a \textbf{Cauchy matrix},
\begin{equation*}\left[
\begin{array}{cccc}
\frac{1}{a_1-b_1} & \frac{1}{a_1-b_2} & \dots & \frac{1}{a_1-b_t}\\
\frac{1}{a_2-b_1} & \frac{1}{a_2-b_2} & \dots & \frac{1}{a_2-b_t}\\
\vdots & \vdots &\ddots & \vdots \\
\frac{1}{a_s-b_1} & \frac{1}{a_s-b_2} & \dots & \frac{1}{a_s-b_t}\\
\end{array}\right].
\label{defi: Cauchy matrix}
\end{equation*}
We denote this matrix by $\bold{Y}(a_1,\dots,a_s;b_1,\dots,b_t)$, and refer to sequences $(a_1,a_2,\dots,a_s)$, $(b_1,b_2,\dots,b_t)$ as the \textbf{row indicator} and the \textbf{column indicator} of the specified Cauchy matrix, respectively.
\end{defi}

Cauchy matrices \textcolor{lara}{possess desirable properties} that make them \textcolor{lara}{an ideal} alternative to Vandermonde matrices, the major components of the parity-check matrices of Reed-Solomon (RS) codes, as the parity-computing (non-systematic) components in systematic generator matrices of some maximum distance separable (MDS) codes with low encoding and decoding complexities \cite{plank2006optimizing}. 
Cauchy matrices are \textbf{totally invertible}, i.e., every square sub-matrix of a Cauchy matrix is invertible. Therefore, horizontally concatenating a Cauchy matrix with another Cauchy matrix having an identical row indicator but a non-overlapping column indicator results in a third Cauchy matrix. Similarly, vertically concatenating a Cauchy matrix with another Cauchy matrix having an identical column indicator but a non-overlapping row indicator also results in a third Cauchy matrix. This property, referred to as the scaling property previously, \textcolor{lara}{is} desirable for hierarchical access in topology-adaptive DSNs. Moreover, \Cref{lemma: Good matrix} presents another useful property about Cauchy matrices, which will be used repeatedly in this paper.

\begin{lemma} \label{lemma: Good matrix} Let $s,t,r\in\Db{N}$ such that $t-s<r\leq t$, and $\bold{A}\in \textup{GF}(q)^{s\times t}$. If $\bold{A}$ is a Cauchy matrix, then the following matrix $\bold{M}$ is a parity-check matrix of an $(s+r,s+r-t,t+1)_q$-code\footnote{Note that when $q$ is a power of $2$, the minus operand can be removed, as shown in \Cref{exam: CodeDL}, since subtraction and addition are equivalent on the Galois field $\textup{GF}(q)$ in this case.},
\begin{equation*}
\bold{M}=\left[
\begin{array}{c}
\bold{A}\\
-\bold{I}_r\ \bold{0}_{r\times(t-r)}\\
\end{array}
\right]^{\Tx{T}}.
\end{equation*}
\end{lemma}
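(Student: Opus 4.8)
The plan is to read off the explicit combinatorial structure of $\mathbf{M}$, verify it has full row rank $t$, show that every $t$ of its columns are linearly independent, and then read off the code parameters through the standard parity-check/Singleton dictionary. First I would unpack $\mathbf{M}\in\textup{GF}(q)^{t\times(s+r)}$. Since $r\le t$, the block $[-\mathbf{I}_r\ \mathbf{0}_{r\times(t-r)}]$ is well defined, and transposing the stacked matrix shows that the first $s$ columns of $\mathbf{M}$ are $\mathbf{a}_1^{\mathrm{T}},\dots,\mathbf{a}_s^{\mathrm{T}}$, where $\mathbf{a}_i$ is the $i$-th row of $\mathbf{A}$, while the last $r$ columns are $-\mathbf{e}_1,\dots,-\mathbf{e}_r$, the negatives of the first $r$ standard basis vectors of $\textup{GF}(q)^t$. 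The hypothesis $t-s<r$ gives $s+r>t$, so $\mathbf{M}$ has at least $t+1$ columns; this will make the code length $n=s+r$ and dimension $n-t=s+r-t$ consistent once $\mathrm{rank}(\mathbf{M})=t$ is established.

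The key step is to prove that for every index set $S\subseteq[s+r]$ with $|S|=t$, the $t\times t$ submatrix $\mathbf{B}$ of $\mathbf{M}$ on the columns indexed by $S$ is nonsingular. I would split $S$ into $u$ indices landing in the Cauchy block and $v=t-u$ indices landing in the identity block; automatically $u\le s$ and $v\le r$. Laplace-expanding $\det\mathbf{B}$ along the $v$ identity-block columns — each of the form $-\mathbf{e}_j$ with a single nonzero entry — reduces it, up to sign, to the determinant of the matrix obtained from $\mathbf{A}$ by keeping the $u$ rows named in $S$ and deleting the $v$ columns named in $S$, which leaves a $u\times u$ square submatrix of the Cauchy matrix $\mathbf{A}$. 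By total invertibility of Cauchy matrices (as recalled just before the lemma), this submatrix is invertible, so $\det\mathbf{B}\neq 0$. The degenerate cases $v=0$ (then $\mathbf{B}$ is itself a $t\times t$ submatrix of $\mathbf{A}$, which forces $s\ge t$) and $u=0$ (then $\mathbf{B}$ is $\pm$ a column-permuted identity) are covered by the same computation.

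Finally I would assemble the conclusion. Taking any $S$ with $|S|=t$ already shows $\mathbf{M}$ has $t$ linearly independent columns, so $\mathrm{rank}(\mathbf{M})=t$ and $\mathbf{M}$ is a parity-check matrix of a code $\mathcal{C}\subseteq\textup{GF}(q)^{s+r}$ of dimension $s+r-t$. Since every $t$ columns of $\mathbf{M}$ are independent, no nonzero codeword has Hamming weight at most $t$, whence the minimum distance of $\mathcal{C}$ is at least $t+1$; the Singleton bound gives at most $(s+r)-(s+r-t)+1=t+1$, so equality holds and $\mathcal{C}$ is the claimed $(s+r,\,s+r-t,\,t+1)_q$ (MDS) code. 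The sign on $\mathbf{I}_r$ plays no role anywhere, and over $\textup{GF}(q)$ with $q$ even it may be dropped, matching the footnote. I expect the only real obstacle to be the bookkeeping in the Laplace expansion — correctly matching the rows of $\mathbf{B}$ killed by the identity columns with the columns of $\mathbf{A}$ that are deleted — together with checking that the range conditions $t-s<r\le t$ keep every submatrix in sight square and well defined; there is no analytic difficulty once the structure of $\mathbf{M}$ is written down.
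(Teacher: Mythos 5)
Your proof is correct and takes essentially the same route as the paper: reduce ``every $t$ columns of $\mathbf{M}$ are linearly independent'' to the total invertibility of Cauchy matrices by peeling off the identity-block columns to isolate a square submatrix of $\mathbf{A}$. The paper packages the peeling step as a proof by contradiction (a putative dependence among the $t$ chosen columns restricts, on the coordinates not hit by the $-1$'s, to a singular square submatrix of $\mathbf{A}$), while you package it as a Laplace expansion of the $t\times t$ minor along the identity columns; these are the same computation in linear-dependence versus determinant language, and you also spell out the Singleton step that the paper leaves implicit.
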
 

\begin{proof} The parity-check matrix of an $(s+r,s+r-t,t+1)_q$-code satisfies the property that every $t$ columns of this matrix are linearly independent. Therefore, we only need to prove that every $t$ rows of $\bold{M}^{\Tx{T}}$ are linearly independent. We prove Lemma~\ref{lemma: Good matrix} by contradiction. Suppose there exist $t$ rows from $\bold{M}^{\Tx{T}}$ that are linearly dependent. Suppose $a$ of these linearly dependent rows $\bold{r}_1,\bold{r}_2,\dots,\bold{r}_a$ are from $\bold{A}$, and the other $t-a$ rows $\bold{r}_{a+1},\bold{r}_{a+2},\dots,\bold{r}_{t}$ are from $\MB{-\bold{I}_r\ \bold{0}_{r\times(t-r)}}$, where $0\leq t-a\leq r$. Suppose the entries with $-1$ in $\bold{r}_{a+1},\bold{r}_{a+2},\dots,\bold{r}_{t}$ are located in the $i_1,i_2\dots,i_{t-a}$-th columns of $\bold{M}^{\textup{T}}$, then $i_p\leq r$ for all $1\leq p\leq t-a$. Observe that $\MB{t}$ is the set of indices of all columns in $\bold{M}^{\Tx{T}}$. Suppose $\MB{t}\setminus \LB{i_1,i_2,\dots,i_{t-a}}=\LB{j_1,j_2,\dots,j_a}$. \textcolor{lara}{Then,} the $a\times a$ sub-matrix of the intersection of the rows $\bold{r}_1,\bold{r}_{2},\dots,\bold{r}_a$ and the $j_1,j_2,\dots,j_a$-th columns of $\bold{A}$ is singular. A contradiction.
\end{proof}

\subsection{Locality of Interleaved Cauchy Reed Solomon Codes}
\label{subsection CRS codes}
A code is \textbf{systematic} if the codeword contains a segment that is identical to the message being encoded. For a linear block code, systematic encoding of messages with length $k$ is performed via a generator matrix containing a $k\times k$ submatrix being the identity matrix $\bold{I}_k$. Systematic codes are of interest because of their low complexity mapping from any valid codeword to the message it represents, as well as their low encoding complexity due to the fact that only parities need extra calculation steps. Based on the aforementioned notation, a systematic generator matrix of a code on $G(V,E)$ has the following structure:
\begin{equation}\label{eqn: GenMatDL}\bold{G}=\left[
\begin{array}{c|c|c|c|c|c|c}
\bold{I}_{k_1} & \bold{A}_{1,1} & \bold{0} & \bold{A}_{1,2} & \dots & \bold{0} & \bold{A}_{1,p}\\
\hline
\bold{0} & \bold{A}_{2,1} & \bold{I}_{k_2} & \bold{A}_{2,2} & \dots & \bold{0}& \bold{A}_{2,p}\\
\hline
\vdots & \vdots & \vdots & \vdots & \ddots & \vdots & \vdots \\
\hline
\bold{0} & \bold{A}_{p,1} & \bold{0} & \bold{A}_{p,2} & \dots & \bold{I}_{k_p} & \bold{A}_{p,p}\\
\end{array}\right],
\end{equation}
where all elements are from a Galois field $\textup{GF}(q)$, $q=2^\theta$ and $\theta \geq 2$. 

Following the notation in the previous subsections, the codeword at node $v_i$ is $\bold{c}_i=(\bold{m}_i,\sum\nolimits_{j\in\MB{p}}\bold{m}_j\bold{A}_{j,i})$, and it has two parts. We call $\bold{m}_i$ the systematic part, and $\sum\nolimits_{j\in\MB{p}}\bold{m}_j\bold{A}_{j,i}$ the \textbf{local parities} of $\bold{c}_i$. More specifically, we refer to $\sum\nolimits_{j\in\MB{p},j\neq i}\bold{m}_j\bold{A}_{j,i}$, $\bold{m}_i\bold{A}_{i,i}$ as the \textbf{additional local parities} and the \textbf{original local parities} at $v_i$, respectively. For any $j\in\MB{p}$, $j\neq i$, symbols in $\bold{m}_i\bold{A}_{i,j}$ are referred to as the \textbf{cross parities} of $v_i$ from node $v_j$. Note that by saying ``parities'' we actually mean ``parity symbols''. We use these two terms interchangeably in the remaining text.

The submatrices $\{\bold{A}_{i,j}\}_{i,j\in\MB{p}}$ in our construction are either zero matrices, Cauchy matrices, or products of Cauchy matrices. For this reason, we call codes represented by a generator matrix in (\ref{eqn: GenMatDL}) as \textbf{interleaved Cauchy Reed Solomon (CRS) codes}. The primary property of interleaved CRS codes is that each local message $\bold{m}_i$ is to be obtained locally by only accessing the codeword $\bold{c}_i$ stored at $v_i$ if the number of erasures in $\bold{c}_i$ does not exceed an upper bound that is determined by some local parameters. We next provide an exemplary construction, \Cref{cons: CRScons}, to illustrate the locality of interleaved CRS codes. 

\begin{cons}\label{cons: CRScons} (Interleaved CRS codes) Let $p\in\Db{N}$, $k_1,k_2,\dots,k_p\in \Db{N}$, $n_1,n_2,\dots,n_p\in\Db{N}$, $\delta_1,\delta_2,\dots,\delta_p\in \Db{N}$, with $r_x=n_x-k_x>0$ for all $x\in\MB{p}$. Let $P=(\MB{p}\times\MB{p})\setminus\{(i,i)\}_{i\in\MB{p}}$, and $I\subseteq P$ is such that for all $(x,y) \in I$, $\bold{A}_{x,y}$ is non-zero. {Let $I_x=\{i: (x,i)\in I\}$, for each $x\in \MB{p}$, and suppose $I_x=\{i_1,i_2,\cdots,i_{|I_x|}\}$.} Let $\delta'_x=\sum\nolimits_{y\in I_x}\delta_y$, for all $x\in\MB{p}$. Let $\textup{GF}(q)$ be a \textcolor{lara}{Galois field} such that $q\geq \max\nolimits_{x\in\MB{p}}\LB{n_x+\delta'_x}$. 

For each $x\in \MB{p}$, let $a_{x,i}$, $i\in\MB{k_x+\delta_x}$, and $b_{x,j}$, $j\in\MB{r_x-\delta_x+\delta'_x}$, be distinct elements of $\textup{GF}(q)$.
Consider the Cauchy matrix $\bold{T}_x\in \textup{GF}(q)^{(k_x+\delta_x)\times (r_x-\delta_x+\delta'_x)}$ such that $\bold{T}_x=\bold{Y}(a_{x,1}, \dots, a_{x,k_x+\delta_x};b_{x,1},\dots,b_{x,r_x-\delta_x+\delta'_x})$. For each $x\in\MB{p}$, we obtain $\{\bold{B}_{x,i}\}_{i\in{I_x}}$, $\bold{U}_x$, $\bold{A}_{x,x}$, according to the following partition of $\bold{T}_x$:

\begin{equation}\label{eqn: CRS}
\bold{T}_x=\left[
\begin{array}{c|c}
\bold{A}_{x,x} & \begin{array}{c|c|c}
\bold{B}_{x,i_1} & \dots & \bold{B}_{x,i_{|I_x|}}
\end{array}
\\
\hline
\bold{U}_x & \bold{Z}_{x}
\end{array}\right],
\end{equation}
where $\bold{A}_{x,x}\in \textup{GF}(q)^{k_x\times r_x}$, $\bold{B}_{x,i}\in \textup{GF}(q)^{k_x\times \delta_i}$, $\bold{U}_x\in \textup{GF}(q)^{\delta_x\times r_x}$. Moreover, let $\bold{A}_{x,y}=\bold{B}_{x,y}\bold{U}_y$, for $(x,y)\in I$; let $\bold{A}_{x,y}=\bold{0}_{k_x \times r_y}$, for $(x,y)\in P\setminus I$.

Matrices $\bold{A}_{x,x}$ and $\bold{A}_{x,y}$ are substituted in $\bold{G}$ specified in (\ref{eqn: GenMatDL}), for all $x,y\in\MB{p}$. Let $\Se{C}_1$ represent the code with \textcolor{lara}{the} generator matrix $\bold{G}$.
\end{cons}

Following the notation in Subsection~\ref{subsec: DSN model}, \textcolor{lara}{suppose in a DSN that is implemented with a code specified in \Cref{cons: CRScons}, all nodes} are able to communicate with each other. For all $x\in\MB{p}$, let $d_{x,1}$, $d_{x,2}$ represent the maximum number of erasures that node $v_x$ can tolerate with local access to the codeword $\bold{c}_x$, and global access to all the codewords $\{\bold{c}_x\}_{x\in\MB{p}}$, respectively. \Cref{lemma: DLcodedis} presents the value of the local and the global correction capabilities of codes proposed in \Cref{cons: CRScons}. Note that even though $\bold{m}_j\bold{A}_{j,i}=\bold{m}_j\bold{B}_{j,i}\bold{U}_i$ gives the explicit cross parities, symbols resulting from $\bold{m}_j\bold{B}_{j,i}$ can accurately be seen as the cross parities too since they constitute a set of independent linear combinations of message symbols, and they contain all the information node $v_j$ provides to node $v_i$, for all $v_i,v_j\in V, i \neq j$. Therefore, in the remainder of this paper, we also refer to $\bold{m}_j\bold{B}_{j,i}$ as the cross parities or the cross parity symbols for simplicity.

\begin{lemma}\label{lemma: DLcodedis} In code $\Se{C}_1$ specified in \Cref{cons: CRScons}, $d_{x,1}=r_x-\delta_x$, $d_{x,2}=r_x+\delta'_x$, for $x\in\MB{p}$.
\end{lemma}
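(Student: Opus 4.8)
The plan is to reduce both equalities to the single fact that a systematic code $[\bold{I}\mid\bold{N}]$ whose parity block $\bold{N}$ is a submatrix of a Cauchy matrix is MDS (by total invertibility of Cauchy matrices; cf.\ \Cref{lemma: Good matrix}), and then to read off each correction radius as (length)~$-$~(dimension) of the appropriate ``effective'' code. Write $\bold{c}_x=(\bold{m}_x,\bold{p}_x)$ with $\bold{p}_x=\bold{m}_x\bold{A}_{x,x}+\bold{w}_x\bold{U}_x$, where $\bold{w}_x:=\sum_{j:(j,x)\in I}\bold{m}_j\bold{B}_{j,x}\in\textup{GF}(q)^{\delta_x}$ collects the cross parities received by $v_x$. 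Two submatrices of the single Cauchy matrix $\bold{T}_x$ of \Cref{cons: CRScons} do all the work: the vertical concatenation of $\bold{A}_{x,x}$ and $\bold{U}_x$ (its leftmost block-column), and the horizontal concatenation $[\bold{A}_{x,x}\mid\bold{B}_{x,i_1}\mid\dots\mid\bold{B}_{x,i_{\Nm{I_x}}}]$ (its top block-rows); every square submatrix of either is nonsingular.

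For $d_{x,1}$: locally only $\bold{c}_x$ is seen, so $\bold{w}_x$ is an unknown $\delta_x$-dimensional nuisance and the effective code is the auxiliary systematic code $\Se{D}_x$ with message $(\bold{m}_x,\bold{w}_x)$ and codeword $(\bold{m}_x,\bold{w}_x,\bold{p}_x)$, which by the first paragraph is MDS of length $k_x+\delta_x+r_x$ and dimension $k_x+\delta_x$, then punctured at the $\delta_x$ coordinates carrying $\bold{w}_x$. Since $\delta_x\le r_x$ (inherent in the construction, as $d_{x,1}\ge 0$), this puncturing preserves the dimension, so the punctured code is MDS of length $k_x+r_x$ and dimension $k_x+\delta_x$, correcting $r_x-\delta_x$ erasures. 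As the $\delta_x$ punctured coordinates are permanently ``erased'' from $v_x$'s viewpoint, $v_x$ recovers $\bold{m}_x$ exactly when $\bold{c}_x$ has at most $r_x-\delta_x$ erasures; the converse uses the fact that an MDS code has a codeword supported on any prescribed coordinate set of size equal to its minimum distance, chosen here to meet the first $k_x$ coordinates. Hence $d_{x,1}=r_x-\delta_x$.

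For $d_{x,2}$: with global access and all erasures confined to $\bold{c}_x$, every $\bold{m}_y$ with $y\neq x$ is known, which has two effects. First, $\bold{w}_x$ is computed, removing the contamination so that $\bold{m}_x\bold{A}_{x,x}$ is known up to the erasures in $\bold{c}_x$. Second --- the crux --- for each $i\in I_x$ the parity block of the intact $\bold{c}_i$ equals $\bold{m}_i\bold{A}_{i,i}+\sum_{j:(j,i)\in I}\bold{m}_j\bold{B}_{j,i}\bold{U}_i$, and subtracting the now-known contributions of all messages other than $\bold{m}_x$ leaves $\bold{m}_x\bold{B}_{x,i}\bold{U}_i$; since $\bold{U}_i$ has full row rank $\delta_i$, this recovers the $\delta_i$ fresh, error-free parity symbols $\bold{m}_x\bold{B}_{x,i}$. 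Collecting these over $i\in I_x$ yields $\delta'_x$ extra parities, and because $\bold{m}_x\bold{A}_{x,x}$ together with the $\bold{m}_x\bold{B}_{x,i}$ is $\bold{m}_x$ times the top block-rows of one Cauchy matrix, the code $[\bold{I}_{k_x}\mid\bold{A}_{x,x}\mid\bold{B}_{x,i_1}\mid\dots\mid\bold{B}_{x,i_{\Nm{I_x}}}]$ is MDS of length $k_x+r_x+\delta'_x$ and dimension $k_x$, with all erasures confined to its first $k_x+r_x$ coordinates; hence any $r_x+\delta'_x$ erasures in $\bold{c}_x$ are correctable, so $d_{x,2}\ge r_x+\delta'_x$. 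The converse is a dimension count: $r_x+\delta'_x+1$ erasures in $\bold{c}_x$ leave only $k_x-\delta'_x-1$ independent linear functionals of $\bold{m}_x$ from $\bold{c}_x$ and at most $\delta'_x$ more from the $\bold{m}_x\bold{B}_{x,i}$ (nodes outside $I_x\cup\{x\}$ carry none of $\bold{m}_x$), i.e.\ at most $k_x-1<k_x$ in total, so $\bold{m}_x$ is undetermined and $d_{x,2}=r_x+\delta'_x$.

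The step I expect to be the real obstacle is the converse for $d_{x,1}$: to certify that $r_x-\delta_x+1$ erasures are genuinely uncorrectable, the punctured code above must coincide with the set of $\bold{c}_x$ actually produced by \Cref{cons: CRScons}, i.e.\ $\bold{w}_x$ must range over all of $\textup{GF}(q)^{\delta_x}$ as the messages vary; I would derive this from the incoming cross-parity generators $\{\bold{B}_{j,x}\}_{j:(j,x)\in I}$ jointly having column rank $\delta_x$, a consequence of the totally invertible structure of the $\bold{T}_j$'s together with the standing assumptions on the cooperation pattern (a node is assigned $\delta_x>0$ only when it actually receives that many independent cross parities). Everything else is routine once the relevant submatrices of $\bold{T}_x$ are identified, modulo care with the column bookkeeping of the partition~(\ref{eqn: CRS}) and with the boundary case $r_x-\delta_x+1>k_x$, handled by erasing parity symbols of $\bold{c}_x$ as well.
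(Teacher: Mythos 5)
Your argument is essentially the paper's own: both rely on the totally-invertible (MDS) structure of the single Cauchy matrix $\bold{T}_x$, treating the cross-parity aggregate $\bold{w}_x$ (the paper's $\bold{y}_x$) as $\delta_x$ extra erasures of the augmented MDS code for the local bound, and deriving the $\delta'_x$ fresh parity symbols $\bold{m}_x\bold{B}_{x,i}$ from decoded neighbors via the full row rank of $\bold{U}_i$ for the global bound, with your generator-matrix/puncturing framing being the dual of the paper's parity-check framing via $\bold{H}_x^{\Tx{L}}$, $\bold{H}_x^{\Tx{G}}$ and \Cref{lemma: Good matrix}. The converses you attempt (and the freeness of $\bold{w}_x$ you flag as the sticking point in the local case) actually go beyond the paper, which establishes only the lower bounds $d_{x,1}\ge r_x-\delta_x$ and $d_{x,2}\ge r_x+\delta'_x$ and states the equalities without a matching upper-bound argument.
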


\begin{proof} For each $x\in \MB{p}$, define $\bold{y}_x=\sum\nolimits_{y\in I_x} \bold{m}_y\bold{B}_{y,x}$. It follows from $\bold{m}\bold{G}=\bold{c}$ and (\ref{eqn: GenMatDL}) that for $x\in\MB{p}$, $\bold{c}_x=\MB{\bold{m}_x,\bold{m}_x\bold{A}_{x,x}+\bold{y}_x\bold{U}_x}$. Define the local parity-check matrix $\bold{H}^{\Tx{L}}_x$ and the global parity-check matrix $\bold{H}^{\Tx{G}}_x$, for each $x\in\MB{p}$, as follows:
\begin{equation*}
\bold{H}_x^{\Tx{G}}=\left[
\begin{array}{c|c}
\bold{A}_{x,x} & \begin{array}{c|c|c}
\bold{B}_{x,i_1} & \dots & \bold{B}_{x,i_{|I_x|}}
\end{array}
\\
\hline
-\bold{I}_{r_x} & \bold{0}_{r_x \times \delta'_x}
\end{array}\right]^{\Tx{T}}, \textup{ } \bold{H}^{\Tx{L}}_x=\left[\begin{array}{ccc}
\bold{A}_{x,x}\\
-\bold{I}_{r_x}\\
\bold{U}_x\\
\end{array}\right]^{\Tx{T}}.
\end{equation*}
We next prove the equations of the local correction capability $d_{x,1}=r_x-\delta_x$ and the global correction capability $d_{x,2}=r_x+\delta'_x$ using $\bold{H}^{\Tx{L}}_x$ and $\bold{H}^{\Tx{G}}_x$, $x\in\MB{p}$.

To prove the equation of the local correction capability, let $\tilde{\bold{c}}_x=\MB{\bold{c}_x,\bold{y}_x}$. Then, one can show that $\tilde{\bold{c}}_x$ belongs to a code $\Se{C}_x^{\Tx{L}}$ with the local parity-check matrix $\bold{H}^{\Tx{L}}_x$. From \Cref{lemma: Good matrix}, $\Se{C}_x^{\Tx{L}}$ is an $(n_x+\delta_x,k_x,r_x+1)_q$-code. Therefore, any $r_x$ erasures in $\tilde{\bold{c}}_x$ are correctable. Provided that $\bold{y}_x$ has length $\delta_x$, we can consider the entries of $\bold{y}_x$ as erasures, and thus any $(r_x-\delta_x)$ erasures in the remaining part of $\tilde{\bold{c}}_x$, i.e., $\bold{c}_x$, can be corrected. Therefore, $d_{x,1}=r_x-\delta_x$.

To prove the equation of the global correction capability, assume all the local codewords except for $\bold{c}_x$ are successfully decodable locally. For each $x\in\MB{p}$, let $\bold{s}_{x}=\MB{\bold{m}_x\bold{B}_{x,i_1},\dots,\bold{m}_x\bold{B}_{x,i_{|I_x|}}}$ and $\bar{\bold{c}}_x=\bold{c}_x-\MB{\bold{0}_{k_x},\bold{y}_x\bold{U}_x}$. Then, one can show that $\bold{H}^{\Tx{G}}_x\bar{\bold{c}}_x^{\Tx{T}}=\MB{\bold{0}_{r_x}, \bold{s}_{x}}^{\Tx{T}}$. From \Cref{lemma: Good matrix} and from the construction of $\bold{H}^{\Tx{G}}_x$, any $(r_x+\delta'_x)$ erasures in $\bar{\bold{c}}_x$ are correctable, and thus $(r_x+\delta'_x)$ erasures in $\bold{c}_x$ are also correctable. Therefore, $d_{x,2}=r_x+\delta'_x$.
\end{proof}

Next, we give an example of \Cref{cons: CRScons} where $I=P$. Namely, all the $\bold{A}_{x,y}$, $(x,y)\in P$, are non-zero matrices. \Cref{exam: CodeDL} also demonstrates how erasures are corrected.

\begin{table}
\centering
\caption{Polynomial and binary representation of $\textup{GF}(2^4)$}
\begin{tabular}{|c|c||c|c||c|c||c|c|}
\hline
$0$ & $0000$ & $\beta^4$ & $1100$ & $\beta^{8}$ & $1010$ & $\beta^{12}$ & $1111$\\
\hline
$\beta$ & $0100$ & $\beta^{5}$ & $0110$ & $\beta^{9}$ & $0101$ & $\beta^{13}$ & $1011$\\
\hline
$\beta^2$ & $0010$ & $\beta^{6}$ & $0011$ & $\beta^{10}$ & $1110$ & $\beta^{14}$ & $1001$\\
\hline
$\beta^3$ & $0001$ & $\beta^{7}$ & $1101$ & $\beta^{11}$ & $0111$ & $\beta^{15}=1$ & $1000$\\
\hline
\end{tabular}
\label{table: GF}
\end{table}

\begin{figure*}
\normalsize
\setcounter{equation}{2}
\begin{equation}\label{eqn: exam1}\small
\bold{T}_{1}=\bold{T}_{2}=\left[\begin{array}{c|c}
\bold{A}_{1,1} & \bold{B}_{1,2}\\
\hline
\bold{U}_1 & \bold{Z}_1
\end{array}\right]=\left[\begin{array}{c|c}
\bold{A}_{2,2} & \bold{B}_{2,1}\\
\hline
\bold{U}_2 & \bold{Z}_2
\end{array}\right]
=\left[\begin{array}{ccc|c}
\frac{1}{\beta^{}-\beta^{8}} & \frac{1}{\beta^{}-\beta^{9}} & \frac{1}{\beta^{}-\beta^{10}} & \frac{1}{\beta-\beta^{11}}\\
\frac{1}{\beta^{2}-\beta^{8}} & \frac{1}{\beta^{2}-\beta^{9}} & \frac{1}{\beta^{2}-\beta^{10}} &\frac{1}{\beta^{2}-\beta^{11}}\\
\frac{1}{\beta^{3}-\beta^{8}} & \frac{1}{\beta^{3}-\beta^{9}} & \frac{1}{\beta^{3}-\beta^{10}} &\frac{1}{\beta^{3}-\beta^{11}} \\
\hline
\frac{1}{\beta^{7}-\beta^{8}} & \frac{1}{\beta^{7}-\beta^{9}} & \frac{1}{\beta^{7}-\beta^{10}} & \frac{1}{\beta^{7}-\beta^{11}}
\end{array}\right]=\left[\begin{array}{ccc|c}
\beta^{5} &\beta^{12} & \beta^{7} & \beta^{9}\\
1 &\beta^{4} & \beta^{11} & \beta^{6}\\
\beta^{2} &\beta^{14} & \beta^{3} & \beta^{10}\\
\hline
\beta^{4} & 1 & \beta^{9} & \beta^{7} 
\end{array}
\right].
\end{equation}
\hrulefill
\setcounter{equation}{3}
\end{figure*}

\begin{exam} \label{exam: CodeDL} Let $q=2^4$, $p=2$, $r=r_1=r_2=3$, $\delta=\delta_1=\delta_2=1$, $k=k_1=k_2=3$, $n=n_1=n_2=k+r=6$, $\delta'=\delta'_1=\delta_2'=\delta_1+\delta_2-\delta=1$. Then, $d_1=r-\delta'=3-1=2$, $d_2=r+\delta'=3+1=4$. Choose a primitive polynomial of degree $4$ over $\textup{GF}(2)$: $g(X)=X^4+X+1$. Let $\beta$ be a root of $g(X)$. Then, $\beta$ is a primitive element of $\textup{GF}(2^4)$. The binary representation of all the symbols in $\textup{GF}(2^4)$ is provided in \Cref{table: GF}. 

Let $\bold{A}_{1,1}=\bold{A}_{2,2}$, $\bold{B}_{1,2}=\bold{B}_{2,1}$, $\bold{U}_1=\bold{U}_2$, and $\bold{T}_1=\bold{T}_2$ as specified in (\ref{eqn: exam1}).
Therefore, 
\begin{equation*}\bold{A}_{1,2}=\bold{A}_{2,1}=\bold{B}_{2,1}\bold{U}_1=\left[\begin{array}{ccc}
\beta^{13} &\beta^{9} & \beta^{3}\\
\beta^{10} &\beta^{6} & 1\\
\beta^{14} &\beta^{10} & \beta^{4}
\end{array}
\right].
\end{equation*}
Then, the generator matrix $\bold{G}$ is specified as follows,
\begin{equation*}\small
\left[
\begin{array}{ccc|ccc|ccc|ccc}
1 & 0 & 0 & \beta^{5} &\beta^{12} & \beta^{7} & 0 & 0 & 0 & \beta^{13} &\beta^{9} & \beta^{3}\\
0 & 1 & 0 & 1 &\beta^{4} & \beta^{11} & 0 & 0 & 0 & \beta^{10} &\beta^{6} & 1 \\
0 & 0 & 1 & \beta^{2} &\beta^{14} & \beta^{3} & 0 & 0 & 0 & \beta^{14} &\beta^{10} & \beta^{4} \\
\hline
0 & 0 & 0 & \beta^{13} &\beta^{9} & \beta^{3} & 1 & 0 & 0 & \beta^{5} &\beta^{12} & \beta^{7} \\
0 & 0 & 0 &\beta^{10} &\beta^{6} & 1 & 0 & 1 & 0 & 1 &\beta^{4} & \beta^{11} \\
0 & 0 & 0 & \beta^{14} &\beta^{10} & \beta^{4} & 0 & 0 & 1 & \beta^{2} &\beta^{14} & \beta^{3} \\
\end{array}\right]=\left[\begin{array}{cccc}
\bold{I}_{k_1}&\bold{A}_{1,1}&\bold{0}_{k_1\times k_2} &\bold{A}_{1,2}\\
\bold{0}_{k_2\times k_1}&\bold{A}_{2,1}&\bold{I}_{k_2}&\bold{A}_{2,2}
\end{array}\right].
\end{equation*}

Suppose $\bold{m}_1=(1,\beta,\beta^2)$ and $\bold{m}_2=(\beta,1,0)$. Thus, $\bold{c}_1=(1,\beta,\beta^2,\beta^{14},0,0)$ and $\bold{c}_2=(\beta,1,0,\beta^{6},0,\beta^{13})$. Moreover, $\bold{H}_1^{\Tx{L}}$ and $\bold{H}_1^{\Tx{G}}$ are specified as follows,
\begin{equation*}\small
\bold{H}_1^{\Tx{G}}=\left[\hspace{-0.1cm}\begin{array}{cccc}
\beta^5 & \beta^{12} & \beta^7 & \beta^9\\
1 & \beta^{4} & \beta^{11} & \beta^6 \\
\beta^2 & \beta^{14} & \beta^3 &\beta^{10}\\
1 & 0 & 0 & 0 \\
0 & 1 & 0 & 0 \\
0 & 0 & 1 & 0
\end{array}\hspace{-0.1cm}\right]^{\Tx{T}}, \textup{ } \bold{H}_1^{\Tx{L}}=\left[\hspace{-0.1cm}\begin{array}{cccc}
\beta^5 & \beta^{12} & \beta^7 \\
1 & \beta^{4} & \beta^{11}\\
\beta^2 & \beta^{14} & \beta^3\\
1 & 0 & 0 \\
0 & 1 & 0 \\
0 & 0 & 1 \\
\beta^4 & 1 & \beta^9
\end{array}\hspace{-0.1cm}\right]^{\Tx{T}}.
\end{equation*}
According to \Cref{cons: CRScons}, $\bold{G}$ is the generator matrix of a double-level accessible code that corrects $2$ local erasures by local access, and corrects $2$ extra erasures within a single cloud by global access. In the following, we denote the version of $\bold{c}_1$ having erasures by $\bold{c}'_1$, and erased symbols by $e_i$, $i\in\Db{N}$.

As an example of decoding by local access, suppose $\bold{c}'_1=(1,e_1,\beta^2,e_2,0,0)$. Then, the erased elements of $\tilde{\bold{c}}_1=(1,e_1,\beta^2,e_2,0,0,e_3)$ can be retrieved using $\bold{H}_1^{\Tx{L}}$ as its parity-check matrix. In particular, we solve $\bold{H}^{\Tx{L}}_1\tilde{\bold{c}}_1^{\Tx{T}}=(0,0,0)^{\Tx{T}}$ for $e_1,e_2,e_3$ and obtain $(e_1,e_2,e_3)=(\beta,\beta^{14},\beta^7)$. Therefore, we have decoded $\bold{c}_1$ successfully.

As an example of decoding by global access, suppose $\bold{c}'_1=(e_1,e_2,\beta^2,e_3,e_4,0)$, and suppose $\bold{c}_2$ has been decoded successfully as $\bold{c}_2=(\beta,1,0,\beta^{6},0,\beta^{13})$, which implies that $\bold{m}_1\bold{B}_{1,2}\bold{U}_2=(\beta^{6},0,\beta^{13})-\beta\cdot(\beta^5,\beta^{12},\beta^7)-1\cdot(1,\beta^4,\beta^{11})=(1,\beta^{11},\beta^{5})$. Since $\bold{U}_2=(\beta^{4},1,\beta^9)$, we obtain $\bold{m}_1\bold{B}_{1,2}=\beta^{11}$. Moreover, we compute $\bold{m}_2\bold{B}_{2,1}\bold{U}_1=(\beta^{11},\beta^{7},\beta)$. Let $\bar{\bold{c}}_1=\bold{c}'_1-(0,0,0,\beta^{11},\beta^{7},\beta)=(e'_1,e'_2,\beta^2,e'_3,e'_4,\beta)$. Then, we solve $\bold{H}^{\Tx{G}}_1\bar{\bold{c}}_1^{\Tx{T}}=(0,0,0,\beta^{11})^{\Tx{T}}$ and obtain $(e'_1,e'_2,e'_3,e'_4)=(1,\beta,\beta^{10},\beta^7)$. Therefore, $e_1=e'_1=1$, $e_2=e'_2=\beta$, $e_3=e'_3+\beta^{11}=\beta^{14}$, $e_4=e'_4+\beta^7=0$, and we have decoded $\bold{c}_1$ successfully.
\end{exam}

Note that \Cref{cons: CRScons} is proposed based on the assumption that any node is able to communicate with all the nodes in the network; namely, the underlying DSN has a specific topology embodied in a complete graph. However, as discussed in \Cref{sectoin: introduction} and shown in Fig.~\ref{fig: figmodel}, practical DSNs are not necessarily constrained into any specific structures. A major reason is that nodes are typically scattered in geographically separated locations and communicate with only a few nodes nearby. Even if connections of nodes are not determined by physical locations, their logical connections can still be of any topology tailored for particular requirements from users of the services those nodes provide. Therefore, it is important to generalize our previous construction into one that fits into arbitrary topologies. In the next section, we take network topology into account and focus on constructions that are topology-adaptive. 

\section{Cooperative Data Protection}
\label{section coop}
In this section, we first mathematically describe the EC hierarchy and its depth associated with the given DSN. EC hierarchy specifies the EC capabilities of nodes while cooperating with different sets of other nodes. We then propose a cooperation scheme where each node only cooperates with its single-hop neighbors.

\begin{figure}
\centering
\includegraphics[width=0.5\textwidth]{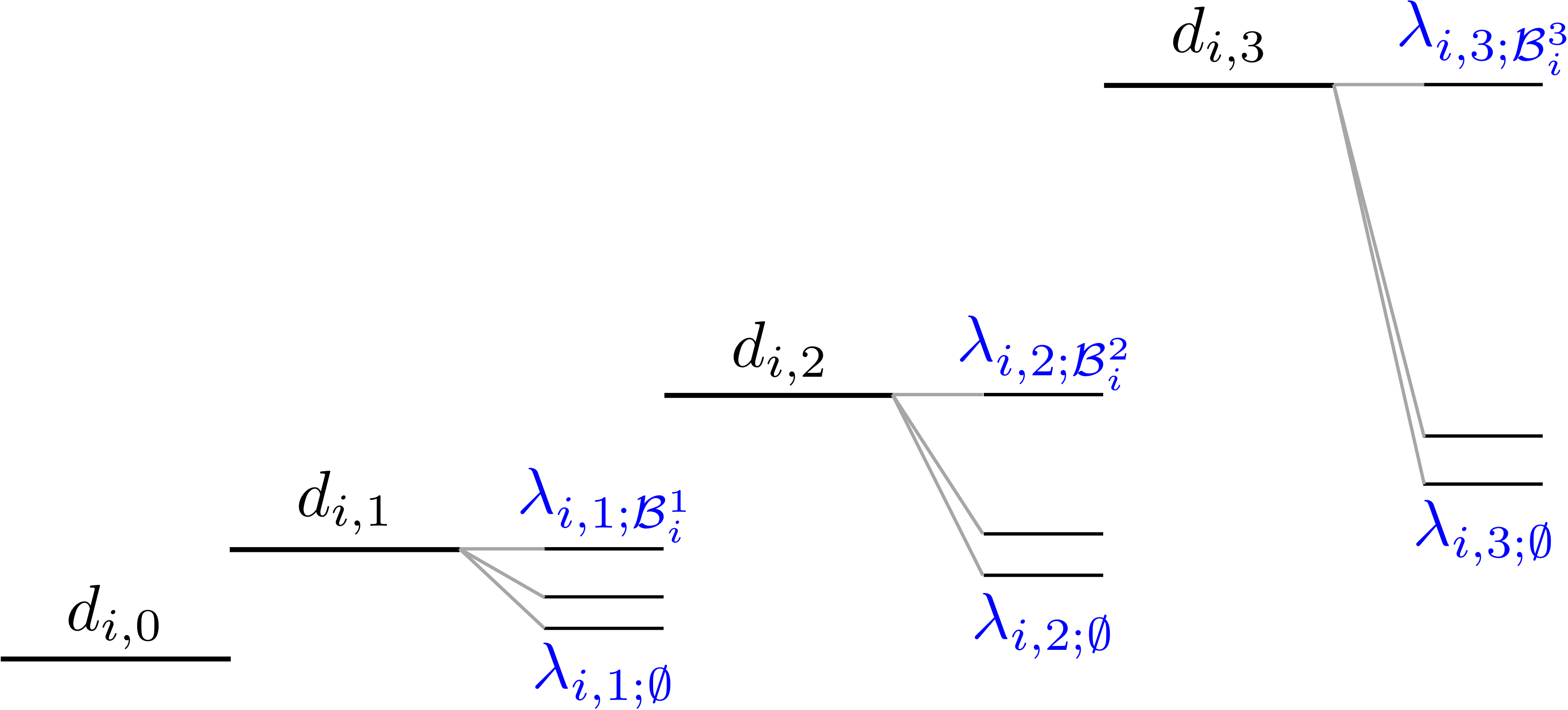}
\caption{EC hierarchy of node $v_i\in V$. The values $d_{i,0}$ and $d_{i,\ell}$, $1\leq \ell\leq 3$, represent the maximum number of erasures $v_i$ can tolerate in the local decoding, and decoding with the assistance of the $1$-st to the $\ell$-th level cooperations of the codeword $\bold{c}_i$, i.e., cooperating with nodes in $\Se{A}_i^{\ell}$, respectively. However, even for a fixed $\Se{A}_i^{\ell}$, different sets $\Se{W}$ of nodes that are recovered in $\Se{B}_i^{\ell}$ may also result in different EC capabilities; we refer to them as $(\lambda_{i,\ell;\Se{W}})$, $\varnothing\subseteq\Se{W}\subseteq\Se{B}_i^{\ell}$.}
\label{fig: EC Hierarchy}
\end{figure}

\begin{figure*}[!t]
\normalsize
\centering

\setcounter{equation}{3}
\setcounter{MaxMatrixCols}{12}
\begin{equation}\small
\scalebox{.7}{$
\begin{matrix}
\textcolor{robert}{1}&\textcolor{robert}{2}&\textcolor{robert}{3}&\textcolor{robert}{4}&\textcolor{robert}{5}&\textcolor{robert}{6}&\textcolor{robert}{7}&\textcolor{robert}{8}&\textcolor{robert}{9}&\textcolor{robert}{10}&\textcolor{robert}{11}&\textcolor{robert}{12}\\
\hline
\hline
\bold{A}_{1,1}&\bold{B}_{1,2}\bold{U}_2&\bold{0}&\bold{0}&\bold{0}&\bold{0}&\bold{0}&\bold{0}&\bold{0}&\bold{0}&\bold{0}&\bold{0}\\
\hline
\bold{B}_{2,1}\bold{U}_1&\bold{A}_{2,2}&\bold{B}_{2,3}\bold{U}_3&\bold{0}&\bold{B}_{2,5}\bold{U}_5&\bold{0}&\bold{0}&\bold{0}&\bold{0}&\bold{0}&\bold{0}&\bold{0}\\
\hline
\bold{0}&\bold{B}_{3,2}\bold{U}_2&\bold{A}_{3,3}&\bold{B}_{3,4}\bold{U}_4&\bold{0}&\bold{0}&\bold{0}&\bold{0}&\bold{0}&\bold{0}&\bold{0}&\bold{0}\\
\hline
\bold{0}&\bold{0}&\bold{B}_{4,3}\bold{U}_3&\bold{A}_{4,4}&\bold{B}_{4,5}\bold{U}_5&\bold{B}_{4,6}\bold{U}_6&\bold{0}&\bold{0}&\bold{0}&\bold{0}&\bold{0}&\bold{0}\\
\hline
\bold{0}&\bold{B}_{5,2}\bold{U}_2&\bold{0}&\bold{B}_{5,4}\bold{U}_4&\bold{A}_{5,5}&\bold{B}_{5,6}\bold{U}_6&\bold{0}&\bold{B}_{5,8}\bold{U}_8&\bold{0}&\bold{0}&\bold{0}&\bold{0}\\
\hline
\bold{0}&\bold{0}&\bold{0}&\bold{B}_{6,4}\bold{U}_4&\bold{B}_{6,5}\bold{U}_5&\bold{A}_{6,6}&\bold{B}_{6,7}\bold{U}_7&\bold{0}&\bold{0}&\bold{0}&\bold{0}&\bold{0}\\
\hline
\bold{0}&\bold{0}&\bold{0}&\bold{0}&\bold{0}&\bold{B}_{7,6}\bold{U}_6&\bold{A}_{7,7}&\bold{B}_{7,8}\bold{U}_8&\bold{B}_{7,9}\bold{U}_9&\bold{0}&\bold{B}_{7,11}\bold{U}_{11}&\bold{0}\\
\hline
\bold{0}&\bold{0}&\bold{0}&\bold{0}&\bold{B}_{8,5}\bold{U}_5&\bold{0}&\bold{B}_{8,7}\bold{U}_7&\bold{A}_{8,8}&\bold{B}_{8,9}\bold{U}_9&\bold{0}&\bold{0}&\bold{0}\\
\hline
\bold{0}&\bold{0}&\bold{0}&\bold{0}&\bold{0}&\bold{0}&\bold{B}_{9,7}\bold{U}_7&\bold{B}_{9,8}\bold{U}_8&\bold{A}_{9,9}&\bold{B}_{9,10}\bold{U}_{10}&\bold{0}&\bold{0}\\
\hline
\bold{0}&\bold{0}&\bold{0}&\bold{0}&\bold{0}&\bold{0}&\bold{0}&\bold{0}&\bold{B}_{10,9}\bold{U}_9&\bold{A}_{10,10}&\bold{B}_{10,11}\bold{U}_{11}&\bold{B}_{10,12}\bold{U}_{12}\\
\hline
\bold{0}&\bold{0}&\bold{0}&\bold{0}&\bold{0}&\bold{0}&\bold{B}_{11,7}\bold{U}_7&\bold{0}&\bold{0}&\bold{B}_{11,10}\bold{U}_{10}&\bold{A}_{11,11}&\bold{B}_{11,12}\bold{U}_{12}\\
\hline
\bold{0}&\bold{0}&\bold{0}&\bold{0}&\bold{0}&\bold{0}&\bold{0}&\bold{0}&\bold{0}&\bold{B}_{12,10}\bold{U}_{10}&\bold{B}_{12,11}\bold{U}_{11}&\bold{A}_{12,12}\\
\hline
\end{matrix}$}
\label{fig: genexample1}
\end{equation}
\hrulefill
\setcounter{equation}{4}
\end{figure*}

\subsection{EC Hierarchy}
\label{subsec EC hierarchy}

Denote the \textbf{EC hierarchy} of node $v_i\in V$ by a sequence $\bold{d}_i=(d_{i,0},d_{i,1},\dots,d_{i,L_i})$, where $L_i$ is called the \textbf{depth} of $\bold{d}_i$, and $d_{i,\ell}$ represents the maximum number of erased symbols $v_i$ can recover in its codeword $\bold{c}_i$ from the $\ell$-th level cooperation, for all ${\ell}\in \MB{L_i}$. The maximum number of erased symbols $v_i$ can recover in $\bold{c}_i$ locally, i.e., without communicating with neighboring nodes, is $d_{i,0}$.

For each $v_i\in V$ such that $L_i>0$, there exist two series of sets of nodes, denoted by $\varnothing\subset\Se{A}_i^1\subset\Se{A}_i^2\subset \dots \subset \Se{A}_i^{L_i}\subseteq V$ and $\{\Se{B}_i^{\ell}\}_{\ell=1}^{L_i}$, where $\Se{A}_i^{\ell}\cap \Se{B}_i^{\ell}=\varnothing$ for all ${\ell}\in \MB{L_i}$, and a series $\left(\lambda_{i,{\ell};\Se{W}}\right)_{\varnothing\subseteq\Se{W}\subseteq\Se{B}_i^{\ell}}$. In the ${\ell}$-th level cooperation, node $v_i\in V$ tolerates $\lambda_{i,{\ell};\Se{W}}$ ($\varnothing\subseteq\Se{W}\subseteq\Se{B}_i^{\ell}$) erasures if all nodes in $\Se{A}_i^{\ell}\cup\Se{W}$ are able to decode their own messages, where the maximum value is $\lambda_{i,{\ell};\Se{B}_i^{\ell}} = d_{i,{\ell}}$ and is reached when $\Se{W}=\Se{B}_i^{\ell}$; the minimum value is $\lambda_{i,{\ell};\varnothing}$ and is reached when $\Se{W}=\varnothing$. See Fig.~\ref{fig: EC Hierarchy} for illustration.

We first take a look at the cooperation schemes with the EC hierarchy of depth $1$. For the EC hierarchy of depth $1$, $\Se{A}_i^1$ is always a subset of the neighbors of $v_i$, while $\Se{B}_i^1$ is the set of all nodes in $\Se{A}_j^1$, for all $j$ such that $v_j$ is in $\Se{A}_i^1$, except the ones in $\{v_i\} \cup \Se{A}_i^1$.

\subsection{Single-Level Cooperation}
\label{subsec single-level cooperation}

We now discuss the case where each node only has cooperation of depth $1$. Consider a DSN represented by $G(V,E)$ that is associated with parameters $(\bold{n},\bold{k},\bold{r})$ and a class of sets $\{\Se{M}_i\}_{v_i\in V}$ such that $\varnothing\subset\Se{M}_i\subseteq \Se{N}_i$, for all $v_i\in V$. In \Cref{cons: 1}, we present a joint coding scheme where node $v_i$ only cooperates with nodes in $\Se{M}_i$, for all $v_i\in V$. Heterogeneity is obviously achieved since $n_i$, $k_i$, $r_i$, are not required to be identical for all $v_i\in V$. 

Our previous result in \cite{Yang2019HC} represents a special case of \Cref{cons: 1}, where the motivating application was in centralized cloud storage.
\Cref{cons: 1} extends that work to deal with arbitrary decentralized topologies, in contrast to the tree-like topology prevalent in. centralized networks. \Cref{exam: exam1} and \Cref{exam: speed} illustrate the efficacy of the proposed construction in decentralized storage.

\begin{cons} \label{cons: 1} Let $G(V,E)$ represent a DSN associated with parameters $(\bold{n},\bold{k},\bold{r})$ and a local EC parameter $\boldsymbol{\delta}$, where $\bold{r}\succ \boldsymbol{\delta}\succeq \bold{0}$. Let $p=|V|$ and $\textup{GF}(q)$ be a Galois field of size $q$, where $q>\max\limits_{v_i\in V} \left(n_i+\delta_i+\sum\nolimits_{v_j\in \Se{M}_i} \delta_j\right)$.   

For each $i\in \MB{p}$, let $a_{i,x}$, $x\in\MB{k_i+\delta_i}$, and $b_{i,y}$, $y\in\MB{r_i+\sum\nolimits_{v_j\in \Se{M}_i} \delta_j}$, be distinct elements of $\textup{GF}(q)$. Consider the Cauchy matrix $\bold{T}_i\in \textup{GF}(q)^{(k_i+\delta_i)\times (r_i+\sum\nolimits_{v_j\in \Se{M}_i} \delta_j)}$ such that $\bold{T}_i=\bold{Y}(a_{i,1}, \dots, a_{i,k_i+\delta_i};b_{i,1},\dots,b_{i,r_i+\sum\nolimits_{v_j\in \Se{M}_i} \delta_j})$. Matrix $\bold{G}$ in (\ref{eqn: GenMatDL}) is assembled as follows. For each $i\in\MB{p}$, we obtain $\{\bold{B}_{i,j}\}_{v_j\in \Se{M}_i}$, $\bold{U}_i$, $\bold{A}_{i,i}$, according to the following partition of $\bold{T}_i$:

\begin{equation}\label{eqn: CRS}
\bold{T}_i=\left[
\begin{array}{c|c}
\bold{A}_{i,i} & \begin{array}{c|c|c}
\bold{B}_{i,j_1} & \dots & \bold{B}_{i,j_{| \Se{M}_i|}}
\end{array}
\\
\hline
\bold{U}_i & \bold{Z}_{i}
\end{array}\right],
\end{equation}
where $\Se{M}_i=\{v_{j_1},v_{j_2},\dots,v_{j_{| \Se{M}_i|}}\}$, $\bold{A}_{i,i}\in \textup{GF}(q)^{k_i\times r_i}$,$\bold{U}_i\in \textup{GF}(q)^{\delta_i\times r_i}$, $\bold{B}_{i,j}\in \textup{GF}(q)^{k_i\times \delta_j}$, for $v_i\in V$ and $v_j\in \Se{M}_i$. Let $\bold{A}_{i,j}=\bold{B}_{i,j}\bold{U}_j$ if $v_j\in \Se{M}_i$, otherwise let it be a zero matrix.

Denote the code with generator matrix $\bold{G}$ by $\Se{C}_1$.
\end{cons}

\begin{theo} \label{theo: ECcons1} In a DSN with $\Se{C}_1$, $\bold{d}_i=(r_i-\delta_i,r_i+\sum\nolimits_{v_j\in \Se{M}_i}\delta_j)$, $\Se{A}_i^1=\Se{M}_i$, and $\Se{B}_i^1=\bigcup\nolimits_{v_j\in\Se{M}_i}\left(\Se{M}_j\setminus(\{v_i\}\cup \Se{M}_i)\right)$, for all $v_i\in V$. Furthermore, the EC hierarchy associated with $d_{i,1}$ is $(\lambda_{i,1;\Se{W}})_{\varnothing\subseteq\Se{W}\subseteq \Se{B}_i^1}$, where\\ $\lambda_{i,1;\Se{W}}= r_i+ \sum\nolimits_{j:v_j\in \Se{M}_i,(\Se{M}_j\setminus\{v_i\})\subseteq (\Se{M}_i\cup\Se{W})}\delta_j$.
\end{theo}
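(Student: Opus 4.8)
The plan is to fix a single node $v_i$, make its stored codeword fully explicit, and then read off every assertion of the theorem one at a time. From (\ref{eqn: GenMatDL}) and \Cref{cons: 1} the block $\bold{A}_{x,i}$ equals $\bold{B}_{x,i}\bold{U}_i$ exactly when $v_i\in\Se{M}_x$ and vanishes otherwise, so the nodes whose messages enter the parity part of $\bold{c}_i$ are precisely those of $\Se{M}_i$ (the cooperation sets being mutual, $v_i\in\Se{M}_x\Leftrightarrow v_x\in\Se{M}_i$); hence $\bold{c}_i=(\bold{m}_i,\ \bold{m}_i\bold{A}_{i,i}+\bold{y}_i\bold{U}_i)$ with $\bold{y}_i:=\sum_{v_x\in\Se{M}_i}\bold{m}_x\bold{B}_{x,i}$, a vector of length $\delta_i$. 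The equality $d_{i,0}=r_i-\delta_i$ is then exactly the local argument already carried out inside the proof of \Cref{lemma: DLcodedis}: the augmented word $(\bold{c}_i,\bold{y}_i)$ sits in the code with parity-check matrix $[\bold{A}_{i,i};\,-\bold{I}_{r_i};\,\bold{U}_i]^{\mathrm{T}}$, which by \Cref{lemma: Good matrix} is MDS of minimum distance $r_i+1$, so declaring the $\delta_i$ symbols of $\bold{y}_i$ erased leaves exactly $r_i-\delta_i$ correctable erasures in $\bold{c}_i$ and no more. Once it is clear (from the decoder described next) that the depth-$1$ cooperation of $v_i$ uses exactly the decoded messages of the nodes in $\Se{M}_i$, the identities $\Se{A}_i^1=\Se{M}_i$ and $\Se{B}_i^1=\bigcup_{v_j\in\Se{M}_i}\big(\Se{M}_j\setminus(\{v_i\}\cup\Se{M}_i)\big)$ are immediate from the definition of the EC hierarchy.

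For the formula for $\lambda_{i,1;\Se{W}}$, fix $\Se{W}\subseteq\Se{B}_i^1$, write $\Se{D}=\Se{M}_i\cup\Se{W}$ for the set of nodes assumed decoded, and run the following decoder at $v_i$. \emph{Cleaning:} since $\Se{M}_i\subseteq\Se{D}$, $v_i$ computes $\bold{y}_i$ and subtracts $\bold{y}_i\bold{U}_i$ from the parity symbols of $\bold{c}_i$; this uses only the fully known messages of $\Se{M}_i$ and is therefore unaffected by which symbols of $\bold{c}_i$ are erased. \emph{Harvesting:} for each $v_j\in\Se{M}_i$ with $\Se{M}_j\setminus\{v_i\}\subseteq\Se{D}$ (call such $v_j$ \emph{qualifying}), node $v_j$ reconstructs $\bold{y}_j$ from its decoded codeword (using that $\bold{U}_j$ has full row rank), subtracts $\sum_{v_x\in\Se{M}_j\setminus\{v_i\}}\bold{m}_x\bold{B}_{x,j}$ — legitimate because every such $v_x\in\Se{D}$ — and is left with $\bold{m}_i\bold{B}_{i,j}$, which it forwards to $v_i$; collecting these over the qualifying $v_j$ gives an erasure-free block $\bold{q}_i$ whose length is $\sum_j\delta_j$ taken over the qualifying indices. \emph{Decoding:} now $v_i$ holds a (possibly erased) copy of $(\bold{m}_i,\ \bold{m}_i\bold{A}_{i,i},\ \bold{q}_i)$, a codeword of the code generated by $[\bold{I}_{k_i}\mid\bold{M}_i]$ with $\bold{M}_i=[\bold{A}_{i,i}\mid(\bold{B}_{i,j})_{v_j\ \mathrm{qualifying}}]$. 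By the Cauchy scaling property, $\bold{M}_i$ — being the restriction of the Cauchy matrix $\bold{T}_i$ to its first $k_i$ rows and to $r_i+\sum_j\delta_j$ of its columns — is again a Cauchy matrix, so \Cref{lemma: Good matrix} makes this code MDS of minimum distance $r_i+\sum_j\delta_j+1$. Every erasure lies inside $\bold{c}_i$, hence any pattern of at most $r_i+\sum_j\delta_j=\lambda_{i,1;\Se{W}}$ of them is correctable, after which $v_i$ recovers $\bold{m}_i$ and, $\bold{y}_i$ being known, all of $\bold{c}_i$.

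It remains to show this is optimal, which is the most delicate point: I would argue that under depth-$1$ cooperation with decoded set $\Se{D}$ the total information available to $v_i$ about $\bold{m}_i$ is \emph{exactly} the word $(\bold{m}_i,\bold{m}_i\bold{A}_{i,i},\bold{q}_i)$ above — a decoded neighbor $v_j\in\Se{M}_i$ discloses the cross parities $\bold{m}_i\bold{B}_{i,j}$ precisely when $\Se{M}_j\setminus\{v_i\}\subseteq\Se{D}$, whereas a non-qualifying $v_j$ discloses only $\bold{m}_i\bold{B}_{i,j}$ added to a combination of messages lying outside $\Se{D}\cup\{v_i\}$, which carries no information about $\bold{m}_i$. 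Since the code above is MDS of minimum distance $\lambda_{i,1;\Se{W}}+1$, in the nondegenerate regime $n_i>\lambda_{i,1;\Se{W}}$ its shortening at the coordinates of $\bold{q}_i$ is a nonzero MDS code of length $n_i$ and minimum distance $\lambda_{i,1;\Se{W}}+1$ living on the coordinates of $\bold{c}_i$; it therefore contains a codeword of weight $\lambda_{i,1;\Se{W}}+1$ supported inside $\bold{c}_i$, and erasing that support defeats every decoder, so $\lambda_{i,1;\Se{W}}$ is the exact threshold. Finally, for $\Se{W}=\Se{B}_i^1$ every $v_j\in\Se{M}_i$ is qualifying because $\Se{M}_j\setminus(\{v_i\}\cup\Se{M}_i)\subseteq\Se{B}_i^1$, which gives $d_{i,1}=\lambda_{i,1;\Se{B}_i^1}=r_i+\sum_{v_j\in\Se{M}_i}\delta_j$, while $\Se{W}=\varnothing$ gives the minimum $\lambda_{i,1;\varnothing}$. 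The main obstacle throughout is this bookkeeping — which cross parities a decoded neighbor can actually return (the condition $\Se{M}_j\setminus\{v_i\}\subseteq\Se{D}$ and the masking terms) and the verification that nothing beyond $(\bold{m}_i,\bold{m}_i\bold{A}_{i,i},\bold{q}_i)$ is ever available to $v_i$; by comparison the achievability half is a routine application of \Cref{lemma: Good matrix} and the Cauchy scaling property.
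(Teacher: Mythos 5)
Your achievability argument is essentially the paper's. You make the codeword $\bold{c}_i=(\bold{m}_i,\,\bold{m}_i\bold{A}_{i,i}+\bold{y}_i\bold{U}_i)$ explicit, clean out $\bold{y}_i\bold{U}_i$ using the fact that $\Se{M}_i\subseteq\Se{D}$, then harvest $\bold{m}_i\bold{B}_{i,j}$ from each $v_j\in\Se{M}_i$ with $\Se{M}_j\setminus\{v_i\}\subseteq\Se{D}$, and finally decode against the Cauchy-restriction $\bold{M}_i$ via \Cref{lemma: Good matrix}. That is exactly the decoder the paper describes (in the notation $\bold{s}_i,\bold{s}_j$), just spelled out in more detail. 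Two things you do that the paper leaves implicit, and which are genuinely worth noting: you surface the mutuality assumption $v_i\in\Se{M}_x\Leftrightarrow v_x\in\Se{M}_i$ that the identification of the cross-parity sum with $\Se{M}_i$ silently relies on, and you point out that the matrix $[\bold{A}_{i,i}\mid(\bold{B}_{i,j})]$ being a column restriction of $\bold{T}_i$ is what makes \Cref{lemma: Good matrix} directly applicable.

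Where you diverge from the paper is the optimality (converse) paragraph. The paper simply asserts the value of $\lambda_{i,1;\Se{W}}$ after establishing achievability; it gives no matching impossibility argument. Your attempt is a nice addition but it has a gap. The shortening argument correctly produces two distinct $\bold{m}_i$ values consistent with the un-erased part of $\bold{c}_i$ and with the harvested block $\bold{q}_i$. But to conclude that no decoder (under the depth-1 protocol) can distinguish them, you must show that the observations at the non-qualifying neighbors can be made consistent with \emph{both} choices of $\bold{m}_i$ by choosing the undecoded messages $\bold{m}_x$, $v_x\notin\Se{D}\cup\{v_i\}$, appropriately. ``Carries no information about $\bold{m}_i$'' hides a real compensation argument: one must check that some $\bold{B}_{x,j}$ with $v_x\in\Se{M}_j\setminus(\Se{D}\cup\{v_i\})$ has a row space covering the needed difference $\Delta\bold{m}_i\bold{B}_{i,j}$ (this needs $k_x\geq\delta_j$ and full rank), and, more subtly, that perturbing $\bold{m}_x$ does not break the hypothesis that every node in $\Se{D}$ remains locally decodable, since those nodes' codewords also contain cross parities from $v_x$. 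Without handling that cascading dependency, the converse is not established. So: the achievability half mirrors the paper, the converse half is an extra step the paper does not take, and in its present form it is a plausible sketch rather than a proof.
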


\begin{proof} It follows directly from \Cref{lemma: DLcodedis} that for all $i\in\MB{p}$, the $i$-th entry of the EC hierarchy at node $v_i$ is $\bold{d}_i=(r_i-\delta_i,r_i+\sum\nolimits_{v_j\in \Se{M}_i}\delta_j)$. The remaining task is to prove that $\lambda_{i,1;\Se{W}}=r_i+\sum\nolimits_{j:v_j\in \Se{M}_i,(\Se{M}_j\setminus\{v_i\})\subseteq (\Se{M}_i\cup\Se{W})}\delta_j$.


For all $v_i\in V$, let $\bold{s}_{i}=\sum\nolimits_{v_{j}\in\Se{M}_i}\bold{m}_{j}\bold{B}_{j,i}$. We first notice that for any $v_i\in V$, $v_j\in \Se{M}_i$, if $\bold{m}_j$ is recoverable, then the additional cross parities $\bold{s}_j\bold{U}_j$ and the original cross parities $\bold{m}_j\bold{A}_{j,j}$ of $v_j$ can be computed. Therefore, $\bold{s}_j$ can be computed, and if all the messages $\{\bold{m}_{j'}\}_{v_{j'}\in\Se{M}_j\setminus\{v_i\}}$ are further recoverable, then the cross parities $\bold{m}_i\bold{B}_{i,j}$ of $v_i$ from $v_j$ can be computed from $\bold{m}_i\bold{B}_{i,j}=\bold{s}_{j}-\sum\nolimits_{v_{j'}\in\Se{M}_j\setminus\{v_i\}}\bold{m}_{j'}\bold{B}_{j',j} $.  

Previous discussion implies that for any $\Se{W}$, $\varnothing\subseteq\Se{W}\subseteq \Se{B}_i^1$, if $(\Se{M}_j\setminus\{v_i\})\subseteq (\Se{M}_i\cup\Se{W})$, then the additional $\delta_j$ cross parities $\bold{m}_i\bold{B}_{i,j}$ of $\bold{m}_i$ can be obtained. Therefore, $\lambda_{i,1;\Se{W}}=r_i+\sum\nolimits_{j:v_j\in \Se{M}_i,(\Se{M}_j\setminus\{v_i\})\subseteq (\Se{M}_i\cup\Se{W})}\delta_j$. 
\end{proof}

\begin{figure}
\centering
\includegraphics[width=0.5\textwidth]{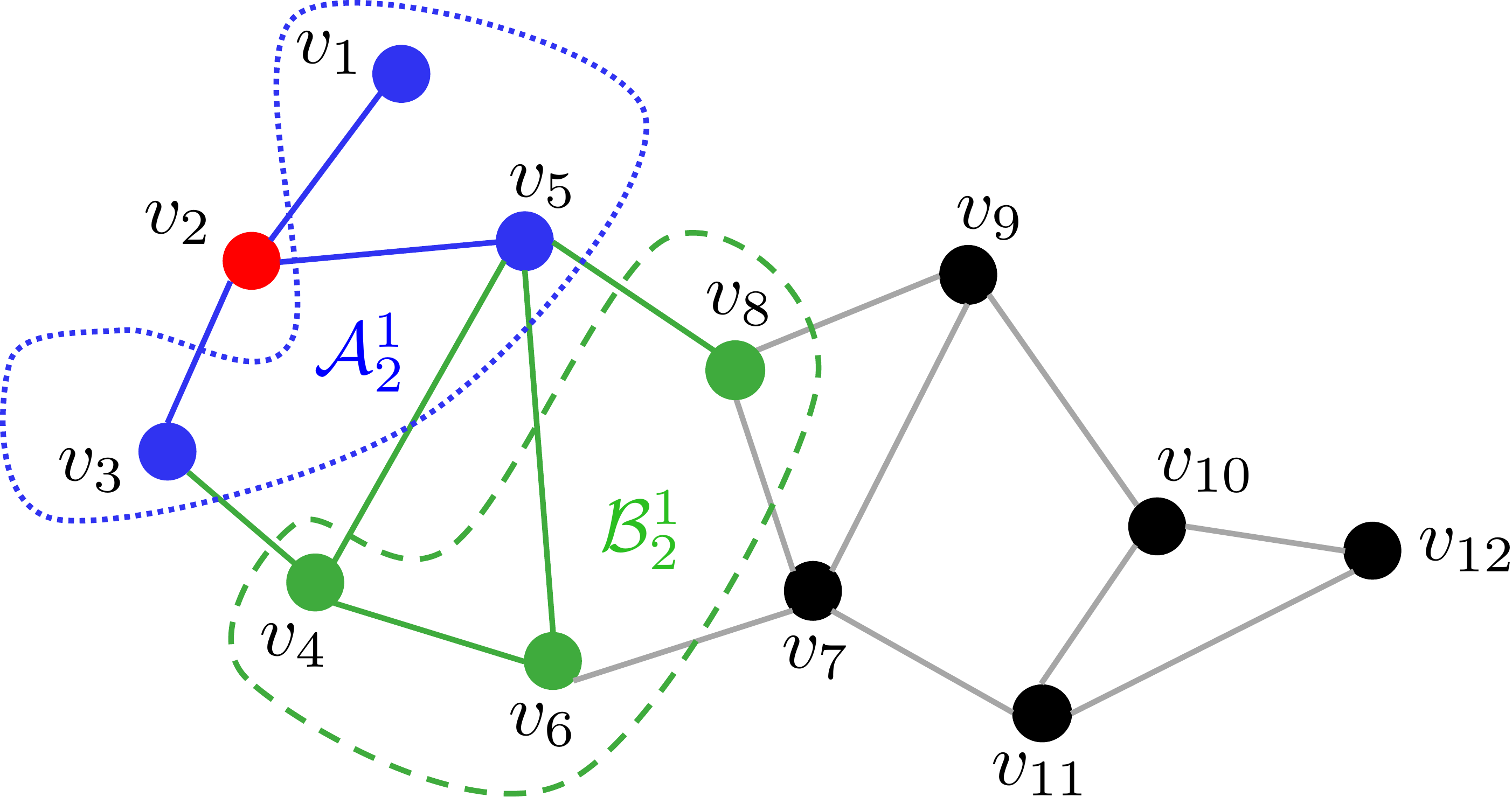}
\caption{DSN for \Cref{exam: exam1}. Nodes in $\Se{A}_2^{1}$ are neighbors of $v_2$ and are required to be locally-recoverable to remove cross parities from the parity part of $\bold{c}_2$. Nodes in $\Se{B}_2^{1}$ are neighbors of nodes in $\Se{A}_2^{1}$ except for $v_2$ and nodes in $\Se{A}_2^{1}$ themselves. For any node $v_j$ in $\Se{A}_2^{1}$, only when all the neighbors of $v_j$ in $\Se{B}_2^{1}$ are recovered, $v_j$ can provide extra parity symbols to $v_2$.}
\label{fig: example1}
\end{figure}

\begin{exam} \label{exam: exam1} Consider the DSN shown in Fig.~\ref{fig: example1}. Let $\Se{M}_i=\Se{N}_i$ in \Cref{cons: 1}, for all $i\in \MB{12}$. The matrix in (\ref{fig: genexample1}) is obtained by removing all the block columns of identity surrounded by zero matrices from the generator matrix (\ref{eqn: GenMatDL}) of $\Se{C}_1$, and is referred to as the \textbf{\emph{non-systematic component}} of the generator matrix. 

Take node $v_2$ as an example. Observe that $\Se{A}_2^1=\Se{M}_2=\{v_1,v_3,v_5\}$, $\Se{A}_1^1=\Se{M}_1=\{v_2\}$, $\Se{A}_3^1=\Se{M}_3=\{v_2,v_4\}$, and $\Se{A}_5^1=\Se{M}_5=\{v_2,v_4,v_6,v_8\}$. Therefore, $\Se{B}_2^1=\bigcup\nolimits_{j\in\{1,3,5\}} \Se{M}_j\setminus\{v_1,v_2,v_3,v_5\}=\{v_4,v_6,v_8\}$. Moreover, $\bold{d}_{2}=(r_2-\delta_2,r_2+\sum\nolimits_{j\in\{1,3,5\}}\delta_j)$, $\lambda_{2,1;\varnothing}=\lambda_{2,1;\{v_6\}}=\lambda_{2,1;\{v_8\}}=\lambda_{2,1;\{v_6,v_8\}}=r_2+\delta_1$, $\lambda_{2,1;\{v_4\}}=\lambda_{2,1;\{v_4,v_6\}}=\lambda_{2,1;\{v_4,v_8\}}=r_2+\delta_1+\delta_3$, and $\lambda_{2,1;\{v_4,v_6,v_8\}}=r_2+\delta_1+\delta_3+\delta_5$.

Consider the case where the $1$-st level cooperation of $v_2$ is initiated, i.e., the number of erasures lies within the interval $\left[r_2-\delta_2+1,r_2+\delta_1+\delta_3+\delta_5\right]$. Then, if $\bold{m}_1,\bold{m}_3,\bold{m}_5$ are all locally-recoverable, the cross parities $\bold{m}_1\bold{B}_{1,2}$, $\bold{m}_3\bold{B}_{3,2}$, $\bold{m}_5\bold{B}_{5,2}$ computed from the non-diagonal parts in the generator matrix can be subtracted from the parity part of $\bold{c}_2$ to get $\bold{m}_2\bold{A}_{2,2}$. Moreover, the successful decoding of $\bold{m}_1$ makes $\bold{m}_{2}\bold{B}_{2,1}$ known to $v_2$. This process provides $(r_2+\delta_1)$ parities for $\bold{m}_2$, and thus allows $v_2$ to tolerate $(r_2+\delta_1)$ erasures.

In order to correct more than $(r_2+\delta_1)$ erasures, we need extra cross parities generated from $\bold{B}_{2,3}\bold{U}_3$ and $\bold{B}_{2,5}\bold{U}_5$. However, local decoding only allows $v_3$, $v_5$ to know $\bold{m}_{2}\bold{B}_{2,3}+\bold{m}_{4}\bold{B}_{4,3}$ and $\bold{m}_{2}\bold{B}_{2,5}+\bold{m}_{4}\bold{B}_{4,5}+\bold{m}_{6}\bold{B}_{6,5}+\bold{m}_{8}\bold{B}_{8,5}$, respectively. Therefore, $v_3$ needs $\bold{m}_4$ to be recoverable to obtain the extra $\delta_3$ cross parities, and $v_5$ needs $\bold{m}_4$, $\bold{m}_6$, $\bold{m}_8$ to be recoverable to obtain the extra $\delta_5$ cross parities. 
\end{exam}

As shown in \Cref{exam: exam1}, instead of presenting a rigid~EC capability, our proposed scheme enables nodes to have correction of a growing number of erasures with bigger sets of neighboring nodes recovering their messages. Therefore, nodes automatically choose the shortest path to recover their messages, thus significantly increasing the average recovery speed, especially when the erasures are distributed non-uniformly and sparsely, which is important for blockchain-based DSNs \cite{zhu2019blockchain,underwood2016blockchain}. Moreover, nodes with higher reliabilities are utilized to help decode the data of less reliable nodes, enabling correction of erasure patterns that are not recoverable in our previous work in \cite{Yang2019HC}. We show these properties in \Cref{exam: speed} and \Cref{exam: erasure pattern}.

\begin{figure}
\centering
\includegraphics[width=0.5\textwidth]{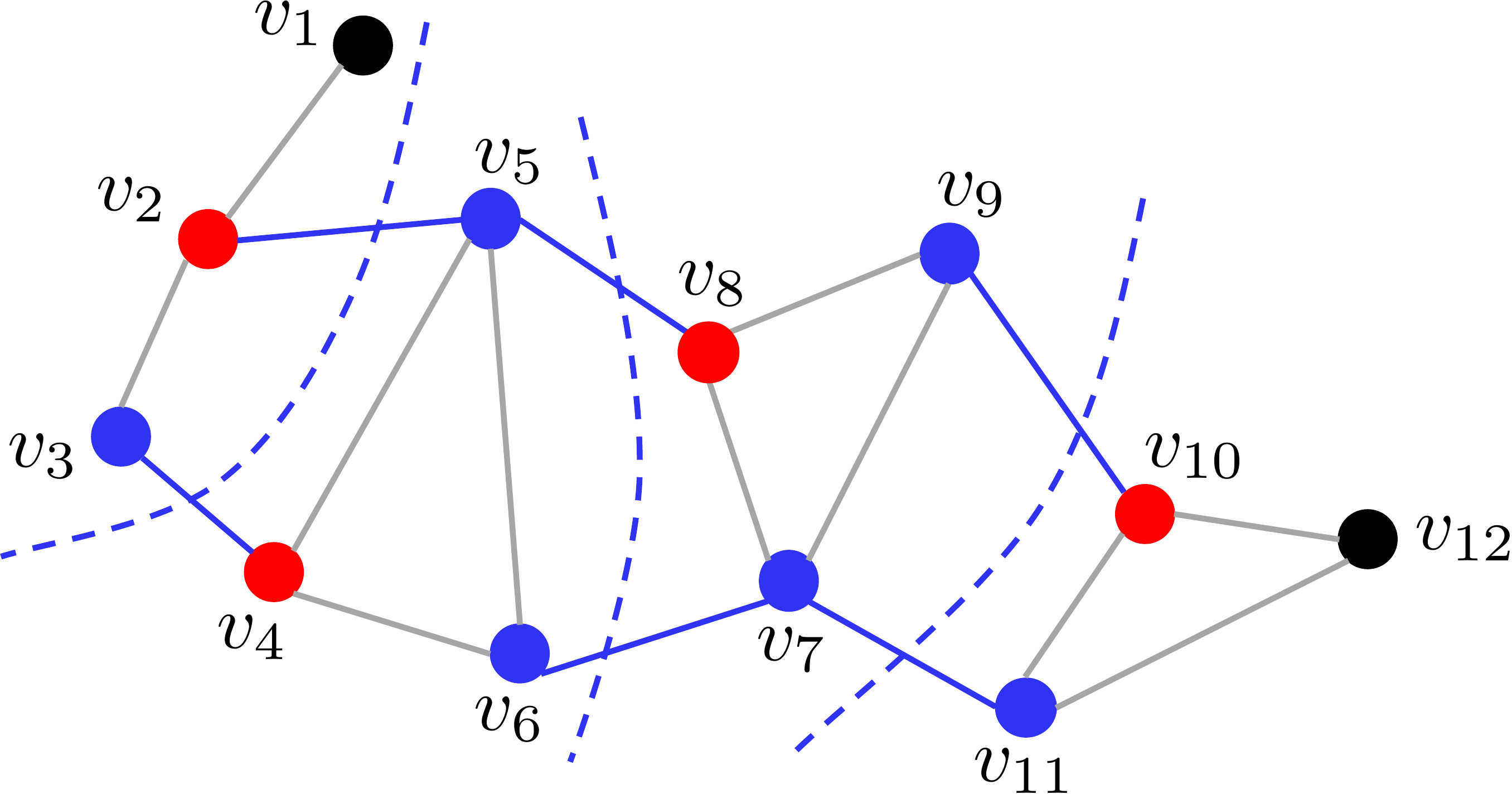}
\caption{The erasure pattern in \Cref{exam: erasure pattern}. Red and non-red nodes refer to nodes where the codewords stored at them are non-locally-recoverable and locally-recoverable, respectively.}
\label{fig: compare}
\end{figure}

\begin{exam} \label{exam: speed} \emph{\textbf{(Faster Recovery Speed)}} Consider a DSN with the cooperation scheme specified in \Cref{exam: exam1}. Suppose the time to be consumed on transferring information through the communication link $e_{i,j}$ is $t_{i,j}\in\Db{R}^{+}$, where $t_{i,j}=t_{j,i}$ for all $i,j\in\MB{12}$, $i\neq j$, and $\max\{t_{1,2},t_{2,5}\}<(t_{2,3}+t_{3,4})<t_{2,5}+\min\{t_{4,5},t_{5,6},t_{5,8}\}$.

Consider the case where $\bold{c}_2$ at node $v_2$ has $(r_2+1)$ erasures, which implies that in addition to the case of $\bold{m}_1$, $\bold{m}_3$, $\bold{m}_5$ being obtained locally, recovering $\bold{m}_4$ is sufficient for $v_2$ to successfully obtain its message. The time consumed for decoding is $(t_{2,3}+t_{3,4})$. Therefore, any system using network coding with the property that a node failure is recovered through accessing more than $4$ other nodes will need longer processing time for this case. 
\end{exam}

\begin{exam} \label{exam: erasure pattern} \emph{\textbf{(Flexible Erasure Patterns)}} Consider the DSN with the cooperation scheme specified in \Cref{exam: exam1}. Suppose $\{\bold{m}_i\}_{i\notin\{2,4,8,10\}}$ are all locally-recoverable. Then, consider the case where $\bold{m}_i$ has $(r_i+1)$ erasures for $i\in\{2,4,8,10\}$, which exemplifies a correctable erasure pattern for our proposed codes. 

The hierarchical coding scheme presented in \cite{Yang2019HC} can recover from this erasure pattern only if the code used adopts a partition of all nodes into $4$ disjoint groups, each of which contains exactly a node from $\{v_2,v_4,v_8,v_{10}\}$, as shown in Fig.~\ref{fig: compare}. Moreover, the partition of the code in \cite{Yang2019HC} results in a reduction of the EC capability of the $1$-st level cooperation at every node except for $v_1,v_{12}$ because the additional information \Cref{cons: 1} allows to flow through the edges marked in blue no longer exists.
\end{exam}

\subsection{Recoverable Erasure Patterns}
\label{subsec recoverable erasure patterns 1}
For a code specified for a DSN according to \Cref{cons: 1}, we next investigate the recoverable erasure patterns under the proposed EC solution. Throughout this paper, for any edge $(i,j)$ from $v_i$ to $v_j$ in a directed graph $G(V,E)$, we call $v_j$ a child of $v_i$, and $v_i$ a parent of $v_j$. 

In the DSN depicted in Fig.~\ref{decgraph}, suppose all codewords stored at black nodes are locally-recoverable; those stored at green nodes, e.g., $v_i$ with $i \in \{6,8,12\}$, are recoverable by accessing their neighboring nodes in $\Se{A}_i^1$ only; and those in blue nodes, e.g., $v_i$ with $i \in \{0,2,3,5,10\}$, need some nodes in $\Se{B}_i^1$ to be also recoverable since they need to obtain extra cross parities from at least one of their neighboring nodes in $\Se{A}_i^1$. As an example, assume that node $v_0$ needs to obtain extra parities from only one of its neighbors, say $v_1$. This condition requires codewords stored at $v_2,v_3,v_{13},v_{14},v_{15}$ all being recoverable. Since codewords in $v_{13},v_{14},v_{15}$ are already locally-recoverable, this \textcolor{lara}{case} essentially requires codewords in $v_2,v_3$ to be recovered. For simplicity, we just refer to this requirement as ``$v_0$ needs $v_2,v_3$''. Similarly, $v_2$ needs $v_5,v_6$, $v_3$ needs $v_6,v_{10}$, $v_5$ needs $v_6,v_8$, and $v_{10}$ needs $v_6,v_{12}$. Given that codewords in $v_6,v_8,v_{12}$ are recoverable, all the blue nodes are recoverable following the order $v_5$, $v_{10}$, $v_2$, $v_3$, $v_0$.

Note that in this paper, we suppose that in the protocol carrying out the decoding algorithm, each node, when receiving a request, either replies back with the required message, provided that the information gathered at this node suffices to provide the answer, or broadcasts a request to all its neighbors to ask for the information it needs. For now, we assume that nodes remain intact during decoding.

We next define the so-called \textbf{decoding graph} of each node where the codeword stored there is recoverable in a DSN. For a given node, this graph describes the aforementioned order of decoding non-locally-recoverable nodes involved in the process of decoding this particular node. Observe that connections between any two nodes having their codewords locally-recoverable are omitted for simplicity.

\begin{figure}
\centering
\includegraphics[width=0.4\textwidth]{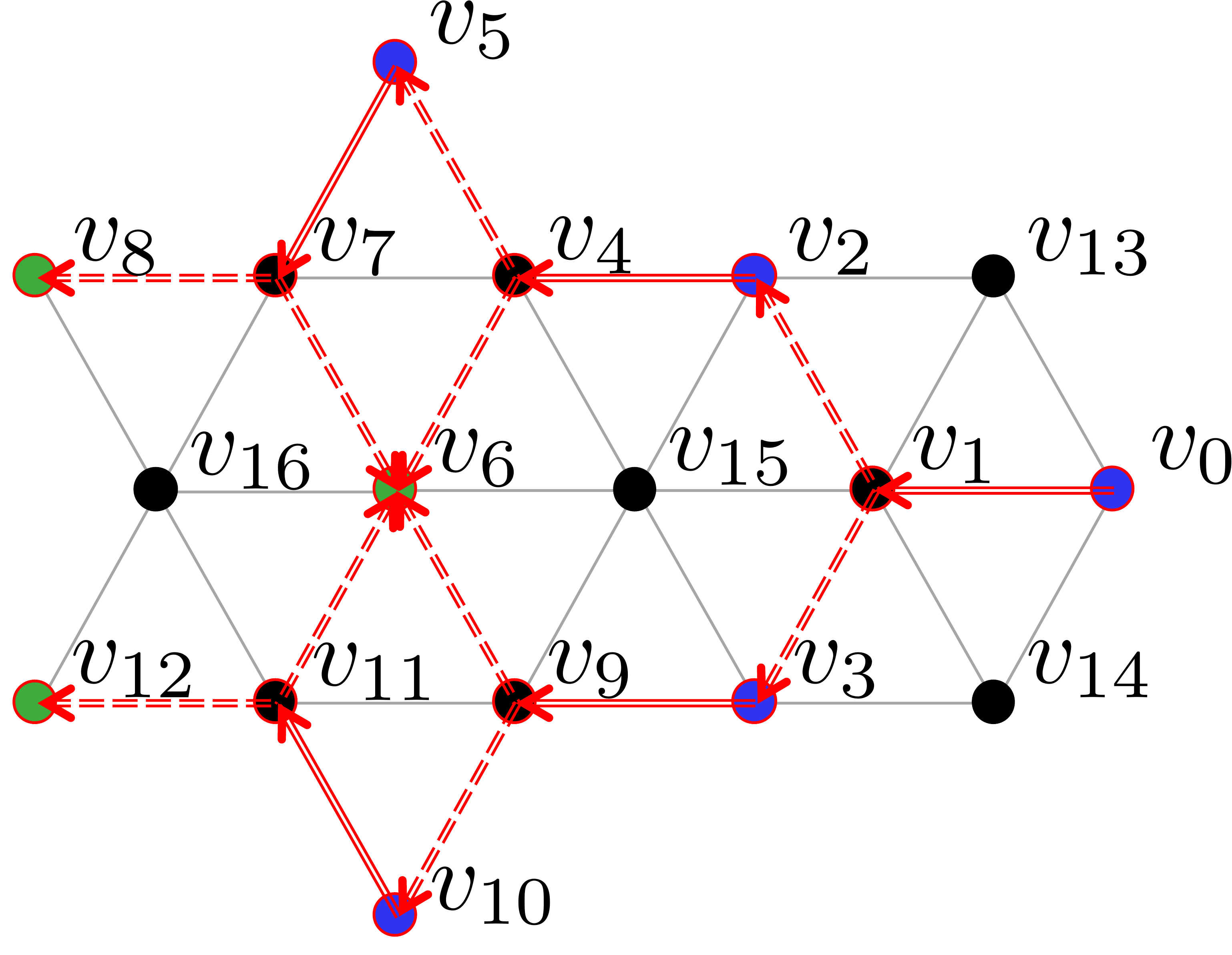}
\caption{Decoding graph in \Cref{defi decoding graph}. Black and non-black nodes refer to locally-recoverable and non-locally-recoverable nodes, respectively, where green nodes are recoverable without requiring any node that is not in their neighborhood to be recovered. The subgraph marked in red is a decoding graph at root node $v_0$. When there exists an edge pointing from $v_i$ to $v_j$, $v_i$ needs to obtain cross parities from $v_j$ while decoding $v_0$, where the edges are solid if and only if $v_i$ is non-locally-recoverable.}
\label{decgraph}
\end{figure}

\begin{defi} \label{defi decoding graph} \emph{\textbf{(Decoding Graph)}} Let $G(V,E)$ represent a DSN with $|V|=p$ and $i\in \MB{p}$. Let $\mathcal{T}(\mathcal{V},\mathcal{E})$ denote a directed subgraph of $G$ associated with $v_j$. For all $v_i\in\mathcal{V}$, denote the set containing all children of $v_i$ by $\Se{V}_i^{\textup{C}}$, and that containing all parents of $v_i$ by $\Se{V}_i^{\textup{P}}$. Suppose $v_j\in\Se{V}$ is the only node without parents. We call this node the \textbf{root} of $\Se{T}$. We call any node without children a \textbf{leaf}. Suppose then the codewords of all the leaves of $\Se{T}$ are not locally-recoverable, and any other $v_i\in \mathcal{V}$ satisfies one of the following conditions.
\begin{enumerate}
\item The codeword stored at $v_i$ is locally-recoverable; $\Se{V}_i^{\textup{P}}\cup\Se{V}_i^{\textup{C}}$ consists of all the nodes in $\Se{M}_i$ such that codewords stored at them are not locally-recoverable and $|\Se{V}_i^{\textup{P}}|=1$.
\item The codeword stored at $v_i$ is not locally-recoverable; codewords stored at nodes in $\Se{V}_i^{\textup{P}}\cup\Se{V}_i^{\textup{C}}$ are all locally-recoverable.
\end{enumerate}
We call $\Se{T}$ a decoding graph at its root node $v_j$ over $G(V,E)$.
\end{defi}

As shown in Fig.~\ref{decgraph}, the decoding graph $\Se{T}$ at node $v_0$ is marked in red. Nodes marked in green are leaves in $\Se{T}$. Nodes $v_1,v_4,v_7,v_9,v_{11}$ satisfy Condition 1. Take $v_1$ as an example, $\Se{V}_1^{\textup{P}}=\{v_0\}$, and $\Se{V}_1^{\textup{C}}=\{v_2,v_3\}$. We know that $\Se{M}_1=\{v_0,v_2,v_3,v_{13},v_{14},v_{15}\}$, where codewords stored at nodes in $\Se{V}_i^{\textup{P}}\cup\Se{V}_i^{\textup{C}}=\{v_0,v_2,v_3\}$ are not locally-recoverable. This local constraint enforces the node in $\Se{V}_1^{\textup{P}}$, i.e., $v_0$, to obtain the extra cross parities from $v_1$ after codewords stored at nodes in $\Se{V}_1^{\textup{C}}$ are recovered. Nodes $v_0,v_2,v_3,v_5,v_{6},v_{10},v_8,v_{12}$ satisfy Condition 2, and they are the nodes that need to recover their codewords in order that $v_0$ recovers its codeword.

Based on the definition of decoding graphs, \Cref{theo: erausre pattern single} describes recoverable erasure patterns in a DSN.

\begin{theo} \label{theo: erausre pattern single} Let $\Se{C}$ be a code with single-level cooperation on a DSN represented by $G(V,E)$, where $\Se{C}$ and all related parameters are specified according to \Cref{cons: 1}. Let $\bold{u}\in\mathbb{N}^p$ such that $\bold{u}\preceq\bold{n}$. Suppose $\Se{C}$ and $\bold{u}$ satisfy the following conditions:
\begin{enumerate}
\item Let $V^{\textup{NL}}$ represent the set that contains all the nodes $v_i$, $i\in\MB{p}$ such that $u_i>r_i-\delta_i$. Let $V^{\textup{L}}=V\setminus V^{\textup{NL}}$. Then, for any $v_i\in V^{\textup{NL}}$, $\Se{M}_i\subset V^{\textup{L}}$.
\item For any $v_i\in V^{\textup{NL}}$, there exists a decoding graph $\mathcal{T}_i(\Se{V}_i,\Se{E}_i)$ at root $v_i$ over $G$. Moreover, for any leaf $v_j$ of $\mathcal{T}_i$, $u_j\leq r_j$; for any node $v_j\in \Se{V}_i\cap V^{\textup{NL}}$, $u_j\leq r_j+\sum\nolimits_{v_k\in\Se{V}_j^{\textup{C}}}\delta_k$. 
\end{enumerate} 
Then, $\bold{u}$ is a \emph{\textbf{recoverable erasure pattern}} of $\Se{C}$ over $G(V,E)$.
\end{theo}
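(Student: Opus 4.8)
The plan is to argue by induction on the decoding order prescribed by the decoding graphs, showing that every node's codeword can be recovered stage by stage. First I would observe that for any $v_i \in V^{\textup{L}}$, Condition 1 together with $u_i \le r_i - \delta_i$ (which holds since $v_i \notin V^{\textup{NL}}$) and \Cref{theo: ECcons1} (specifically $d_{i,0} = r_i - \delta_i$) immediately gives that $\bold{c}_i$ is locally recoverable; this establishes the base of the whole argument. The substance lies in the nodes of $V^{\textup{NL}}$, so I would fix an arbitrary $v_i \in V^{\textup{NL}}$ and work with its decoding graph $\mathcal{T}_i(\Se{V}_i, \Se{E}_i)$ supplied by Condition 2.

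Next I would set up the induction along $\mathcal{T}_i$. The key structural fact to extract from \Cref{defi decoding graph} is that $\mathcal{T}_i$ admits a topological-style ordering of its $V^{\textup{NL}}$-vertices: a non-locally-recoverable node $v_j$ has all of $\Se{V}_j^{\textup{C}}$ locally recoverable (Condition 2 of the definition), while a locally-recoverable node $v_j$ has exactly one parent and all its non-locally-recoverable $\Se{M}_j$-neighbors appearing in $\Se{V}_j^{\textup{P}} \cup \Se{V}_j^{\textup{C}}$ (Condition 1 of the definition). I would process leaves first: a leaf $v_j$ of $\mathcal{T}_i$ is non-locally-recoverable, has all its $\Se{A}_j^1 = \Se{M}_j$ neighbors locally recoverable (they lie in $V^{\textup{L}}$ by the graph's leaf condition and Condition 1 of the theorem), so by \Cref{theo: ECcons1} the relevant EC capability is at least $\lambda_{j,1;\varnothing} = r_j + \delta_{(\cdot)}$-type terms — more precisely, since leaves have no children, the bound to invoke is $u_j \le r_j$, and the $r_j$ parities obtained once all of $\Se{M}_j$ is locally decoded suffice. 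Then I would push the induction up the tree: for a non-leaf non-locally-recoverable node $v_j$, once all nodes in $\Se{V}_j^{\textup{C}}$ are recovered (children are deeper in the tree, handled earlier), \Cref{theo: ECcons1} tells us $v_j$ tolerates $\lambda_{j,1;\Se{W}}$ erasures where $\Se{W} \supseteq \Se{V}_j^{\textup{C}} \cap \Se{B}_j^1$, and the containment condition $(\Se{M}_k \setminus \{v_j\}) \subseteq (\Se{M}_j \cup \Se{W})$ needed for each neighbor $v_k$ to contribute its $\delta_k$ extra cross parities is exactly guaranteed by the decoding-graph structure (all non-locally-recoverable $\Se{M}_k$-neighbors are children of $v_j$ or equal to $v_j$). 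This yields tolerance $\ge r_j + \sum_{v_k \in \Se{V}_j^{\textup{C}}} \delta_k \ge u_j$, closing the step.

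The remaining piece is bookkeeping for the locally-recoverable interior nodes of $\mathcal{T}_i$ (those satisfying Condition 1 of \Cref{defi decoding graph}): these need no EC argument of their own — they are in $V^{\textup{L}}$ hence recoverable from the start — but their role is to certify that the parent in $\Se{V}_j^{\textup{P}}$ actually receives the cross parities once the children in $\Se{V}_j^{\textup{C}}$ are decoded. I would make this precise by tracing the identity from the proof of \Cref{theo: ECcons1}: $\bold{m}_i \bold{B}_{i,j} = \bold{s}_j - \sum_{v_{j'} \in \Se{M}_j \setminus \{v_i\}} \bold{m}_{j'} \bold{B}_{j',j}$, which is computable precisely when $\bold{m}_j$ and all $\{\bold{m}_{j'}\}_{v_{j'} \in \Se{M}_j \setminus \{v_i\}}$ are known — and the decoding graph guarantees the latter set splits into locally-recoverable nodes and nodes appearing as $v_i$'s other children. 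Finally I would conclude that the ordering on $\Se{V}_i \cap V^{\textup{NL}}$ induced by depth in $\mathcal{T}_i$ lets us recover every node in it, in particular the root $v_i$; since $v_i \in V^{\textup{NL}}$ was arbitrary and $V^{\textup{L}}$ is handled at the base, all of $\bold{u}$ is recovered, so $\bold{u}$ is a recoverable erasure pattern.

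I expect the main obstacle to be verifying that the abstract conditions in \Cref{defi decoding graph} genuinely force the containment hypothesis $(\Se{M}_k \setminus \{v_j\}) \subseteq (\Se{M}_j \cup \Se{W})$ appearing in \Cref{theo: ECcons1} for the right choice of $\Se{W}$ at each stage — i.e., reconciling the graph-theoretic description with the algebraic cross-parity-propagation condition — and ensuring the induction is well-founded (no circular dependency among the $V^{\textup{NL}}$-nodes), which relies on $\mathcal{T}_i$ being acyclic as a consequence of the single-parent and leaf conditions. A secondary subtlety is that a node $v_k \in \Se{V}_j^{\textup{C}}$ might itself be in $V^{\textup{L}}$ rather than $V^{\textup{NL}}$; I would handle both cases uniformly by noting that $V^{\textup{L}}$-children are available immediately, so only the $V^{\textup{NL}}$-children impose an ordering constraint, and those form the genuine recursion whose termination is guaranteed by the finite depth of $\mathcal{T}_i$.
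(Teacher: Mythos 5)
Your proposal follows essentially the same route as the paper's proof: induction on depth in the decoding graph (the paper formalizes this as the \emph{decoding depth}, the number of $V^{\textup{L}}$ nodes on the longest root-to-leaf path), recovering leaves first since all their $\Se{M}_j$-neighbors are locally-recoverable and $u_j \le r_j$, then propagating recovery upward via the extra $\delta_k$ parities each child $v_k$ supplies once the grandchildren are recovered. Two small points worth tightening: the constraint ``$\Se{W} \supseteq \Se{V}_j^{\textup{C}} \cap \Se{B}_j^1$'' is vacuous because $\Se{V}_j^{\textup{C}} \subseteq \Se{M}_j$ whereas $\Se{B}_j^1$ is by construction disjoint from $\Se{M}_j$ --- the set you actually need $\Se{W}$ to contain is the grandchildren $\bigcup_{v_k \in \Se{V}_j^{\textup{C}}} \Se{V}_k^{\textup{C}} \setminus \Se{M}_j$ (precisely the $V^{\textup{NL}}$ nodes whose recovery the induction delivers); and your ``secondary subtlety'' about mixed children does not arise, since Conditions 1 and 2 of \Cref{defi decoding graph} force a strict alternation between $V^{\textup{NL}}$ and $V^{\textup{L}}$ nodes along any root-to-leaf path, so children of a $V^{\textup{NL}}$ node are automatically in $V^{\textup{L}}$.
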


\begin{proof} For any node $v_i\in\Se{V}^{\textup{NL}}$, consider the decoding graph $\mathcal{T}_i(\Se{V}_i,\Se{E}_i)$ at root $v_i$. Denote the number of nodes contained in $V^{\textup{L}}$ on the longest directed path connecting node $v_i$ with a leaf in $\Se{T}_i$ by $l_i$, which is referred to as the \textbf{decoding depth} of $v_i$. We prove the statement ``any node in $G$ is recoverable'' by mathematical induction on the decoding depth of the node. 

The decoding graph of a node with decoding depth $0$ contains only the node itself. The first condition implies that all neighbors of any node that is not locally-recoverable (in $\Se{V}^{\textup{NL}}$) are locally-recoverable (in $\Se{V}^{\textup{L}}$). Therefore, any node $v_i\in\Se{V}^{\textup{NL}}$ tolerates at least $r_i$ erasures. This means that nodes with decoding depth $0$ are recoverable.

Suppose the statement is true for any node $v_j\in \Se{V}^{\textup{NL}}$ with decoding depth less than or equal to ${\ell}\in\Db{N}$. Then, for any node $v_i$ with decoding depth $\ell+1$, let $S_i$ denote the union of all sets $\Se{V}_j^{\textup{C}}$ such that $v_j\in\Se{V}_i^{\textup{C}}$. Since the subgraph of $\mathcal{T}_i$ rooted at any node $v_j\in S_i$ is a decoding graph of $v_j$ with length at most $\ell$, $v_j$ is recoverable. Condition 1) in \Cref{defi decoding graph} indicates that all neighbors of $v_j\in\Se{V}_i^{\textup{C}}$ except for $v_i$ are recoverable and $\bold{m}_i\bold{B}_{i,j}$ is known, which provides $\delta_j$ extra parities of $\bold{m}_i$. Therefore, node $v_i$ tolerates up to $r_i+\sum\nolimits_{v_j\in\Se{V}_i^{\textup{C}}}\delta_j$ erasures, thus is recoverable according to Condition 2). Consequently, the statement for $\ell+1$ is also true. 

By induction, the statement is true for all the nodes, and the theorem is true.
\end{proof}

\begin{figure}
\centering
\includegraphics[width=0.47\textwidth]{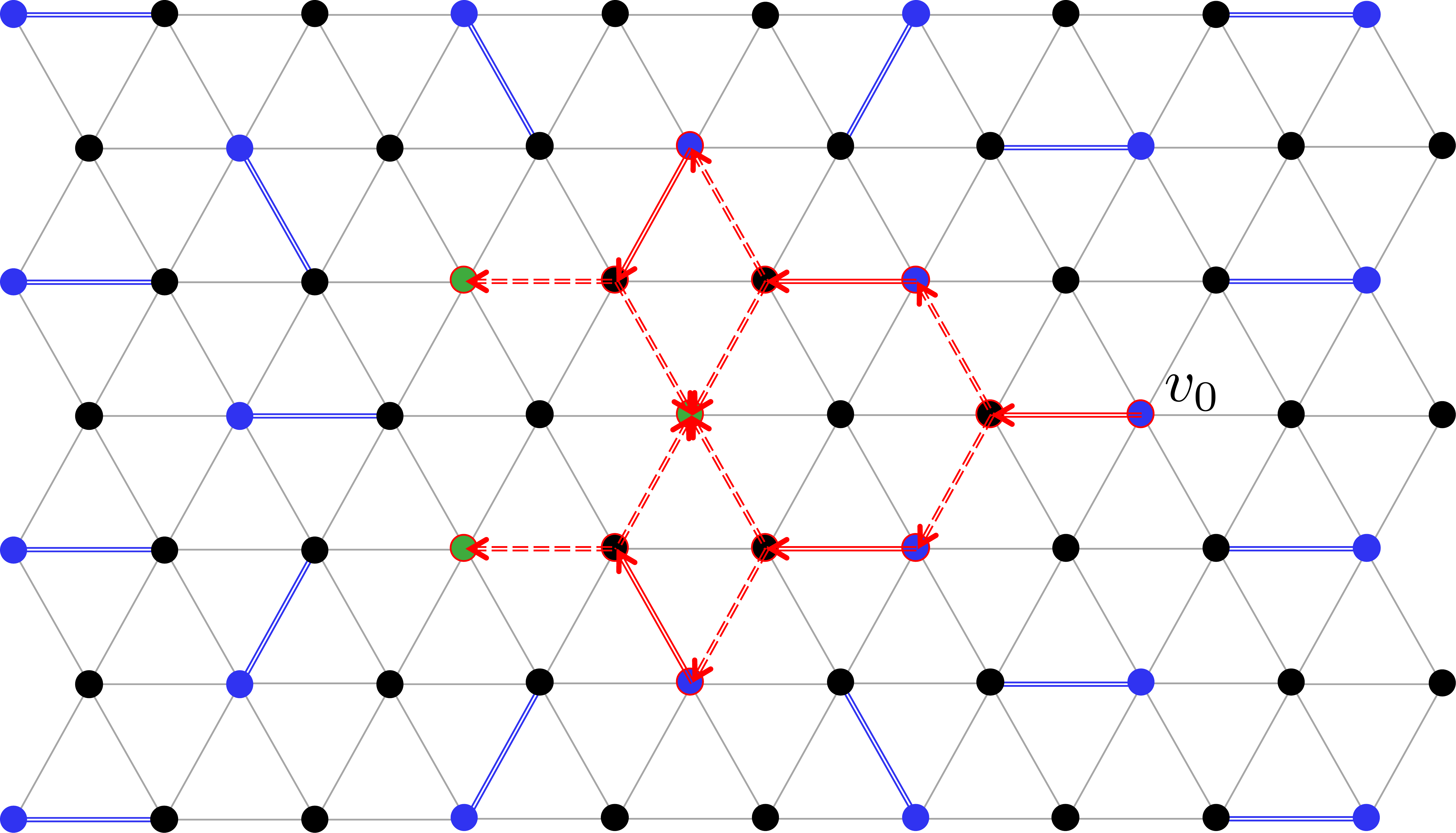}
\hspace{0.8cm}
\includegraphics[width=0.47\textwidth]{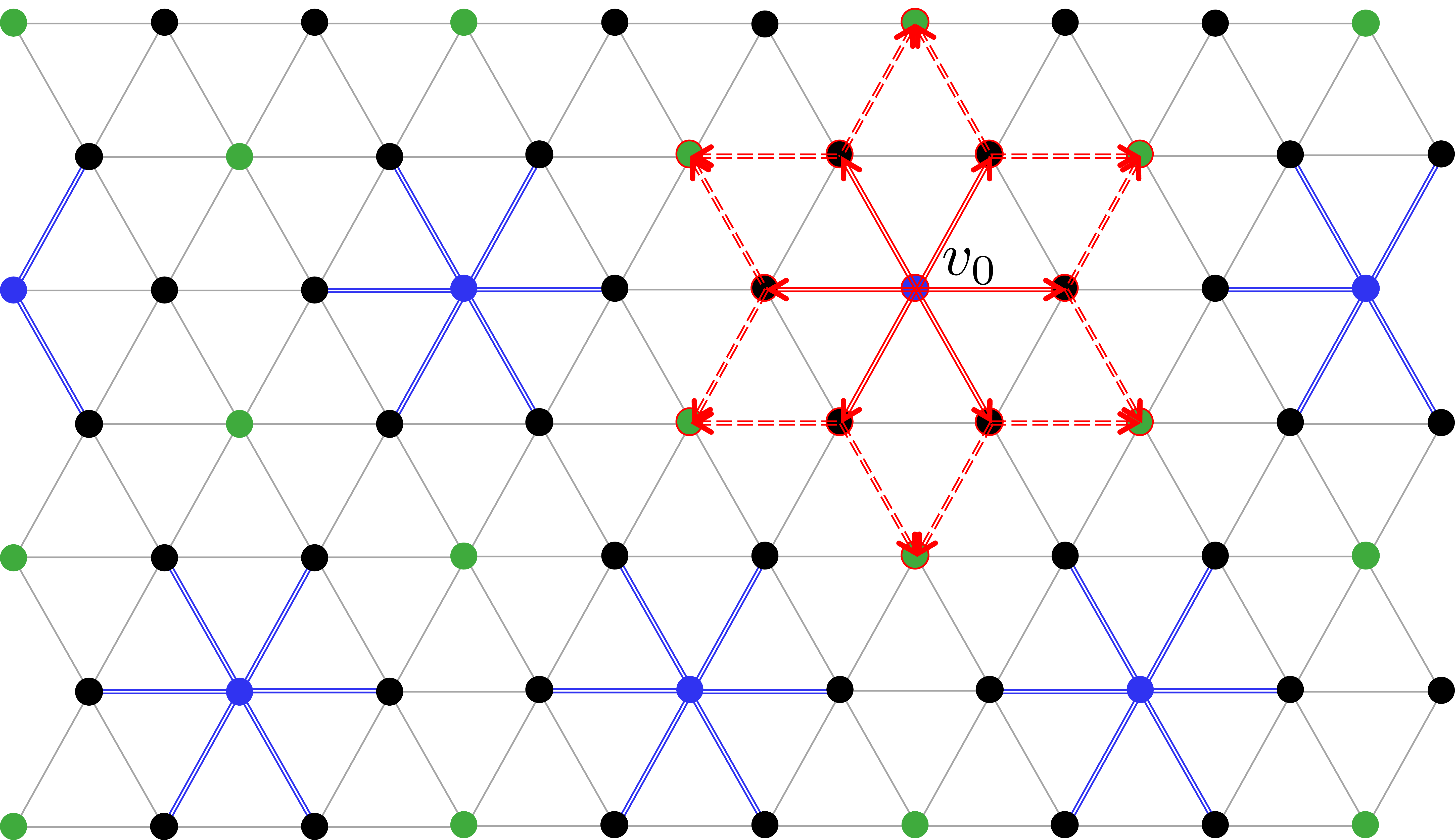}
\caption{DSN in \Cref{exam11}. Black and non-black nodes refer to nodes that are locally-recoverable and non-locally-recoverable, respectively. The decoding graphs at $v_0$ are marked with color red. Any solid blue line connects a black node $v_i$ and an non-black node $v_j$, where $v_i$ is a child of $v_j$, and all neighbors of $v_i$ except for $v_j$ are children of $v_i$, in the decoding graph at root $v_j$. The blue lines uniquely describe the decoding graphs at each non-black node.}
\label{fig11}
\end{figure}

The following two examples illustrate \Cref{theo: erausre pattern single}. They follow the notation in \Cref{cons: 1} and \Cref{theo: erausre pattern single}.

\begin{exam} \label{exam11} 

Fig.~\ref{fig11} presents two erasure patterns, $\bold{u}_1=(u_{1,1},u_{1,2},\dots,u_{1,p})$ (left) and $\bold{u}_2=(u_{2,1},u_{2,2},\dots,u_{2,p})$ (right), on the same DSN denoted by $G(V,E)$ with the EC solution characterized by $\Se{C}$ specified in \Cref{cons: 1}. Suppose there exists $\delta\in \Db{N}$ such that $\delta_i=\delta$ for all $i\in\MB{p}$. 

For any $i\in\MB{p}$, $v_i$ is marked in black if $u_{j,i}\leq r_i-\delta$, in green if $r_i-\delta<u_{j,i}\leq r_i$, and in blue otherwise, where $j\in\{0,1\}$. In the left panel, that specifies $\bold{u}_1$, any node $v_i$ marked in blue satisfies $r_i<u_{1,i}\leq r_i+\delta$. In the right panel, that specifies $\bold{u}_2$, any node $v_i$ marked in blue satisfies $r_i<u_{2,i}\leq r_i+6\delta$. 

Note that any non-black node is connected to exactly one black node by a blue edge, where the non-black node is the only parent of the black node in \Cref{defi decoding graph}. Then, for any node $v_i$ in $G(V,E)$, there exists a decoding graph at $v_i$, with the leaves being all marked in green. In Fig.~\ref{fig11}, the decoding graph at the node $v_0$ is marked in red on each graph of the two. Let $d_i$ be the maximum number of erasures node $v_i$ tolerates, for all $i\in\MB{p}$, and $\Delta_i=d_i-r_i$. Suppose $p\to\infty$ in $G(V,E)$. Denote the average of all $\Delta_i$'s by $\Delta$.

In the first subgraph, there will be approximately $2p/3$ nodes with any $v_i$ of them satisfying $\Delta_i=-\delta$, and approximately $p/3$ nodes with any $v_i$ of them satisfying $\Delta_i=\delta$. Thus, $\Delta=-\delta/3$. Similarly, in the second graph, there will be approximately $2p/3$, $2p/9$, and $p/9$ nodes with any $v_i$ of them satisfying $\Delta_i=-\delta$, $\Delta_i=0$, and $\Delta_i=6\delta$, respectively. Thus, $\Delta=0$. 
\end{exam}

\begin{figure}
\centering
\includegraphics[width=0.47\textwidth]{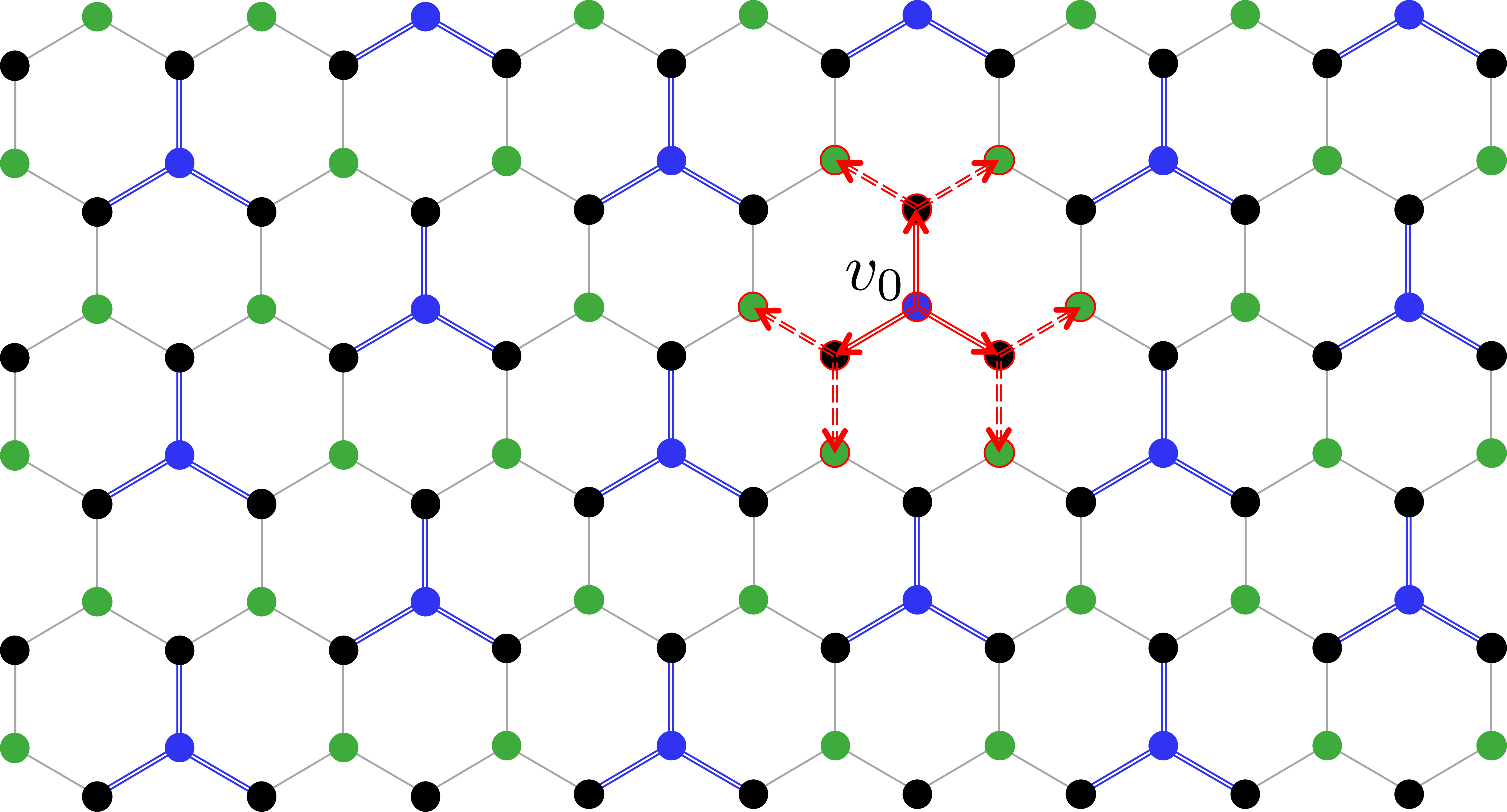}
\hspace{0.8cm}
\includegraphics[width=0.47\textwidth]{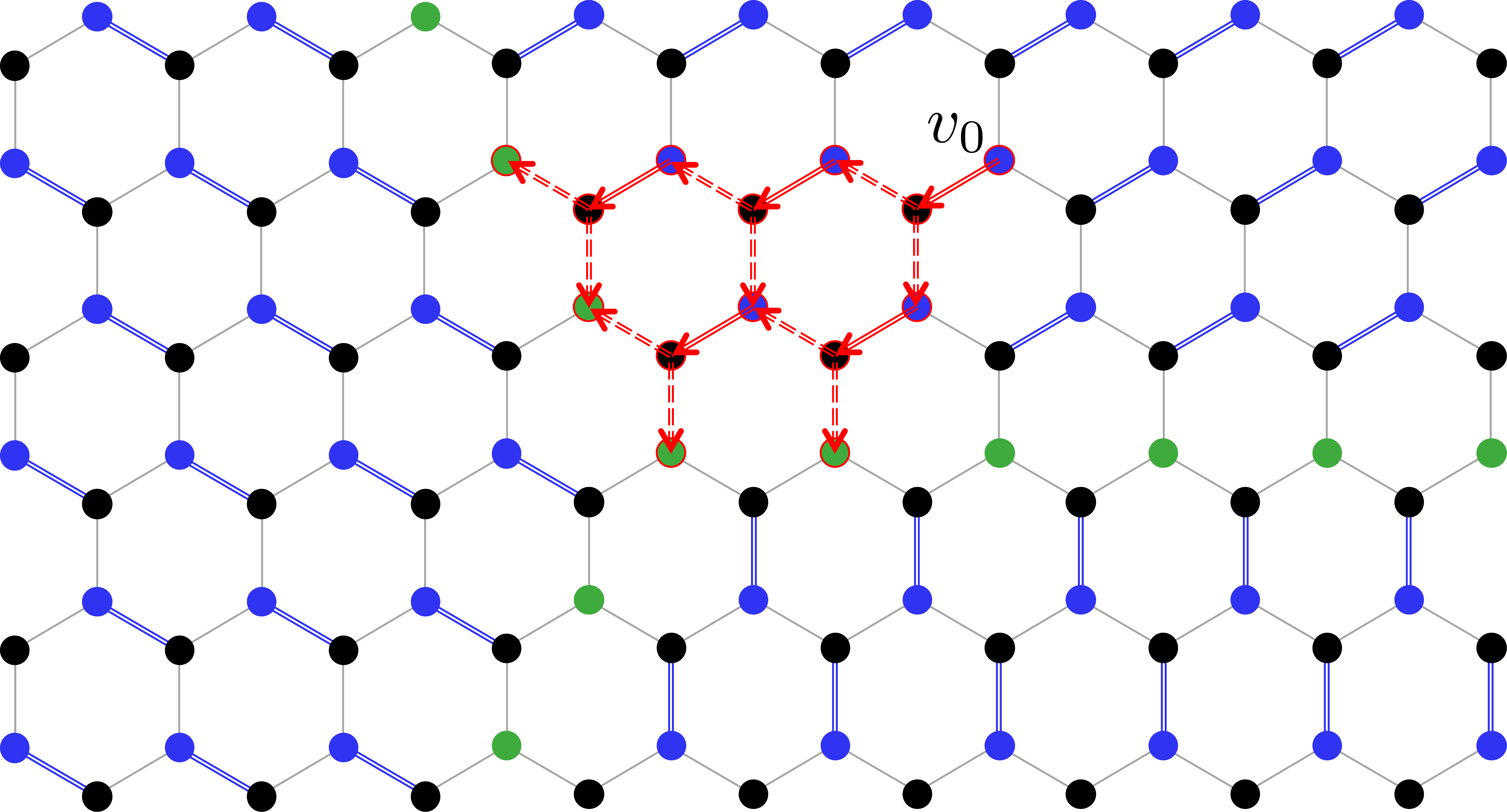}
\caption{DSN in \Cref{exam12}. Meanings of the components of the graphs are identical to those in Fig.~$\ref{fig11}$.}
\label{fig12}
\end{figure}

\begin{exam} \label{exam12} 

Similar to \Cref{exam11}, Fig.~\ref{fig12} also presents two erasure patterns on another DSN with an EC solution characterized by $\Se{C}$ specified in \Cref{cons: 1}. Suppose there exists $\delta\in \Db{N}$ such that $\delta_i=\delta$ for all $i\in\MB{p}$. 

For any $i\in\MB{p}$, $v_i$ is marked in black if $u_{j,i}\leq r_i-\delta$, in green if $r_i-\delta<u_{j,i}\leq r_i$, and in blue otherwise, where $j\in\{0,1\}$. In the left panel, that specifies $\bold{u}_1$, any node $v_i$ marked in blue satisfies $r_i<u_{1,i}\leq r_i+3\delta$. In the right panel, that specifies $\bold{u}_2$, any node $v_i$ marked in blue satisfies $r_i<u_{2,i}\leq r_i+\delta$. In Fig.~\ref{fig12}, the decoding graph at the node $v_0$ is marked in red on both graphs of the two.

We follow the definitions of $\Delta_i$'s, $i\in\MB{p}$, and $\Delta$, stated in \Cref{exam11}. In the left panel, there will be approximately $p/2$ nodes with any $v_i$ of them satisfying $\Delta_i=-\delta$, approximately $p/3$ nodes with any $v_i$ of them satisfying $\Delta_i=0$, and $p/6$ nodes with any $v_i$ of them satisfying $\Delta_i=3\delta$. Thus, $\Delta=0$. Similarly, in the right panel, there will be approximately $p/2$ and $p/2$ nodes with any $v_i$ of them satisfying $\Delta_i=-\delta$ and $\Delta_i=\delta$, respectively. Thus, $\Delta=0$. 
\end{exam}

\begin{figure}
\centering
\includegraphics[width=0.36\textwidth]{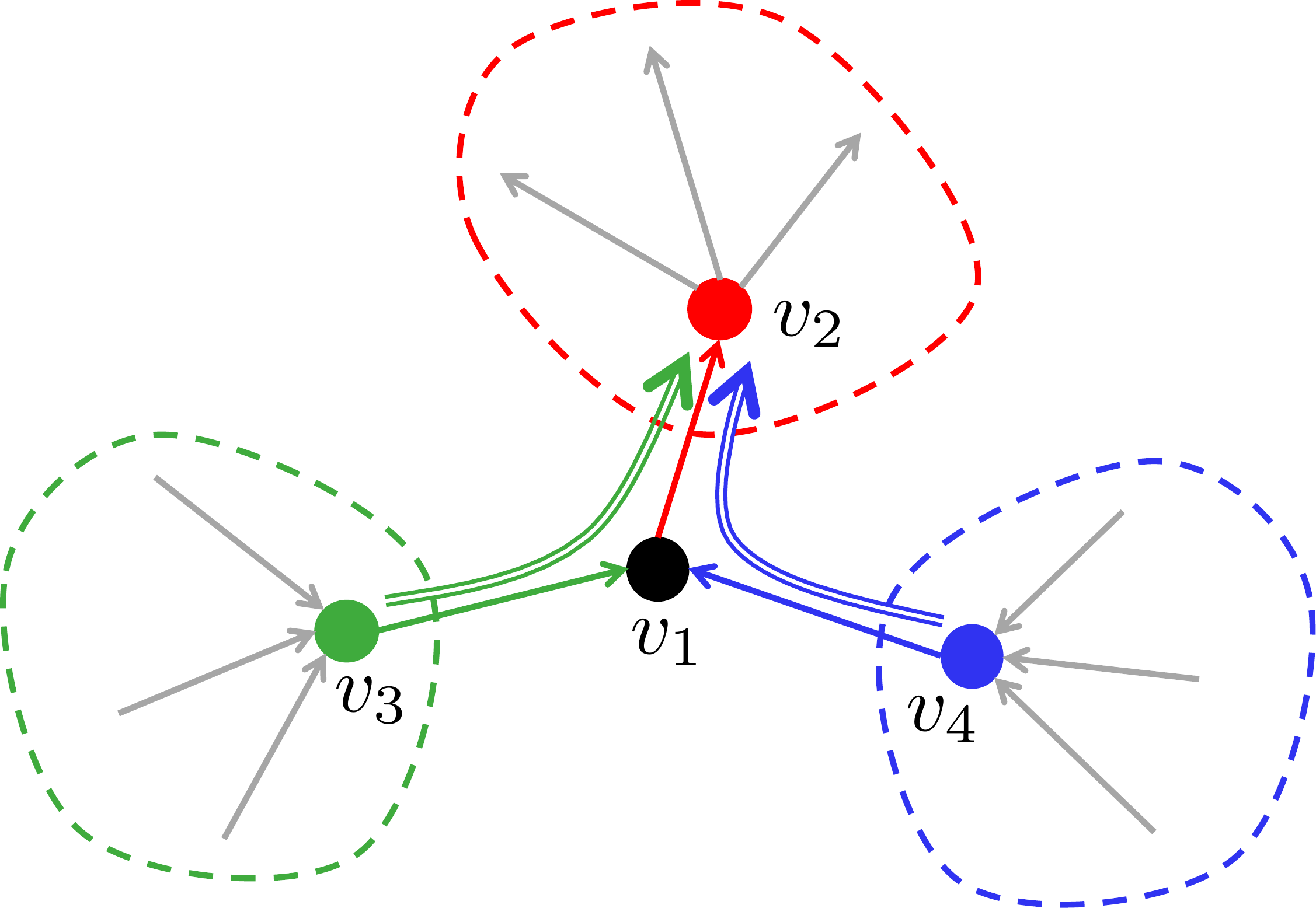}
\caption{Information flow in cooperative data protection. As a neighbor of $v_2$, $v_1$ helps in removing the cross parities from the parity part of $\bold{c}_2$. However, nodes $v_3$ and $v_4$ also indirectly help $v_1$ to provide extra cross parities to $v_2$ if they are recovered. This can be interpreted as information flow from $v_3$ and $v_4$ to $v_2$ through $v_1$.}
\label{fig: information flow}
\end{figure}

\begin{rem} \label{rem: infoflow} \emph{\textbf{(Information Flow in Coded DSN)}} Note that the values of $\Delta_i$'s, $i\in\MB{p}$, in \Cref{exam11} and \Cref{exam12} imply the unbalanced reliabilities of nodes in the coded DSN. In other words, any node (black) $v_i$ with $\Delta_i<0$ is of higher reliability than any node (blue) $v_i$ with $\Delta_i>0$. Therefore, any blue node utilizes extra information from non-black nodes in its neighborhood. The average $\Delta$ being nonnegative can be interpreted as a higher level of intrinsic information flow among nodes with different reliabilities in the coded DSN.

For example, as shown in Fig.~\ref{fig: information flow}, suppose $v_1$ is a locally-recoverable node with neighboring nodes $v_2$, $v_3$, and $v_4$. If $v_3$ and $v_4$ are recoverable, then $v_1$ provides $\delta_1$ extra parities to node $v_2$. Therefore, the information flows from $v_3$ and $v_4$ to $v_2$ through $v_1$, which is depicted in the figure. 
\end{rem}

\section{Multi-Level Cooperation}
\label{section multi-level cooperation}

In this section, we extend the construction presented in Subsection~\ref{subsec single-level cooperation} to codes with EC hierarchies of depth larger than $1$. As is shown in schemes with single-level cooperation, cooperation utilizes the redundant information from nodes with fewer erasures to help in decoding of nodes that cannot be decoded locally. However, each node only obtains additional parities from its neighbors in the single-level cooperation, which immediately motivates us to explore multi-level cooperation to further improve the global EC capability of each node. Although multi-level cooperation inevitably degrades the local EC capability of each node, it enables the DSN to tolerate erasure patterns where erasures are distributed non-uniformly among the nodes, such as bursty erasures in few sparsely scattered nodes. In this section, we investigate the EC hierarchy of multi-level cooperation schemes. We first define the so-called \textbf{cooperation graphs} that describe how the nodes are coupled to cooperatively transmit information, and then prove the existence of hierarchical codes over a special class of cooperation graphs: the so-called \textbf{compatible graphs}.

\begin{figure*}[!t]
\normalsize
\centering

\setcounter{equation}{5}
\setcounter{MaxMatrixCols}{12}
\begin{equation}\small
\scalebox{.7}{$
\begin{matrix}

\textcolor{robert}{1}&\textcolor{robert}{2}&\textcolor{robert}{3}&\textcolor{robert}{4}&\textcolor{robert}{5}&\textcolor{robert}{6}&\textcolor{robert}{7}&\textcolor{robert}{8}&\textcolor{robert}{9}&\textcolor{robert}{10}&\textcolor{robert}{11}&\textcolor{robert}{12}\\
\hline
\hline
\bold{A}_{1,1}&\bold{B}_{1,2}\bold{U}_2&\bold{0}&\bold{0}&\bold{0}&\bold{0}&\bold{0}&\bold{0}&\bold{0}&\bold{0}&\bold{0}&\bold{0}\\
\hline
\bold{B}_{2,1}\bold{U}_1&\bold{A}_{2,2}&\bold{B}_{2,3}\bold{U}_3&\bold{0}&\bold{B}_{2,5}\bold{U}_5&\bold{0}&\bold{0}&\textcolor{red}{\bold{B}_c\bold{V}_{8;2}}&\textcolor{red}{\bold{B}_c\bold{V}_{9;2}}&\textcolor{red}{\bold{B}_d\bold{V}_{10;3}}&\textcolor{red}{\bold{B}_d\bold{V}_{11;3}}&\bold{0}\\
\hline
\bold{0}&\bold{B}_{3,2}\bold{U}_2&\bold{A}_{3,3}&\bold{B}_{3,4}\bold{U}_4&\bold{0}&\bold{0}&\bold{0}&\textcolor{red}{\bold{B}_e\bold{V}_{8;2}}&\textcolor{red}{\bold{B}_e\bold{V}_{9;2}}&\textcolor{red}{\bold{B}_f\bold{V}_{10;3}}&\textcolor{red}{\bold{B}_f\bold{V}_{11;3}}&\bold{0}\\
\hline
\bold{0}&\bold{0}&\bold{B}_{4,3}\bold{U}_3&\bold{A}_{4,4}&\bold{B}_{4,5}\bold{U}_5&\bold{B}_{4,6}\bold{U}_6&\bold{0}&\textcolor{apple green}{\bold{B}_{\alpha}\bold{V}_{8;2}}&\textcolor{apple green}{\bold{B}_{\alpha}\bold{V}_{9;2}}&\textcolor{blue}{\bold{B}_g\bold{V}_{10;2}}&\textcolor{blue}{\bold{B}_g\bold{V}_{11;2}}&\bold{0}\\
\hline
\bold{0}&\bold{B}_{5,2}\bold{U}_2&\bold{0}&\bold{B}_{5,4}\bold{U}_4&\bold{A}_{5,5}&\bold{B}_{5,6}\bold{U}_6&\bold{0}&\bold{B}_{5,8}\bold{U}_8&\bold{0}&\bold{0}&\textcolor{blue}{\bold{B}_h\bold{V}_{11;2}}&\textcolor{blue}{\bold{B}_h\bold{V}_{12;2}}\\
\hline
\bold{0}&\bold{0}&\bold{0}&\bold{B}_{6,4}\bold{U}_4&\bold{B}_{6,5}\bold{U}_5&\bold{A}_{6,6}&\bold{B}_{6,7}\bold{U}_7&\textcolor{apple green}{\bold{B}_{\beta}\bold{V}_{8;2}}&\textcolor{apple green}{\bold{B}_{\beta}\bold{V}_{9;2}}&\textcolor{blue}{\bold{B}_j\bold{V}_{10;2}}&\bold{0}&\textcolor{blue}{\bold{B}_j\bold{V}_{12;2}}\\
\hline
\bold{0}&\bold{0}&\bold{0}&\bold{0}&\bold{0}&\bold{B}_{7,6}\bold{U}_6&\bold{A}_{7,7}&\bold{B}_{7,8}\bold{U}_8&\bold{B}_{7,9}\bold{U}_9&\textcolor{blue}{\bold{B}_l\bold{V}_{10;2}}&\bold{B}_{7,11}\bold{U}_{11}&\textcolor{blue}{\bold{B}_l\bold{V}_{12;2}}\\
\hline
\bold{0}&\textcolor{red}{\bold{B}_m\bold{V}_{2;2}}&\textcolor{red}{\bold{B}_m\bold{V}_{3;2}}&\textcolor{apple green}{\bold{B}_{y}\bold{V}_{4;2}}&\bold{B}_{8,5}\bold{U}_5&\textcolor{apple green}{\bold{B}_{y}\bold{V}_{6;2}}&\bold{B}_{8,7}\bold{U}_7&\bold{A}_{8,8}&\bold{B}_{8,9}\bold{U}_9&\textcolor{blue}{\bold{B}_n\bold{V}_{10;2}}&\textcolor{blue}{\bold{B}_n\bold{V}_{11;2}}&\bold{0}\\
\hline
\bold{0}&\textcolor{red}{\bold{B}_o\bold{V}_{2;2}}&\textcolor{red}{\bold{B}_o\bold{V}_{3;2}}&\textcolor{apple green}{\bold{B}_{z}\bold{V}_{4;2}}&\bold{0}&\textcolor{apple green}{\bold{B}_{z}\bold{V}_{6;2}}&\bold{B}_{9,7}\bold{U}_7&\bold{B}_{9,8}\bold{U}_8&\bold{A}_{9,9}&\bold{B}_{9,10}\bold{U}_{10}&\textcolor{blue}{\bold{B}_p\bold{V}_{11;2}}&\textcolor{blue}{\bold{B}_p\bold{V}_{12;2}}\\
\hline
\bold{0}&\textcolor{red}{\bold{B}_q\bold{V}_{2;3}}&\textcolor{red}{\bold{B}_q\bold{V}_{3;3}}&\textcolor{blue}{\bold{B}_r\bold{V}_{4;2}}&\bold{0}&\textcolor{blue}{\bold{B}_r\bold{V}_{6;3}}&\textcolor{blue}{\bold{B}_s\bold{V}_{7;3}}&\textcolor{blue}{\bold{B}_s\bold{V}_{8;3}}&\bold{B}_{10,9}\bold{U}_9&\bold{A}_{10,10}&\bold{B}_{10,11}\bold{U}_{11}&\bold{B}_{10,12}\bold{U}_{12}\\
\hline
\bold{0}&\textcolor{red}{\bold{B}_x\bold{V}_{2;3}}&\textcolor{red}{\bold{B}_x\bold{V}_{3;3}}&\textcolor{blue}{\bold{B}_t\bold{V}_{4;2}}&\textcolor{blue}{\bold{B}_t\bold{V}_{5;2}}&\bold{0}&\bold{B}_{11,7}\bold{U}_7&\textcolor{blue}{\bold{B}_u\bold{V}_{8;3}}&\textcolor{blue}{\bold{B}_u\bold{V}_{9;3}}&\bold{B}_{11,10}\bold{U}_{10}&\bold{A}_{11,11}&\bold{B}_{11,12}\bold{U}_{12}\\
\hline
\bold{0}&\bold{0}&\bold{0}&\bold{0}&\textcolor{blue}{\bold{B}_v\bold{V}_{5;2}}&\textcolor{blue}{\bold{B}_v\bold{V}_{6;3}}&\textcolor{blue}{\bold{B}_w\bold{V}_{7;3}}&\bold{0}&\textcolor{blue}{\bold{B}_w\bold{V}_{9;3}}&\bold{B}_{12,10}\bold{U}_{10}&\bold{B}_{12,11}\bold{U}_{11}&\bold{A}_{12,12}\\
\hline
\end{matrix}$}
\label{fig: example2}
\end{equation}
\hrulefill
\setcounter{equation}{6}
\end{figure*}

\subsection{Cooperation Graphs}
\label{subsec cooperation graphs}

Based on the aforementioned notation, for each $v_i\in V$ and ${\ell}\in\MB{L_i}$, let $\Se{I}^{\ell}_i=\Se{A}^{\ell}_i\setminus\Se{A}^{{\ell}-1}_i$ (with $\Se{A}^{0}_i = \varnothing$) and refer to it as the $\ell$-th \textbf{helper} of $v_i$. We next define the so-called \textbf{cooperation matrix}.

\begin{figure}
\centering
\includegraphics[width=0.75\textwidth]{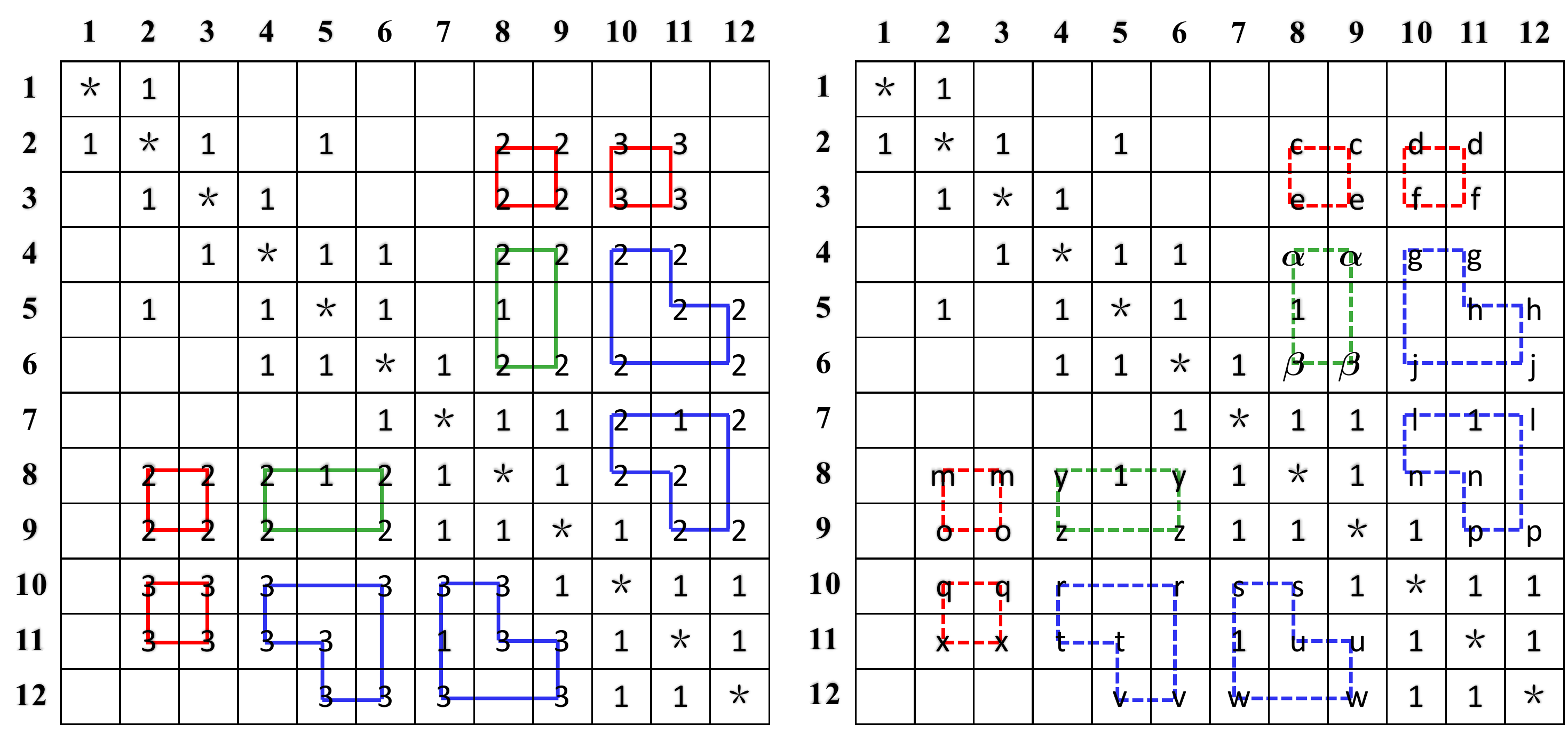}
\caption{Matrices $\bold{D}$ (left) and $\bold{X}$ (right) in \Cref{exam: exam3}. A numerical entry $\ell$ at position $(i,j)$ in the left panel implies that $v_j$ is adjacent to $v_i$ in the $\ell$-th level cooperation of $v_i$, while symbolic entries (letters) in the right panel represent the indices of the component matrices $\bold{A}_{i,j}$.}
\label{fig: CoopMatrix}
\end{figure}

\begin{defi} \label{defi: cooperation matrix} For a joint coding scheme $\Se{C}$ for a DSN represented by $G(V,E)$ with $|V|=p$, the matrix $\bold{D}\in\mathbb{N}^{p\times p}$, in which $\bold{D}_{i,j}$ equals to $\ell$ for all $i,j\in\MB{p}$ such that $j\in\mathcal{I}^{\ell}_i$, ${\ell}\in\MB{L_i}$, and zero otherwise, is called the \textbf{cooperation matrix}.
\end{defi}

As an example, the cooperation matrix in \Cref{exam: exam1} is exactly the adjacency matrix of the graph in Fig.~\ref{fig: example1}. Note that cooperation graphs corresponding to some joint coding schemes must satisfy certain properties. In Subsection~\ref{subsec construction over compatible graphs}, we prove the existence of codes if the cooperation matrix represents a so-called \textbf{compatible graph}. Before going into details of the construction, we present an example to provide some intuition.

\begin{exam}\label{exam: exam3} Recall the DSN in \Cref{exam: exam1}. We present a coding scheme with the cooperation matrix specified in the left panel of Fig.~\ref{fig: CoopMatrix}. The non-systematic part of the generator matrix is shown in (\ref{fig: example2}), which is obtained through the following process:
\begin{enumerate}
\item Partition all the non-zero, non-one elements into structured groups, each of which is marked in either a rectangle or a hexagon in $\bold{D}$, as indicated in the left panel of Fig.~\ref{fig: CoopMatrix}.
\item Replace the endpoints of each horizontal line segment in Step 1 with $s\in S$ ($S$ is a set of symbols), as indicated in the right panel of Fig.~\ref{fig: CoopMatrix}; denote the new matrix by $\bold{X}$.
\item Assign a parameter $\gamma_s\in\Db{N}$ to each $s\in S$, and a matrix $\bold{B}'_{s}\in \textup{GF}^{k_i\times\gamma_s}$ to any $(i,j)$ such that $\bold{X}_{i,j}=s$.
\item For each $i\in\MB{p}$, ${\ell}\in\MB{L}$, let $\eta_{i;{\ell}}=\max\nolimits_{s: k\in \Se{I}_i^{\ell},\bold{X}_{k,i}=s} \gamma_{s}$, assign $\bold{V}_{i;{\ell}}\in\textup{GF}(q)^{\eta_{i;{\ell}}\times r_i}$ to $v_i$; let $\bold{B}_{s}=\left[\bold{B}'_s,\bold{0}_{\eta_{i;{\ell}}-\gamma_s}\right]$; compute $\bold{A}_{i,j}=\bold{B}_s\bold{V}_{j;{\ell}}$ for $s=\bold{X}_{i,j}$, $l=\bold{D}_{i,j}$.
\item Compute $\bold{A}_{i,j}$ for $\bold{X}_{i,j}=1$ according to \Cref{cons: 1}.
\end{enumerate}

Note that the colors of submatrices in (\ref{fig: example2}) are consistent with the colors of cycles in Fig.~\ref{fig: CoopMatrix}.
 Let us again focus on node $v_2$. Let $\Se{I}_2^1=\{v_1,v_3,v_5\}$, $\Se{I}_2^2=\{v_8,v_9\}$, $\Se{I}_2^3=\{v_{10},v_{11}\}$. Then, $\Se{B}_2^1=\{v_4,v_6,v_8\}$, $\Se{B}_2^2=\{v_4,v_6\}$, $\Se{B}_2^3=\varnothing$, $d_{2,0}=r_2-\delta_2-\eta_{i;2}-\eta_{i;3}$, $d_{2,1}=r_2+\delta_1+\delta_3+\delta_5$, $d_{2,2}=d_{2,1}+\gamma_{c}$, $d_{2,3}=d_{2,2}+\gamma_{d}$. Note that for each $i\in \MB{p}$, $s=\bold{X}_{i,j}$, and $l=\bold{D}_{i,j}$, $\gamma_s$ denotes the maximum number of parity symbols $v_i$ can obtain from $v_j$ in the $\ell$-th level cooperation, and $\eta_{i;{\ell}}$ represents the reduction in the value of the local erasure correction capability needed at $v_i$ because of its $\ell$-th level cooperation.

We first show that knowing $\{\bold{m}_j\}_{v_j\in\Se{A}_2^1}$ is sufficient for removing $\bold{s}_2=\sum\nolimits_{j\in\Se{I}_2^1}\bold{m}_j\bold{B}_{j,2}\bold{U}_{2}+\sum\nolimits_{{\ell}=2}^{3}\sum\nolimits_{j\in\Se{I}_2^{\ell}}\bold{m}_j\bold{B}_{\bold{X}_{j,2}}\bold{V}_{2;{\ell}}$ from the parity part of $\bold{c}_2$. Note that if the rows of $\bold{A}_{i,i}$, $\bold{U}_{i}$, and $\{\bold{V}_{i;{\ell}}\}_{{\ell}\in\{2,3\}}$ are linearly independent, then for all $\ell$, $\sum\nolimits_{j\in\Se{I}_i^{\ell}}\bold{m}_j\bold{B}_{\bold{X}_{j,i}}$ is recoverable if $\bold{m}_i$ is recoverable. In our example, this means that $\{\bold{m}_j\bold{B}_{j,2}\}_{j=1,3,5}$, $\bold{m}_{8}\bold{B}_{m}+\bold{m}_9\bold{B}_o$, $\bold{m}_{10}\bold{B}_{o}+\bold{m}_{11}\bold{B}_x$ are known, which means $\bold{s}_2$ is also known. Therefore, $\bold{s}_2$ is removed from the parity part of $\bold{c}_2$ through the $1$-st level cooperation. We next show that additional parities are obtained through $\ell$-th level cooperations with \textcolor{lara}{$\ell \in \{2,3\}$}.

In the $2$-nd level cooperation, $\bold{m}_8,\bold{m}_9$ are known. Therefore, $\bold{m}_2\bold{B}_c+\bold{m}_3\bold{B}_e+\bold{m}_4\bold{B}_{\alpha}+\bold{m}_6\bold{B}_{\beta}$ is also known. We remove $\bold{m}_3\bold{B}_{e}$, that is obtained via $v_3$, from the parity part of $\bold{c}_2$. In order to obtain the $\gamma_c$ parities from $\bold{m}_2\bold{B}_c$, one needs $\bold{m}_4,\bold{m}_6$ to be recoverable. Therefore, $\Se{B}_2^2=\{v_4,v_6\}$, $d_{2,2}=d_{2,1}+\gamma_{c}$, $\lambda_{2,2;\varnothing}=d_{2,1}$. 
\end{exam}

The cooperation matrix adopts a partition of non-zero, non-one elements into groups where each of them forms a cycle (see \Cref{exam: exam3}). Suppose there are $T$ cycles. Represent each cycle with index $t\in \MB{T}$ by a tuple $C_t=(X_t,Y_t,\{X_{t;j}\}_{j\in Y_t},\{Y_{t;i}\}_{i\in X_t},g_t,({\ell}_{t;j})_{j\in Y_{t}})$, where $X_t$ and $Y_t$ denote the sets containing indices of the rows and the columns of the cycle, respectively. Let $X_{t;j}=\{i_1,i_2\}$ for $j\in Y_t$, where $(i_1,j),(i_2,j)$ are the vertices of the cycle $C_t$ with column index $j$. Let $Y_{t;i}=\{j_1,j_2\}$ for $i\in X_t$, where $(i,j_1),(i,j_2)$ are the vertices of the cycle $C_t$ with row index $i$. Let $g_t$ denote a group number assigned to the cycle $C_t$, which will be explained shortly. Observe that any two vertices of a cycle that share the same column have the same cooperation level. Let ${\ell}_{t;j}$ denote the number representing the cooperation level assigned to the vertices $(i_1,j),(i_2,j)$ of the cycle $C_t$ where $X_{t;j}=\{i_1,i_2\}$. Suppose values in $(g_t)_{t\in\MB{T}}$ span all the values in $\MB{A}$, for some $A\in\Db{N}$. For any $g\in \MB{A}$, denote the set containing all $t$ such that $g_t=g$ by $T_g$.

For example, let $t=1$ for the blue cycle at the bottom left panel of the matrices in Fig.~\ref{fig: CoopMatrix}. Then, the cycle $C_1$ is represented by $(\{10,11,12\},\{4,5,6\},\{X_{1;j}\}_{j=4}^6,\{Y_{1;i}\}_{i=10}^{12},1,({\ell}_{1;j})_{j=4}^6)$, where $X_{1;4}=\{10,11\}$, $X_{1;5}=\{11,12\}$, $X_{1;6}=\{10,12\}$, $Y_{1;10}=\{4,6\}$, $Y_{1;11}=\{4,5\}$, $Y_{1;12}=\{5,6\}$, and ${\ell}_{1;4}={\ell}_{1;5}={\ell}_{1;6}=3$.

\begin{figure}
\centering
\includegraphics[width=0.5\textwidth]{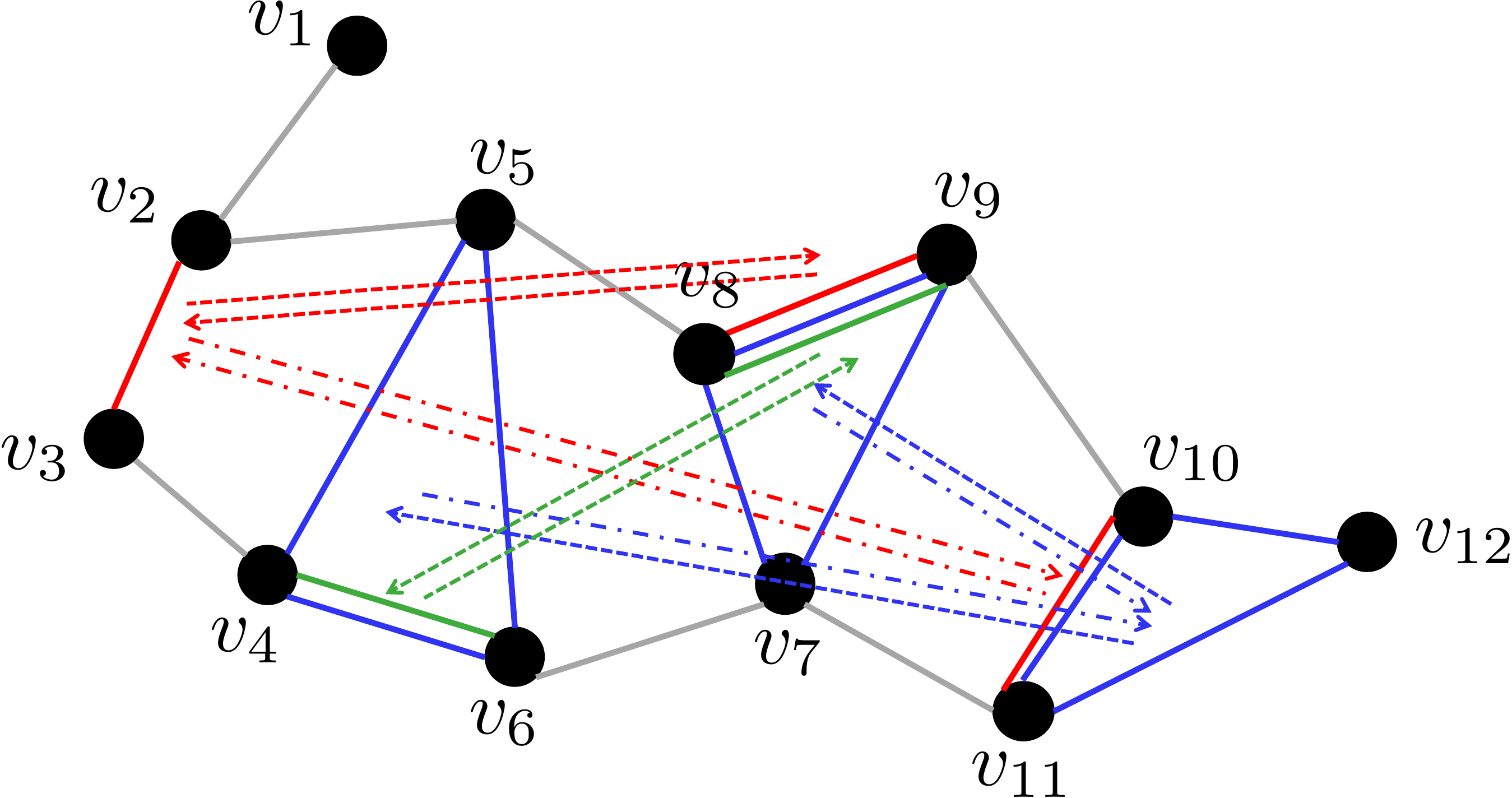}
\caption{Cooperation graph of \Cref{exam: exam3}. Dashed directed edges are used for cooperation level $2$, and dash-dotted directed edges are used for cooperation level $3$.}
\label{fig: figMHP}
\end{figure}

Observe that cycle $C_t$, $t\in\MB{T}$, in Fig.~\ref{fig: CoopMatrix} essentially represents a cycle in the complementary graph\footnote{The complementary graph of a graph $G(V,E)$ consists of all nodes in $V$ and all edges that are not in $E$.}. $\bar{G}$ of $G$ on $V$, since each vertex $(i,j)$ on the cycle implies that $\bold{D}_{i,j}\neq 1$, i.e., there is no edge connecting $v_i$ and $v_j$ in $G$, and there is an edge $(i,j)$ in the complementary graph $\bar{G}$. Cycle $C_t$ can also be interpreted as a pair of non-adjacent edges or non-overlapping triangles in $G$ with vertices from $X_t$ and $Y_t$. We mark $X_t$, $Y_t$, add an edge with arrow and \textcolor{lara}{the} label $g_t$ that points from $X_t$ to $Y_t$ for each $t\in \MB{T}$, and obtain the so-called \textbf{cooperation graph}\footnote{Note that there are two types of edges in the graph. The edges of one type connect nodes that are directly connected in their $1$-st level cooperation, and these are solid edges in the graph. The edges of the other type point between two groups of nodes (edges or triangles) that are adjacent in their higher level cooperation, and these are dashed arrows in the graph. For simplicity, ``connected'' here is used with the meaning of ``adjacent''.}. The cooperation graph for the coding scheme in \Cref{exam: exam3} is shown in Fig.~\ref{fig: figMHP}.

From the aforementioned description, a cooperation graph does not necessarily lead to a unique cooperation matrix, since the latter requires not only to further specify the associated set of cooperation levels, but also to identify the associated local matching graph, to be defined soon, for each cycle $C_t$ if $X_t$ has more than two nodes. For example, Fig.~\ref{fig: matching} presents three out of a total of six possible ways to specify $\{X_{t;j}\}_{j\in Y_t}$ and $\{Y_{t;i}\}_{i\in X_t}$ for a cooperation cycle $C_t$ with $X_t=\{i_1,i_2,i_3\}$ and $Y_t=\{j_1,j_2,j_3\}$. As indicated by Fig.~\ref{fig: matching}, each specified cycle of the three is uniquely represented by the set $\{(i,j)\}_{i\in X_t,Y_t\setminus Y_{t;i}=\{j\}}$ of edges that are not on the cycle. \textcolor{lara}{In these three graphs, if a double-dashed blue line with bidirectional arrows connects node $v_{i_x}$ to node $v_{j_y}$, this means that $v_{i_x}$ cooperates with the two nodes in $Y_t\setminus\{v_{j_y}\}$ in the cooperation involving vertices on the two triangles containing $v_{j_x}$, $v_{i_y}$, respectively. We} refer to the resulting graph as a \textbf{matching graph} corresponding to a cycle. If there is more than one cycle involved in the matching graph, we call that graph a \textbf{local matching graph}. If cycles in the local matching graph are all the cycles of a cycle group, we call that graph an \textbf{isolated local matching graph} and will discuss it in detail in Subsection~\ref{subsec construction over compatible graphs}.

\begin{figure}
\centering
\includegraphics[width=0.95\textwidth]{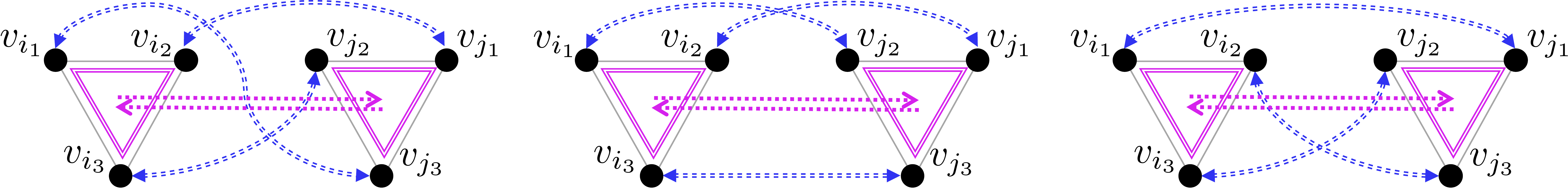}
\includegraphics[width=0.94\textwidth]{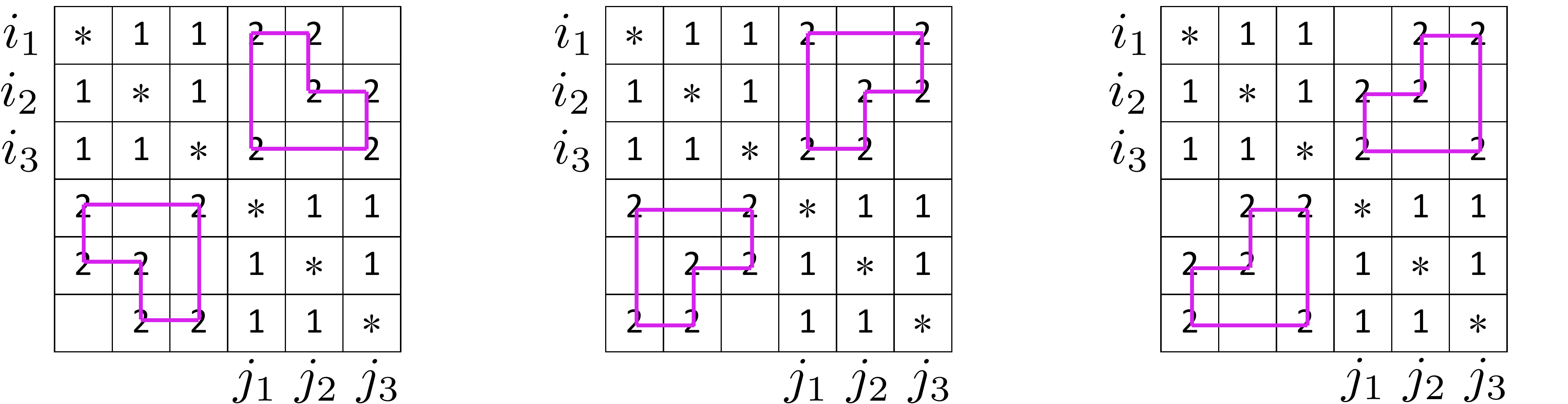}
\caption{Possible local matching graphs and their corresponding local cooperation matrices contained in a multi-level cooperation graph between $6$ nodes.}
\label{fig: matching}
\end{figure}

Observe that although in \Cref{exam: exam3}, the cooperation levels $({\ell}_{t;j})_{t\in\MB{T},j\in Y_t}$ specified for all the nodes on any cycle $C_t$, $t\in\MB{T}$, are identical, \textcolor{lara}{this case} is not a necessary condition. We present \textcolor{lara}{an} example in \Cref{exam: symcoop}, in which cooperation levels for nodes on the same cycle can be different. This example provides intuition both in deciding conditions that ensure a graph to be a cooperation graph, and in how to assign cooperation levels to such a cooperation graph if it exists.

In Subsection~\ref{subsec construction over compatible graphs}, we introduce the method of assigning cooperation levels over a given cooperation graph to obtain a so-called \textbf{compatible graph}. The algorithm to find a cooperation graph over a DSN $G(V,E)$ with a given topology is described in Subsection~\ref{subsec topology adaptivity}. 

\begin{figure}
\centering
\includegraphics[width=0.95\textwidth]{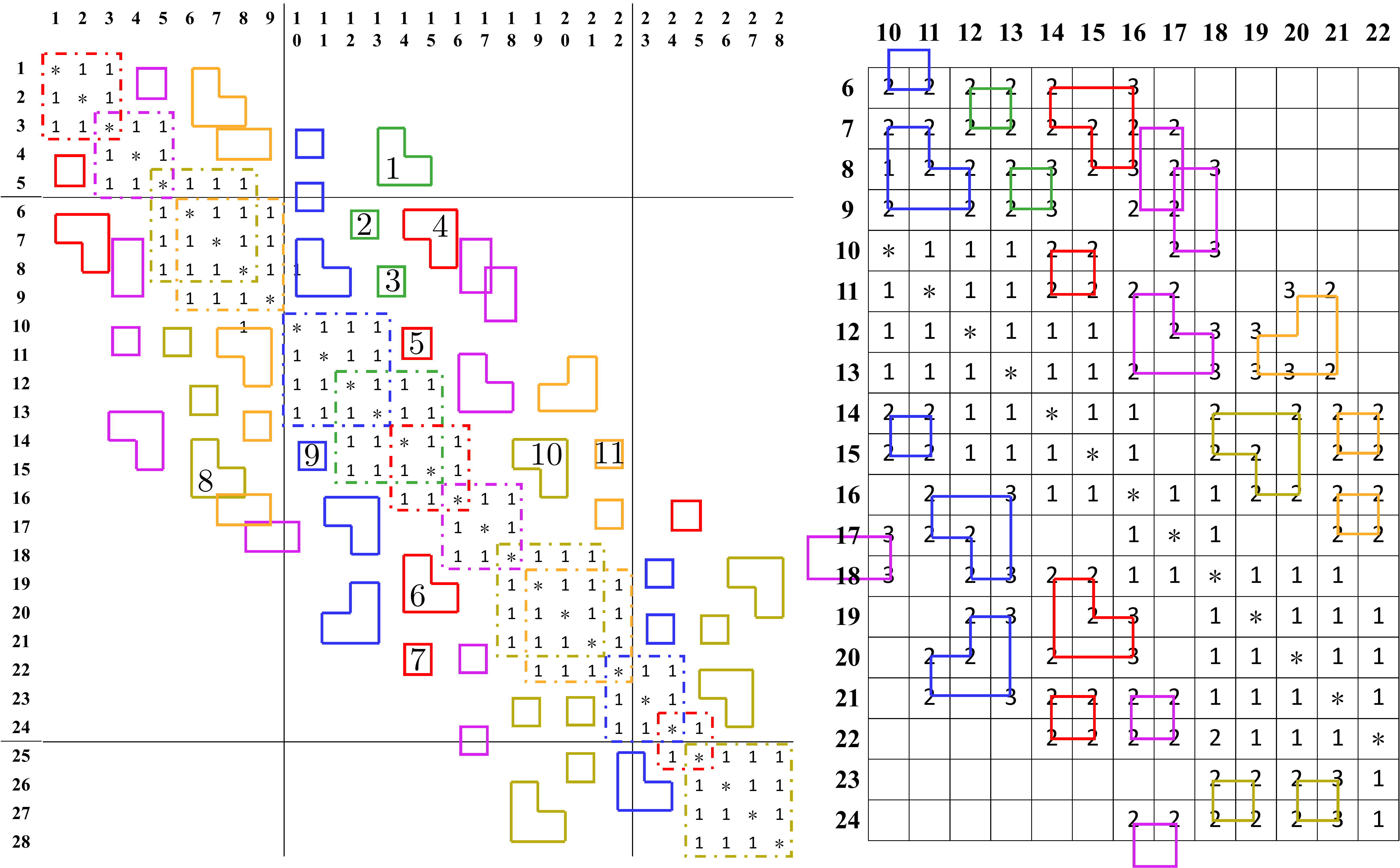}
\caption{Cycles and the assignment for part of the cooperation matrix on the compatible graph of a symmetric cooperation.}
\label{fig: symcompatiblegraph}
\end{figure}

\begin{exam} \label{exam: symcoop} The left panel of Fig.~\ref{fig: symcompatiblegraph} presents the cycle representation of a subgraph of a compatible graph on a DSN. Denote nodes associated with the left-most column to the right-most column by $v_1$ to $v_{28}$ in order, and let Fig.~\ref{fig: coopgraph} represent the subgraph containing nodes $\{v_i\}_{5\leq i\leq 22}$ of the cooperation graph. The right panel of Fig.~\ref{fig: symcompatiblegraph} denotes the cooperation matrix of $\{v_i\}_{10\leq i\leq 22}$. 

Note that some sub-matrices of the cooperation matrix are marked in dashed colored rectangles; these sub-matrices are all square matrices and have all non-diagonal entries being ones. For any  such rectangle, there are cycles marked in the same color (as the rectangle) that are totally contained within the columns spanned by this rectangle; these cycles are assigned a unique group number to form a group as specified in the previous subsection. In Fig.~\ref{fig: coopgraph}, instead of writing the group number assigned to each cycle, we mark the arrow connecting nodes representing the row and column indices of the cycle with a specified color for simplicity. Moreover, each one of those dashed rectangles corresponds to a maximum clique in $G(V,E)$ that denotes the DSN.

One can easily observe that cooperation levels assigned to entries in different columns within the same cycle are not always identical. Cooperation levels assigned to two nodes within the same column are identical if and only if they are on the same cycle or they are on different cycles from the same group.

Given all the aforementioned conditions, repeat steps 2)--5) specified in \Cref{exam: exam3} to obtain a generator matrix of a cooperative coding scheme on the DSN in this example (\Cref{exam: symcoop}). Then, for any node, the cross parities resulting from each cycle group can be derived from accessing other nodes in the maximum clique that contains this group. That is to say, by communicating with all the single-level neighbors, the cross parities for each cooperation level of any node $v_i\in V$ are computable and can then be subtracted from the parity part of codeword $\mathbf{c}_i$. 

Take node $v_{14}$ as an example. The column representing $v_{14}$ intersects with green cycles $C_1$ and $C_3$, and red cycles $C_4$, $C_5$, $C_6$, and $C_7$, corresponding to the $3$-rd and the $2$-nd level cooperation, respectively. \textcolor{lara}{All} the nodes in the green clique $\{v_{12},v_{13},v_{14},v_{15}\}$ except for $v_{14}$ itself are locally-recoverable. Thus, their cross parities, resulting from the cooperation the green cycles represent, are computable, and they sum up to the $3$-rd level cross parity of $\mathbf{c}_{14}$. Similarly, the $2$-nd level cross parity of $\mathbf{c}_{14}$ can also be derived if the other two nodes in the red clique $\{v_{14},v_{15},v_{16}\}$ are locally-recoverable.

More details, including the code construction, are given in Subsection~\ref{subsec construction over compatible graphs} and Subsection~\ref{subsec topology adaptivity}.
\end{exam}

\begin{figure}
\centering
\includegraphics[width=0.85\textwidth]{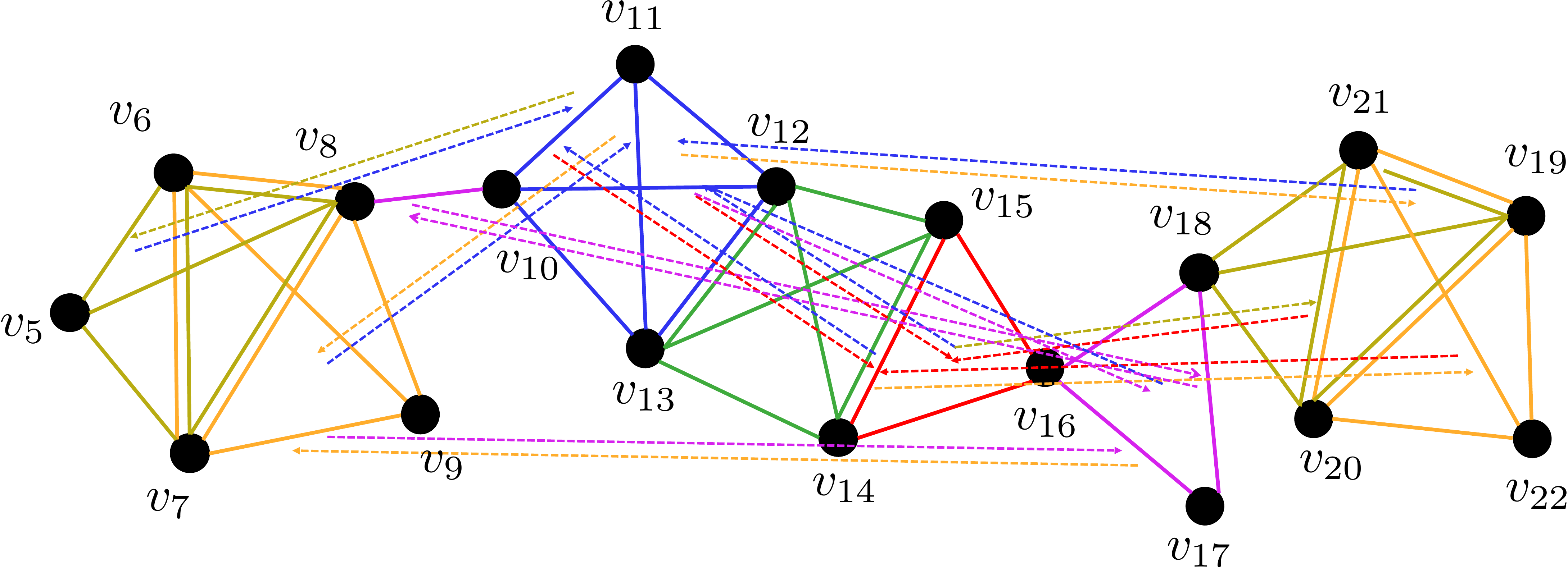}
\caption{Cooperation graph of \Cref{exam: symcoop}. For each node, all maximum cliques containing it are marked with different colors. For any dashed arrow pointing from an edge (or a triangle) to another edge (or another triangle), with the color identical to that of the maximum clique containing the latter one: it represents a cycle in the cooperation graph that is contained in the columns spanned by this maximum clique.}
\label{fig: coopgraph}
\end{figure}

Moreover, although \Cref{exam: exam3} has a topologically symmetric cooperation graph and also a topologically symmetric compatible graph, it is not necessary \textcolor{lara}{in principle} to constrain them to be symmetric. In the case where asymmetric cooperation is allowed, the basic components of the cooperation graph are edges instead of cycles, which allows more flexibility in choosing the cooperation graph. However, this asymmetry increases the complexity of defining the decoding graph for a node. Therefore, for simplicity, we only discuss topologically symmetric cooperations in this paper.

\subsection{Construction over Compatible Graphs}
\label{subsec construction over compatible graphs}

We have defined the notion of cooperation graphs in \Cref{subsec cooperation graphs}. Observe that the cooperation graphs shown in Fig.~\ref{fig: figMHP} and Fig.~\ref{fig: coopgraph} satisfy a set of conditions that define the so-called \textbf{compatible graph}. We show in \Cref{theo: ECcons2} the existence of a hierarchical coding scheme with cooperation graph $\Se{G}$ if $\Se{G}$ is a compatible graph. The coding scheme is presented in \Cref{cons: 2}.

\begin{defi}\label{defi: maximum clique} For any graph $G(V,E)$ with $|V|=p$, a subgraph $G'(V',E')$ is called a \textbf{maximum clique} of $G$ if any two nodes in $V'$ are connected, and there does not exist any node in $V\setminus V'$ that is connected to all nodes in $V'$. The set of all maximum cliques of $G$ is referred to as the \textbf{collection of maximum cliques} over $G$ and is denoted by $\Se{S}(V,E)$. Each maximum clique in $\Se{S}$ is represented by a subset $S$ of $\MB{p}$, where $S$ consists of the indices of all nodes in the maximum clique.
\end{defi}

\Cref{tab: notationtable} summarizes some notation associated with cooperation graphs that are used throughout the remainder of the paper. Take the DSN and its cooperation matrix shown in Fig.~\ref{fig: symcompatiblegraph} as an example. Observe that the green cycles in the columns spanned by the maximum clique $\{12,13,14,15\}$ are indexed by $1$, $2$, and $3$, and the red cycles in the columns spanned by the maximum clique $\{14,15,16\}$ are indexed by $4$, $5$, $6$, and $7$. These green and red cycles have group numbers $1$ and $2$, respectively. Suppose those cycles corresponding to the $2$-nd level cooperation of $v_{14}$ with top edges in the row representing $v_{14}$ are labeled with $8$, $9$, $10$, and $11$. Note that these cycles and cycles $C_4$, $C_5$, $C_6$, and $C_7$ are symmetric with respect to the diagonal. Then, $T_1=\{1,2,3\}$, $T_2=\{4,5,6,7\}$; $A_{14}=\{1,2\}$; $S(1)=\{12,13,14,15\}$, $S(2)=\{14,15,16\}$; $U_{14;1}=V_{14;3}=\{3,4,8,9\}$; $U_{14;2}=V_{14;2}=\{6,7,10,11,18,20,21,22\}$; $R_{14;2}=\{4,5,6,7\}$; $T_{14;2}=\{8,9,10,11\}$; ${\ell}_{5;14}=2$; $\Se{M}_{14}=\Se{N}_{14}=\{v_{12},v_{13},v_{15},v_{16}\}$. \Cref{defi: consistent graph} formally defines the sufficient conditions that result in a compatible graph, which were discussed informally in \Cref{exam: symcoop}. Note that this definition of compatible graphs is more general than that presented in the short version of the paper \cite{Yang2020DSN}, since the cooperation levels of nodes on different columns of cycles are allowed to be different here.

\begin{table}
\centering
\caption{Notation associated with cooperation graphs.}
\begin{tabular}{|c|c|}
\hline
Notation & Physical Interpretation\\
\hline
$A$ & The total number of different cycle groups\\
\hline
$T$ & The total number of different cycles\\
\hline
$T_{g}$ & The set consisting of indices of cycles with group number $g$\\
\hline
$A_i$ & The set of group numbers of those cycle groups that intersect with the column representing $v_i$\\
\hline
$S(g)$ & The set consisting of vertices of the maximum clique that contains all columns spanned by cycles in group $g$\\
\hline
$U_{j;g}$ & The intersection of all cycles contained in group $g$ with the column representing $v_j$\\
\hline
$R_{j;{\ell}}$ & The set consisting of indices of cycles that intersect with the column representing node $v_j$ at its $\ell$-th level cooperation \\
\hline
$V_{j;{\ell}}$ & The intersection of all cycles contained in $R_{j;{\ell}}$ with the column representing $v_j$\\
\hline
$T_{i;{\ell}}$ & The set consisting of indices of cycles that intersect with the row representing node $v_i$ at its $\ell$-th level cooperation\\
& (cycles with labels in $R_{j;{\ell}}$ and $T_{i;{\ell}}$ are symmetric with respective to the diagonal) \\
\hline
${\ell}_{t;j}$ & The cooperation level of node $v_j$ in cycle $C_t$\\
\hline
$\gamma_{i;t}$ & The maximum number of parity symbols that nodes in cycle $C_t$ can provide to node $v_i$\\
 \hline
 $\Se{M}_i$ & The set of nodes in the $1$-st level cooperation graph of $v_i$\\
 \hline
 $\Se{N}_i$ & The set of nodes in the neighborhood of $v_i$\\
\hline
\end{tabular}
\label{tab: notationtable}
\end{table}

\begin{defi}\label{defi: consistent graph} Let $\Se{G}$ be a cooperation graph on $G(V,E)$, where $\Se{G}$ is represented by $\{C_t=(X_t,Y_t,\{X_{t;j}\}_{j\in Y_t},\{Y_{t;i}\}_{i\in X_t},g_t,({\ell}_{t;j})_{j\in Y_{t}})\}_{t\in \MB{T}}$. Suppose node $v_i\in V$ has $L_i$ cooperation levels. We call $\Se{G}$ a \textbf{compatible graph} on $G$ if the following conditions are satisfied: 
\begin{enumerate}
\item For any $v_i\in V$, $\Se{M}_i\subseteq \Se{N}_i$.
\item All cycles $C_t$ with $t\in \MB{T}$ are disjoint.
\item For any $g\in\MB{A}$, there exists a maximum clique $S(g)\in\Se{S}(V,E)$ such that for all $t \in T_g$, $Y_t\subseteq S(g)$.
\item For each $v_i\in V$, ${\ell}\in\MB{L_i}$, there exists a unique $g\in A_i$, such that $V_{i;{\ell}}=U_{i;g}$; denote $g$ by $g(i;{\ell})$. 
\end{enumerate}

\end{defi}

\begin{cons}\label{cons: 2} Let $G(V,E)$ represent a DSN with parameters $(\bold{n},\bold{k},\bold{r})$. Suppose $\Se{G}$ is a compatible graph on $G$, with parameters $\{C_t=(X_t,Y_t,\{X_{t;j}\}_{j\in Y_t},\{Y_{t;i}\}_{i\in X_t},g_t,({\ell}_{t;j})_{j\in Y_{t}})\}_{t\in \MB{T}}$. Suppose node $v_i\in V$ has $L_i$ cooperation levels. 

Let $\boldsymbol{\delta}$ be the $1$-st level cooperation parameter. For any $v_i\in V$ and $g\in T_{i;{\ell}}$, assign a cooperation parameter $\gamma_{i;t}\in\Db{N}$ to the cooperation between node $v_i$ and nodes in $Y_{t;i}$. Let $\eta_{j;{\ell}}=\max\nolimits_{i\in V_{j;{\ell}}} \gamma_{i;t}$, for ${\ell}\in \MB{L_i}$.

Let $u_{i}=k_{i}+\delta_{i}+\sum\nolimits_{{\ell}=2}^{L_i}\eta_{i;{\ell}}$, $v_{i}=r_{i}+\sum\nolimits_{v_j\in \Se{M}_i}\delta_{j}+\sum\nolimits_{2\leq {\ell} \leq L_i,t\in T_{i;{\ell}}} \gamma_{i;t}$, for $i\in\MB{p}$. For each $i\in\MB{p}$, let $a_{i,s}$, $s\in \MB{u_{i}}$, and $b_{i,t}$, $t\in\MB{v_{i}}$, be distinct elements of $\textup{GF}(q)$, where $q\geq \max\nolimits_{i\in\MB{p}}\LB{u_i+v_i}$.

Matrix $\bold{G}$ in (\ref{eqn: GenMatDL}) is assembled as follows. Consider the Cauchy matrix $\bold{T}_{i}$ on $\textup{GF}(q)^{u_{i}\times v_{i}}$ such that $\bold{T}_{i}=\bold{Y}(a_{i,1},\dots,a_{i,u_{i}}; b_{i,1},\dots,b_{i,v_{i}})$, for $i\in\MB{p}$. Then, we obtain $\bold{A}_{i,i}$, $\bold{B}_{i,j}$, $\bold{E}_{i;{\ell}}$, $\bold{U}_{i}$, $\bold{V}_{i;{\ell}}$, for $i\in\MB{p}$, $j\in\MB{p}\setminus\LB{i}$, ${\ell}\in\MB{L_i}$, according to the following partition of $\bold{T}_{i}$:

\begin{equation}\label{eqn: CRSHL}
\bold{T}_{i}=\left[
\begin{array}{c|c}
\bold{A}_{i,i} & \begin{array}{c|c|c|c}
\bold{B}_{i} & \bold{E}_{i;2} & \dots & \bold{E}_{i;L_i}
\end{array}
\\
\hline
\begin{array}{c}\bold{U}_{i}\\
\hline 
\bold{V}_{i;2}\\
\hline
\vdots\\
\hline
\bold{V}_{i;L_i}\end{array}& \bold{Z}_{i}
\end{array}\right],
\end{equation} 
\begin{equation}
\textit{where } \text{ } \bold{B}_{i}=\left[\begin{array}{c|c|c}\bold{B}_{i,j_1} & \dots & \bold{B}_{i,j_{|\Se{M}_i|}}
\end{array}
\right],
\end{equation}
\begin{equation}
\textit{and  } \text{ } \bold{E}_{i;{\ell}}=\left[\begin{array}{c|c|c}\bold{E}_{i;{\ell};t_1} & \dots & \bold{E}_{i;{\ell};t_{|B_{i;{\ell}}|}}
\end{array}
\right], 
\end{equation}
such that $\Se{M}_i=\{v_{j_{1}},v_{j_{2}},\dots,v_{j_{|\Se{M}_i|}}\}$, $T_{i;{\ell}}=\{t_{1},t_{2},\dots,t_{|T_{i;{\ell}}|}\}$, $\bold{A}_{i,i}\in \textup{GF}(q)^{k_{i}\times r_{i}}$, $\bold{U}_{i}\in \textup{GF}(q)^{\delta_{i}\times r_i}$, $\bold{V}_{i;{\ell}}\in \textup{GF}(q)^{\eta_{i;{\ell}}\times r_i}$, $\bold{B}_{i,j}\in \textup{GF}(q)^{k_{i}\times \delta_{j}}$ for all $v_{j}\in\Se{M}^1_i$, and $\bold{E}_{i;{\ell};t}\in \textup{GF}(q)^{k_{i}\times \gamma_{i;t}}$. Let $\bold{B}_{i,j}=\left[ \bold{E}_{i;{\ell};t},\bold{0}_{k_i\times(\eta_{j;{\ell}}-\gamma_{i;t})}\right]$, and $\bold{A}_{i,j}=\bold{B}_{i,j}\bold{V}_{j;{\ell}}$, for all $j\in Y_{t;i}$, $t\in T_{i;{\ell}}$. Let $\bold{A}_{i,j}=\bold{B}_{i,j}\bold{U}_j$, for $v_j\in \Se{M}_i$; otherwise $\bold{A}_{i,j}=\bold{0}_{k_i\times r_i}$. Substitute the components of $\bold{G}$ in (\ref{eqn: GenMatDL}).

Let $\Se{C}_2$ represent the code with generator matrix $\bold{G}$.

\end{cons}

\begin{theo} \label{theo: ECcons2} The code $\Se{C}_2$ has EC hierarchies $\bold{d}_i=(d_{i,0},d_{i,1},\dots,d_{i,L_i})$, for all $v_i\in V$, where $d_{i,0}=r_i-\delta_i-\sum\nolimits_{{\ell}=2}^{L_i}\eta_{i;{\ell}}$, $d_{i,1}=r_i+\sum\nolimits_{v_j\in \Se{M}_i}\delta_j$, and $d_{i,\ell}=r_{i}+\sum\nolimits_{v_j\in \Se{M}_i}\delta_{j}+\sum\nolimits_{2\leq {\ell}' \leq {\ell} ,t\in T_{i;{\ell}'}} \gamma_{i;t}$. Moreover, $\Se{I}_i^1=\Se{M}_i$, $\Se{B}_i^1=\bigcup\nolimits_{v_j\in\Se{M}_i}\left(\Se{M}_j\setminus(\{v_i\}\cup \Se{M}_i)\right)$. For $2\leq {\ell} \leq L_i$, $\Se{I}^{\ell}_i=\bigcup\nolimits_{t\in R_{i;{\ell}}} \{v_j:j\in X_{t;i}\}{=\{v_j:j\in V_{i;{\ell}}\}}$, $\Se{B}_i^{\ell}=\bigcup\nolimits_{v_j\in\Se{I}^{\ell}_i} \left(\Se{I}^{\ell}_j\setminus (\{v_i\}\cup\Se{A}_i^{\ell})\right)$ (recall $\Se{A}_i^{\ell}=\bigcup\nolimits_{{\ell}'\leq l}\Se{I}^{{\ell}'}_i$), $\lambda_{i,{\ell};\Se{W}}=r_i+\sum\nolimits_{j:v_j\in \Se{M}_i,(\Se{M}_j\setminus\{v_i\})\subseteq (\Se{M}_i\cup\Se{W})}\delta_j+\sum\nolimits_{\substack{(i,t):2\leq {\ell}'\leq {\ell},t\in T_{i;{\ell}}, Y_{t;i}=\{j,j'\}, \\ \Se{I}_{j}^{{\ell}_{t;j}}\setminus\Se{A}_{i}^{{\ell}'}\subseteq (\{v_i\}\cup\Se{W})\allowbreak\text{ or }\Se{I}_{j'}^{{\ell}_{t;j'}}\setminus\Se{A}_{i}^{{\ell}'}\subseteq (\{v_i\}\cup\Se{W})}}\gamma_{i;t}$, $\varnothing\subseteq \Se{W}\subseteq \Se{B}_i^{\ell}$.
\end{theo}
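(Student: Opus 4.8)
The plan is to run the same two-ingredient argument used for \Cref{lemma: DLcodedis} and \Cref{theo: ECcons1}: every assertion in the statement has the conditional form ``if every node in a prescribed set has already recovered its message, then $v_i$ tolerates a prescribed number of erasures'', so it is enough, for each such set, to exhibit enough parity symbols of $\bold{m}_i$ and then invoke \Cref{lemma: Good matrix} on $\bold{T}_i$ or on one of its sub-matrices (which is again Cauchy, since a column restriction of a Cauchy matrix is Cauchy). For the local capability $d_{i,0}$ I would set $\bold{y}_i=\sum_{v_j\in\Se{M}_i}\bold{m}_j\bold{B}_{j,i}$ and, for $2\le{\ell}\le L_i$, let $\bold{z}_{i;{\ell}}$ be the level-${\ell}$ helper vector that collects the terms $\bold{m}_j\bold{E}_{j;\cdot;t}$ over the cycles $t$ meeting column $i$ at cooperation level ${\ell}$ (each zero-padded to length $\eta_{i;{\ell}}$), so that the cross-parity contribution to the parity part of $\bold{c}_i$ equals $\bold{y}_i\bold{U}_i+\sum_{{\ell}\ge2}\bold{z}_{i;{\ell}}\bold{V}_{i;{\ell}}$. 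Then $\tilde{\bold{c}}_i=(\bold{c}_i,\bold{y}_i,\bold{z}_{i;2},\dots,\bold{z}_{i;L_i})$ satisfies $\bold{H}\tilde{\bold{c}}_i^{\Tx{T}}=\bold{0}$ with $\bold{H}=[\bold{A}_{i,i};-\bold{I}_{r_i};\bold{U}_i;\bold{V}_{i;2};\dots;\bold{V}_{i;L_i}]^{\Tx{T}}$; since stacking $\bold{A}_{i,i}$ onto the $\bold{U}$- and $\bold{V}$-blocks reproduces the first column-block of $\bold{T}_i$, \Cref{lemma: Good matrix} gives that $\tilde{\bold{c}}_i$ has minimum distance $r_i+1$, and regarding the $\delta_i+\sum_{{\ell}\ge2}\eta_{i;{\ell}}$ appended coordinates as erasures leaves $d_{i,0}=r_i-\delta_i-\sum_{{\ell}=2}^{L_i}\eta_{i;{\ell}}$ correctable erasures inside $\bold{c}_i$.

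The identities $\Se{I}_i^1=\Se{A}_i^1=\Se{M}_i$ and $\Se{B}_i^1=\bigcup_{v_j\in\Se{M}_i}(\Se{M}_j\setminus(\{v_i\}\cup\Se{M}_i))$ are forced by the depth-$1$ conventions of \Cref{subsec EC hierarchy}. The only point beyond \Cref{theo: ECcons1} is that recovering all of $\Se{M}_i$ still suffices to strip \emph{every} cross parity, including the higher-level terms $\bold{m}_j\bold{E}_{j;\cdot;t}\bold{V}_{i;{\ell}}$, from the parity part of $\bold{c}_i$: by compatibility condition~3 the whole column set of any cycle meeting column $i$ lies in a single maximum clique containing $v_i$, whose other members are therefore neighbours of $v_i$ (hence in $\Se{M}_i$ by condition~1), and once those neighbours are locally decoded the linear independence of the rows of $\bold{A}_{j,j}$, $\bold{U}_j$, $\{\bold{V}_{j;\cdot}\}$ (valid because $\bold{T}_j$ is Cauchy) lets each of them report the corresponding cycle sum, exactly as in \Cref{exam: exam3}. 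After this clean-up the $\delta_j$-counting of \Cref{theo: ECcons1} applies verbatim and produces $\lambda_{i,1;\Se{W}}=r_i+\sum_{j:\,v_j\in\Se{M}_i,\,(\Se{M}_j\setminus\{v_i\})\subseteq(\Se{M}_i\cup\Se{W})}\delta_j$, maximised at $d_{i,1}=r_i+\sum_{v_j\in\Se{M}_i}\delta_j$ when $\Se{W}=\Se{B}_i^1$.

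For ${\ell}\ge2$ I would induct on ${\ell}$. After the level-$1$ clean-up the parity part of $\bold{c}_i$ is $\bold{m}_i\bold{A}_{i,i}$, and those symbols of $\bold{m}_i\bold{B}_{i,j}$ ($v_j\in\Se{M}_i$) and of $\bold{m}_i\bold{E}_{i;\cdot;t}$ ($t\in T_{i;{\ell}'}$, $2\le{\ell}'\le{\ell}$) that have become available act as extra parity symbols of $\bold{m}_i$; crucially, $\bold{B}_i$ and the blocks $\bold{E}_{i;{\ell}'}$ occupy \emph{disjoint} column ranges of the top block of $\bold{T}_i$ — which is exactly what cycle-disjointness (condition~2) and uniqueness of the group $g(i;{\ell}')$ (condition~4) guarantee — so the effective parity-check matrix of $\bold{c}_i$ is $[\,[\bold{A}_{i,i}\mid\bold{B}_i^{\mathrm{avail}}\mid\bold{E}^{\mathrm{avail}}];\,-\bold{I}\,]^{\Tx{T}}$ with a \emph{Cauchy} top block, and \Cref{lemma: Good matrix} yields correction of $r_i+(\text{number of available extra parities})$ erasures in $\bold{c}_i$. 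It remains to decide when $\bold{m}_i\bold{E}_{i;\cdot;t}$, for $t\in T_{i;{\ell}'}$ with $Y_{t;i}=\{j,j'\}$, becomes available: inside the parity part of $\bold{c}_j$ this vector is bundled only with $\bold{m}_{i''}\bold{E}_{i'';\cdot;t}$ for the single other row $i''$ with $(i'',j)$ on the cycle, so once $v_j$ — which lies in $\Se{I}_i^{{\ell}'}\subseteq\Se{A}_i^{{\ell}}$ by the symmetry of the cooperation cycle — reports its sum and the remainder of that bundle, namely $\Se{I}_j^{{\ell}_{t;j}}\setminus\Se{A}_i^{{\ell}'}$, is recovered, $v_i$ isolates $\bold{m}_i\bold{E}_{i;\cdot;t}$; the same works through $j'$. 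Unioning over $t\in T_{i;{\ell}}$ and over both columns of each cycle gives $\Se{B}_i^{\ell}=\bigcup_{v_j\in\Se{I}_i^{\ell}}(\Se{I}_j^{\ell}\setminus(\{v_i\}\cup\Se{A}_i^{\ell}))$, the ``$j$ or $j'$'' disjunction gives the summation condition in $\lambda_{i,{\ell};\Se{W}}$, and feeding the total parity count into \Cref{lemma: Good matrix} produces the stated $\lambda_{i,{\ell};\Se{W}}$, with maximum $d_{i,{\ell}}$ at $\Se{W}=\Se{B}_i^{\ell}$. Well-foundedness of the recursion is automatic since $\Se{A}_i^1\subsetneq\dots\subsetneq\Se{A}_i^{L_i}$ and $V$ is finite.

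The delicate step is this last one: pinning down $\Se{B}_i^{\ell}$ and the disjunctive condition in $\lambda_{i,{\ell};\Se{W}}$ exactly right now that the cooperation levels ${\ell}_{t;j}$ are allowed to differ across the columns of a single cycle (the generalisation over \cite{Yang2020DSN}), and justifying the ``freshness'' of the extra $\gamma_{i;t}$ parity symbols — which is precisely where compatibility conditions~2 and~4 are used to keep the pertinent sub-matrix of $\bold{T}_i$ Cauchy so that \Cref{lemma: Good matrix} stays applicable.
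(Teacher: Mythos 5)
Your proposal follows essentially the same route as the paper: you define per-level cross-parity vectors (your $\bold{y}_i$, $\bold{z}_{i;{\ell}}$ are the paper's $\bold{s}_{i;1}$, $\bold{s}_{i;{\ell}}$), invoke the Cauchy structure of $\bold{T}_i$ together with \Cref{lemma: Good matrix} to count correctable erasures at each level, use the group-wise zero-sum identity to show that single-hop access to $\Se{M}_i$ strips every cross parity (including the higher-level ones), and reduce the higher-level counts to the disjunctive condition on $Y_{t;i}=\{j,j'\}$, exactly as the paper does. One small slip worth noting: the parenthetical ``hence in $\Se{M}_i$ by condition~1'' reverses the inclusion, since condition~1 only gives $\Se{M}_i\subseteq\Se{N}_i$, so (as the paper also tacitly does) you are assuming $S(g)\setminus\{i\}\subseteq\Se{M}_i$, which is guaranteed by \Cref{algo: search cooperation graph} but not by condition~1 alone; likewise your phrase ``bundled only with $\bold{m}_{i''}\bold{E}_{i'';\cdot;t}$ for the single other row'' contradicts your own (correct) identification of the remainder as $\Se{I}_j^{{\ell}_{t;j}}\setminus\Se{A}_i^{{\ell}'}$, since $\bold{s}_{j;{\ell}_{t;j}}$ sums over the whole group, not just $X_{t;j}$.
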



\begin{proof}
For any node $v_j\in V$, denote the cross parities of $v_j$ due to cooperation with nodes in $\Se{I}_j^{\ell}$ by $\bold{s}_{j;{\ell}}$, {${\ell}\in \MB{L_j}$.} \textcolor{lara}{The cross parities are} given by the following equation:
\begin{equation}\label{eqn: crossparity}
\bold{s}_{j;{\ell}}=\begin{cases}
\sum\nolimits_{v_k\in \Se{M}_j} \bold{m}_k\bold{B}_{k,j}, &{\ell}=1,\\
\sum\nolimits_{k\in V_{j;{\ell}}} \bold{m}_k\bold{B}_{k,j}, &2\leq {\ell}\leq L_j.
\end{cases}
\end{equation} 
Thus, the codeword stored at $v_j\in V$ can be expressed in the following form:
\begin{equation}\label{eqn: codeword}
\begin{split}
\bold{c}_{j}&=\bold{m}_{j}\bold{A}_{j,j}+\sum_{v_k\in\Se{M}_j} \bold{m}_k\bold{B}_{k,j}\bold{U}_{j}+\sum_{{\ell}=2}^{L_j}\sum_{k\in V_{j;{\ell}}} \bold{m}_k\bold{B}_{k,j}\bold{V}_{j;{\ell}}\\
&=\bold{m}_{j}\bold{A}_{j,j}+\bold{s}_{j;1}\bold{U}_j+\sum\nolimits_{{\ell}=2}^{L_j}\bold{s}_{j;{\ell}}\bold{V}_{j;{\ell}}.\\
\end{split}
\end{equation}
Provided that the rows of $\bold{A}_{j,j}$, $\bold{U}_j$, and $\{\bold{V}_{j;{\ell}}\}_{{\ell}=2}^{L_j}$  are linearly independent, $\bold{m}_j$ and $\{\bold{s}_{j;{\ell}}\}_{{\ell}\in\MB{L_j}}$ are all computable if $\bold{c}_j$ is locally-recoverable.

We first show that by communicating with all the neighboring nodes in the $1$-st level cooperation, the cross parities {$\{\bold{s}_{i;\ell}\}_{\ell\in\MB{L_i}}$} of any node $v_i\in V$ can be computed and removed from the parity part of the codeword stored at this node if all its neighbors are locally-recoverable. {Under this condition on the neighbors of $v_i$, calculating $\bold{s}_{i;1}$ is trivial. Next, we prove for $2\leq\ell\leq {L_i}$ that $\{\bold{s}_{i;\ell}\}$ can also be computed.}

Condition 4) in \Cref{defi: consistent graph} indicates that there exists a unique $g=g(i;{\ell})\in A_i$, such that the following equation holds:
\begin{equation}\label{eqn: crossparitygroup}
\bold{s}_{i;{\ell}}=\sum\nolimits_{j\in U_{i;g}} \bold{m}_j\bold{B}_{j,i}.
\end{equation}
Moreover, Condition 3) guarantees the existence of a maximum clique $S(g)\in \Se{S}(V,E)$ such that for all $t\in T_g$, $Y_t\subseteq S(g)$. Let $\beta_{i;g}=\max_{\{(i,\ell):g(i;{\ell})=g\}} \eta_{i;{\ell}}$, and $\bold{u}_{i;{\ell}}=\MB{\bold{s}_{i;{\ell}},\bold{0}_{\beta_{i;g}-\eta_{i;{\ell}}}}$, for all $i\in \MB{p}$, ${\ell}\in\MB{L_i}$. We now consider:
\begin{equation*}
\begin{split}
\sum_{(i,\ell):g(i;{\ell})=g} \bold{u}_{i;{\ell}}&=\sum_{t\in T_{g},i\in X_t, j\in Y_{t;i}}\MB{\bold{m}_i\bold{B}_{i,j},\bold{0}_{\beta_{i;g}-\eta_{i;{\ell}_{t;i}}}}\\&=\sum_{t\in T_{g},i\in X_t, j\in Y_{t;i}}\MB{\bold{m}_i\bold{E}_{i;{\ell}_{t;i};t},\bold{0}_{\beta_{i;g}-\gamma_{i;t}}}\\&=\sum_{t\in T_{g},i\in X_t}\bold{0}_{\beta_{i;g}}=\bold{0}_{\beta_{i;g}}.\\
\end{split}
\end{equation*}
It follows that $\bold{s}_{i;{\ell}}$ can be derived from $\bold{u}_{i;{\ell}}$ if all $\bold{s}_{j;{\ell}'}$ such that $g(j;{\ell}')=g$ are known. Condition 3) in \Cref{defi: consistent graph} implies that all these $j$'s belong to $S(g)$, and the set of nodes indexed by $S(g)$ is a subset of $\Se{M}_i$, which means that all the aforementioned $\bold{s}_{j;{\ell}'}$'s are computable given that the neighbors of $v_i$ are locally-recoverable.

We have proved that all the $\ell$-th level cross parities, ${\ell}\in\MB{L_i}$, of any node $v_i\in V$ can be computed if the neighboring nodes are locally-recoverable. Now, we move forward to calculate the EC hierarchies of each node. Observe that the local and the $1$-st level cooperation erasure correction capabilities are proved the same way they are proved for Theorem~\ref{theo: ECcons1}. Thus, we only consider cases where $2\leq {\ell} \leq L_i$ in the following graph.

Moreover, for $2\leq {\ell} \leq L_i$, any cycle $C_t$ with index $t\in T_{i;{\ell}}$ has \textcolor{lara}{a} potential to provide additional $\gamma_{i;t}$ in \textcolor{lara}{the} $\ell$-th level cross parities to $v_i$.  Therefore, $d_{i,\ell}=r_{i}+\sum\nolimits_{v_j\in \Se{M}_i}\delta_{j}+\sum\nolimits_{2\leq {\ell}' \leq {\ell} ,t\in T_{i;{\ell}'}} \gamma_{i;t}$. The term $\gamma_{i;t}$ is added to the EC capability if any one of the two nodes $v_j$ and $v_{j'}$, $Y_{t;i}=\{j,j'\}$, obtains its $\gamma_{i;t}$ cross parities at $v_i$ through its ${\ell}_{t;j}$-th or ${\ell}_{t;j'}$-th level cooperation. Namely, any one of $\bold{m}_i\bold{B}_{i,j}$ and $\bold{m}_i\bold{B}_{i,j'}$ provides the value of $\bold{m}_i\bold{E}_{i;{\ell}_{t;i},t}$, and thus provides extra $\gamma_{i;t}$ parity symbols to $v_i$. Provided that $\bold{s}_{j;{\ell}_{t;j}}$ can be computed if $v_{j}$ is locally-recoverable, one needs to know the cross parities from all the nodes in the set $\Se{I}_{j}^{{\ell}_{t;j}}\setminus\{v_i\}$ to obtain those extra $\gamma_{i;t}$ parity symbols, i.e., those nodes are locally-recoverable, which means $\Se{I}_j^{{\ell}_{t;j}}\setminus \Se{A}_i^{\ell} \subseteq \{v_i\}\cup\Se{W}$. Similarly, the condition of $v_{j'}$ successfully calculating these $\gamma_{i;t}$ cross parities at $v_i$ is described as $\Se{I}_{j'}^{{\ell}_{t;j'}}\setminus \Se{A}_i^{\ell} \subseteq \{v_i\}\cup\Se{W}$. Therefore, the overall requirement is stated as ``$\Se{I}_j^{{\ell}_{t;j}}\setminus \Se{A}_i^{\ell} \subseteq \{v_i\}\cup\Se{W}$ or $\Se{I}_{j'}^{{\ell}_{t;j'}}\setminus \Se{A}_i^{\ell} \subseteq \{v_i\}\cup\Se{W}$''. From this discussion, we reach that $\Se{B}_i^{\ell}=\bigcup\nolimits_{t\in T_{i;{\ell}},j\in Y_{t;i}}(\Se{I}^{{\ell}_{t;j}}_j\setminus (\{v_i\}\cup\Se{A}_i^{\ell}))$ and $\lambda_{i,{\ell};\Se{W}}=r_i+\sum\nolimits_{j:v_j\in \Se{M}_i,(\Se{M}_j\setminus\{v_i\})\subseteq (\Se{M}_i\cup\Se{W})}\delta_j+\sum\nolimits_{\substack{(i,t):2\leq {\ell}'\leq {\ell},t\in T_{i;{\ell}'}, Y_{t;i}=\{j,j'\}, \\\Se{I}_{j}^{{\ell}_{t;j}}\setminus\Se{A}_{i}^{{\ell}'}\subseteq (\{v_i\}\cup\Se{W})\allowbreak\text{ or }\Se{I}_{j'}^{{\ell}_{t;j'}}\setminus\Se{A}_{i}^{{\ell}'}\subseteq (\{v_i\}\cup\Se{W})}}\gamma_{i;t}$, for $\varnothing\subseteq \Se{W}\subseteq \Se{B}_i^{\ell}$. \end{proof}


Note that although in a DSN represented by $G(V,E)$, node $v_i\in V$ cooperates with all nodes in $\Se{I}_i^{\ell}$ in the $\ell$-th level cooperation, $2\leq {\ell} \leq L_i$, it is not necessary that all codewords stored in nodes from $\Se{I}_i^{\ell}$ need to be recovered. The reason is that these nodes are partitioned into node pairs where the two nodes in the same pair provide exactly the same group of parity symbols and only one of them needs to be recovered for node $v_i$ to recover its codeword, as we discussed in \Cref{theo: ECcons2}. 

For example, suppose Fig.~\ref{fig: matching} corresponds to a subgraph of a DSN with a recoverable erasure pattern. The pink triangles and the dashed arrows represent the $\ell$-th level cooperation at each node such that no other nodes are involved in the $\ell$-th level cooperation of these nodes, i.e., $C_t$ is the only cycle in the cycle group containing it. As discussed in \Cref{theo: ECcons2}, to remove the local cross parities of each node, neighbors of any non-locally-recoverable node should all be locally-recoverable. Therefore, there exists at most one non-locally-recoverable node in each one of the two triangles. For any $i\in X_t$, previous conditions indicate that at least one of the two nodes with indices in $Y_{t;i}$ is locally-recoverable; let it be node $v_j$, where $j\in Y_{t;i}$. We know that $Y_{t;j}$ consists of $i$ and $i'$ for some $i'\in X_t$ and $i'\neq i$. Since the codeword stored at $v_j$ is locally-recoverable and $C_t$ forms an isolated cycle group, the cross parity $\bold{m}_{i}\bold{B}_{i,j}+\bold{m}_{i'}\bold{B}_{i',j}$ can be derived at node $v_j$. Since the codeword at $v_{i'}$ is locally-recoverable, $\bold{m}_{i'}\bold{B}_{i',j}$ can be further subtracted from the cross parity to obtain $\bold{m}_i\bold{B}_{i,j}$. \textcolor{lara}{This observation} indicates that regardless of the way the matching graph is specified and the indices of the nodes that are not recovered, the non-recovered nodes are able to obtain their $\ell$-th level cross parities. For example, suppose then $v_{i_1}$ and $v_{j_1}$ in Fig.~\ref{fig: matching} are not locally-recoverable. Since $V_{i_1;{\ell}}=\{j_1,j_3\}$ and $\Se{I}^{\ell}_i=\{v_{j_1},v_{j_3}\}$, the aforementioned discussion demonstrates that recovering $v_{i_1}$ does not require $v_{j_1}$ to be recovered.

In Fig.~\ref{fig: matching}, the additional erasure correction capabilities offered to the nodes are identical regardless of the structure of the local matching graph. However, this may not be true in general, if more than one cycle gets involved. In general, different local matching graphs are likely to result in non-identical erasure correction capabilities. In particular, although the EC hierarchies are defined by $(\lambda_{i,l;\Se{W}})_{\varnothing \subset \Se{W} \subset\Se{B}_i^{\ell}}$ for each individual node at each cooperation level, this can be more elaborately defined since accessing a different subset of nodes in $\Se{I}_i^l$ may result in different EC capabilities even if $\Se{W}$ are the same, and $\lambda_{i,l;\Se{W}}$ only specifies the largest one. We show it in details by \Cref{exam: examlocal}.

Given an isolated matching graph $G'$, if none of the non-locally-recoverable nodes are able to derive any additional cross parities solely from the local cooperation specified by $G'$, i.e., all cycles in $G'$ constitute a cycle group, we call it an \textbf{absorbing matching graph}.

In \Cref{exam: examlocal}, EC capabilities of compatible graphs resulting from the same cooperation graph associated with different local matching graphs, as shown in Fig.~\ref{fig: localmatchinggraph}, are discussed. We prove that the left two panels are two absorbing matching graphs, while the right two panels are not, which also demonstrates that the EC capability is not uniquely determined by the cooperation graph. Instead, the local matching graph also matters.

We focus on the local cooperation graph between $9$ nodes in a DSN $G(V,E)$, where $V=\{v_i\}_{i\in\MB{9}}$. Let $V_1=\{v_1,v_4,v_7\}$, $V_2=\{v_2,v_5,v_8\}$, and $V_3=\{v_3,v_6,v_9\}$, and suppose nodes in each one of these sets mutually cooperate with nodes in each of the remaining two sets. According to the definition of cooperation graphs, each one of the graphs represents $3$ cycles, $\{C_i\}_{i\in\MB{3}}$, where $X_i=\{3j+i\}_{0\leq j\leq 2} $, $Y_i=X_{i+1}$, and $X_4=X_1$. Suppose $\{C_i\}_{i\in\MB{3}}$ form an isolated cycle group in the cooperation graph on $G$. Represent each one of the cycles by a specified matching graph and refer to the resulting local matching graph as an isolated local matching graph. Fig.~\ref{fig: localmatchinggraph} presents four different isolated local matching graphs on these nodes. Let colors blue and black refer to nodes that are non-locally-recoverable and locally-recoverable nodes, respectively. Then, each $V_i$, $i\in\MB{3}$, contains at most one blue node if it is contained in a recoverable erasure pattern. 

\begin{exam}\label{exam: examlocal}
\begin{figure}
\centering
\includegraphics[width=0.95\textwidth]{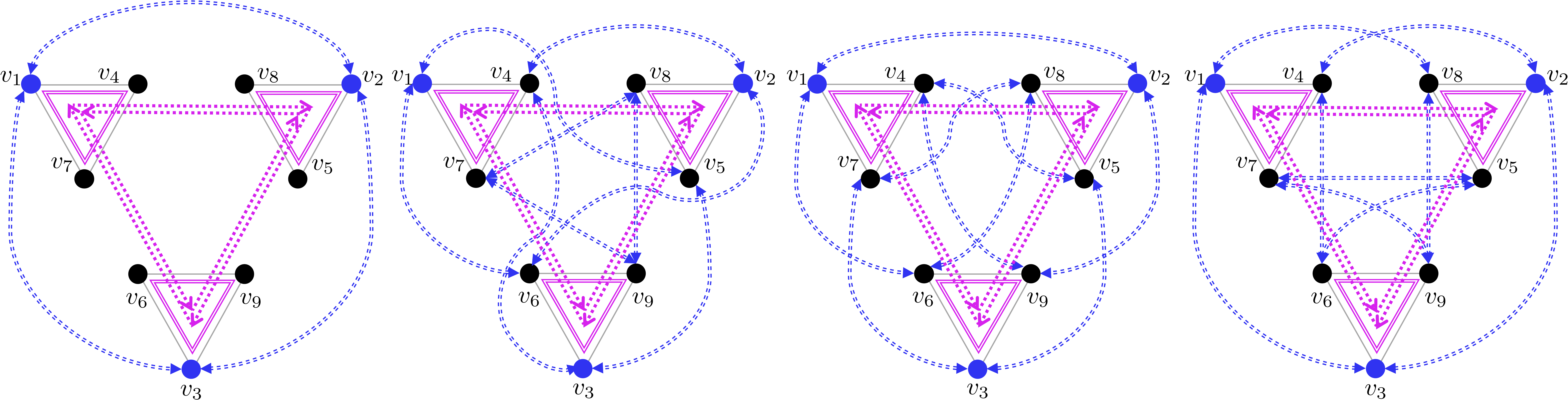}
\caption{Possible local matching graphs contained in a multi-level cooperation graph between $9$ nodes. The nodes are partitioned into three groups, where nodes within each one of them are pairwisely connected. Each dashed double-sided arrow represent a cycle.}
\label{fig: localmatchinggraph}
\end{figure} 

\textbf{\emph{(Absorbing Matching Graphs)}} 
Consider the left-most local matching graph in Fig.~\ref{fig: localmatchinggraph}, we notice that $v_1$, $v_2$, $v_3$ are mutually connected (and we will show that the connections between the rest of the nodes actually do not matter, thus we omit them in the figure). Without loss of generality, it is sufficient to prove that $v_1$ is not able to obtain any extra parity symbols from cycle $C_1$. Since $v_1$ and $v_2$ are connected, we know that $Y_{1;1}=\{5,8\}$. Thus, $v_1$ needs to either obtain $\bold{m}_1\bold{B}_{1,5}$ from $v_5$ or $\bold{m}_1\bold{B}_{1,8}$ from $v_8$. Since $v_2$ and $v_3$ are connected, both $v_5$ and $v_8$ have cross parities at node $v_3$. Therefore, the parities $v_i$, $i\in\{5,8\}$, are of the form $\bold{s}_i=\bold{m}_1\bold{B}_{1,i}+\bold{m}_3\bold{B}_{3,i}+\bold{m}_{j}\bold{B}_{j,i}+\bold{m}_{j'}\bold{B}_{j',i}$, where $j\in\{4,7\}$, $j'\in\{6,9\}$. If codewords stored at node $v_j$ and $v_{j'}$ are all locally-recoverable, their parities can be subtracted from $\bold{s}_i$ to obtain the remainder $\bold{s}'_i=\bold{m}_1\bold{B}_{1,i}+\bold{m}_3\bold{B}_{3,i}$. Observe that in order to obtain $\bold{m}_1\bold{B}_{1,i}$, $\bold{m}_3\bold{B}_{3,i}$ needs to be obtained first. That is to say, $v_1$ can only obtain additional parities unless the message in $v_3$ is recovered. However, following a similar process, we will need $v_2$ to be recovered for $v_3$, and $v_1$ to be recovered for $v_2$. This cyclic requirement indicates that $v_1$, $v_2$, and $v_3$ are ``absorbed'' into a balanced situation where none of them can be recovered first, which cannot be broken unless information from the rest of the graphs \textcolor{lara}{is} provided.

Similarly, we can prove that the second-to-the-left panel is also an absorbing matching graph. Since the connections between the three triangles are symmetric, it is still sufficient to prove that $v_1$ is not able to obtain any extra parity symbols from cycle $C_1$. Since $v_1$ and $v_5$ are connected, we know that $Y_{1;1}=\{2,8\}$. Given that $v_2$ is not locally-recoverable, the only path for $v_1$ to obtain extra parities is to obtain $\bold{m}_1\bold{B}_{1,8}$ from $v_8$. Observe that $v_8$ is connected to $v_7$ and $v_9$ in the matching graph, which means that $v_8$ has cross parities at $v_1$, $v_3$, $v_4$, and $v_6$. Therefore, $v_8$ needs $v_3$, $v_4$, and $v_6$ to be all recovered in order to subtract $\bold{m}_3\bold{B}_{3,8}$, $\bold{m}_4\bold{B}_{4,8}$ and $\bold{m}_6\bold{B}_{6,8}$ from $\bold{m}_3\bold{B}_{3,8}+\bold{m}_4\bold{B}_{4,8}+\bold{m}_6\bold{B}_{6,8}$ to obtain $\bold{m}_1\bold{B}_{1,8}$. This requires $v_3$ to be recovered. Following a similar argument to that of the left-most panel, this graph is also an absorbing matching graph.

Moving on to the second-from-the-right panel, $v_5$ has cross parities at $v_7$, $v_6$, $v_9$, and $v_1$. Therefore, $v_1$ is able to obtain the cross parities resulting from cooperation cycle $C_3$ through $v_5$. Similarly, $v_2$ is able to obtain the cross parities resulting from cooperation cycle $C_2$ through $v_4$. Finally, $v_3$ is able to obtain its cross parities from $C_3$ and $C_2$. 

Now we look at the right-most panel, $v_4$ has cross parities at $v_5$, $v_8$, $v_9$, and $v_3$. Therefore, $v_3$ is able to obtain the cross parities resulting from cooperation cycle $C_3$ through $v_4$. Similarly, $v_3$ is also able to obtain the cross parities resulting from cooperation cycle $C_2$ through $v_5$. After that, one of $v_1$ and $v_2$ is able to obtain additional parities from $C_1$, and the other one can obtain additional parities from both $C_1$ and its cooperation with $v_6$ and $v_9$. 

Note that the major difference between the second-to-the-right and the right-most panel is that the decoding of $v_1$ provides no additional parities on $v_2$ and $v_3$ in the third one. Therefore, if each cycle $C_i$, $i\in\MB{3}$, provides $\eta$ cross parities at each of its nodes, the third and the fourth graphs allow up to additional $4\eta$ and $5\eta$ cross parities, respectively.
\end{exam}

We have shown in the previous example that the left-most two panels in \Cref{fig: matching} are absorbing matching graphs, and they become non-absorbing matching graphs if any of the blue nodes turns to be recovered from cooperation with the rest of the graph. In this case, without loss of generality, suppose $v_3$ is recovered, then $v_1$ in these two graphs is also recoverable according to discussion in \Cref{exam: examlocal}. While in the left-most panel, nodes in $\Se{I}_1^l$ are indeed all locally recovered, those in the second graph from the left are not. We also notice that in the right-most panel, nodes in $\Se{I}_1^l$ are locally recovered to recover $v_1$, while those in the second-to-the right panel are not. Given that any isolated matching graph corresponding to a $\lambda_{1,l;\Se{W}}$ with $\Se{W}=\varnothing$, this example also demonstrates that a different set $\Se{A}_i^l$ will also provide different EC capabilities. While in \Cref{exam: examlocal} we already subtly discussed such a scenario, we leave more detailed analysis for future work.

Moreover, we state without proof here that the left-most two panels of Fig.~\ref{fig: localmatchinggraph} discussed in \Cref{exam: examlocal} are all the possible structures of an absorbing matching graph for this specified local cooperation graph (subject to the graph isomorphism). Since these graphs also are all the possible structures where the nine matching edges form disconnected cycles, those edges in other matching graphs all form a cycle of length $9$ and are mutually isomorphic according to permutations of $v_1$ to $v_9$. However, different permutations of the nodes do result in different erasure correction capabilities. For example, the second-to-the-right panel is isomorphic to the right-most panel if $v_3$, $v_5$, $v_7$ are blue instead, as shown in \Cref{fig: information coupling}. This has no impact on the average erasure correction capability while looking into the local matching graphs individually, but the permutation matters while taking the connection to the rest of the graphs into consideration.

In \Cref{rem: infoflow}, we discuss the information flow between neighboring nodes, and the information flow between nodes with distance two through their common neighbors. Observe that nodes cooperating with any given node in its higher-level cooperations are not necessarily all within its two-hop neighborhood. However, these nodes actually provide additional parities to the original nodes. This scenario is not covered by the previous definition of information flow, instead of it, we proposed the notion of \textbf{information coupling} to describe it, as discussed in \Cref{rem: infocoupling}.

\begin{rem} \emph{\textbf{(Information Coupling in Multi-Level Coded DSN)}} \label{rem: infocoupling}
Take the local matching graph shown in the right-most panel in Fig.~\ref{fig: matching} as an example. Consider the case where non-locally-recoverable nodes are $v_1$, $v_2$ and $v_3$, as shown in the left panel in Fig.~\ref{fig: information coupling}. Node $v_3$ is able to obtain additional parities from $v_4$ since $v_5$, $v_8$ and $v_9$ are locally-recoverable. This \textcolor{lara}{case} can be regarded as information flow from the cooperation between $v_4$, $v_5$ and $v_8$ to the cooperation between $v_4$, $v_3$ and $v_9$ through $v_4$. 

Consider another case where the non-locally-recoverable nodes are $v_3$, $v_5$ and $v_7$ instead, as shown in the right panel in Fig.~\ref{fig: information coupling}. Node $v_3$ is no longer able to decode its codeword first. Instead, node $v_7$ is able to obtain additional parities from $v_2$ since $v_1$, $v_6$ and $v_9$ are locally-recoverable. This \textcolor{lara}{case} can be regarded as information flow from the cooperation between $v_1$, $v_2$, and $v_7$ to the cooperation between $v_2$, $v_6$, and $v_9$ through $v_2$. 

The aforementioned cases indicate that for any node, nodes cooperating with it in its higher-level cooperation do not have impact on it individually, \textcolor{lara}{but rather} collectively. Moreover, as discussed in \Cref{exam: examlocal}, this impact is not only dependent on the local matching graphs, but also dependent on the erasure patterns. Therefore, instead of discussing information flow between two cycles, it is more appropriate to treat all the cycles contained in any cycle group collaboratively. This can be interpreted as information coupling resulted from the cooperation between $\{v_1,v_4,v_7\}$, $\{v_2,v_5,v_8\}$, and $\{v_3,v_6,v_9\}$, as an analogy to information coupling in network navigation. 
\end{rem}

\begin{figure}
\centering
\includegraphics[width=0.5\textwidth]{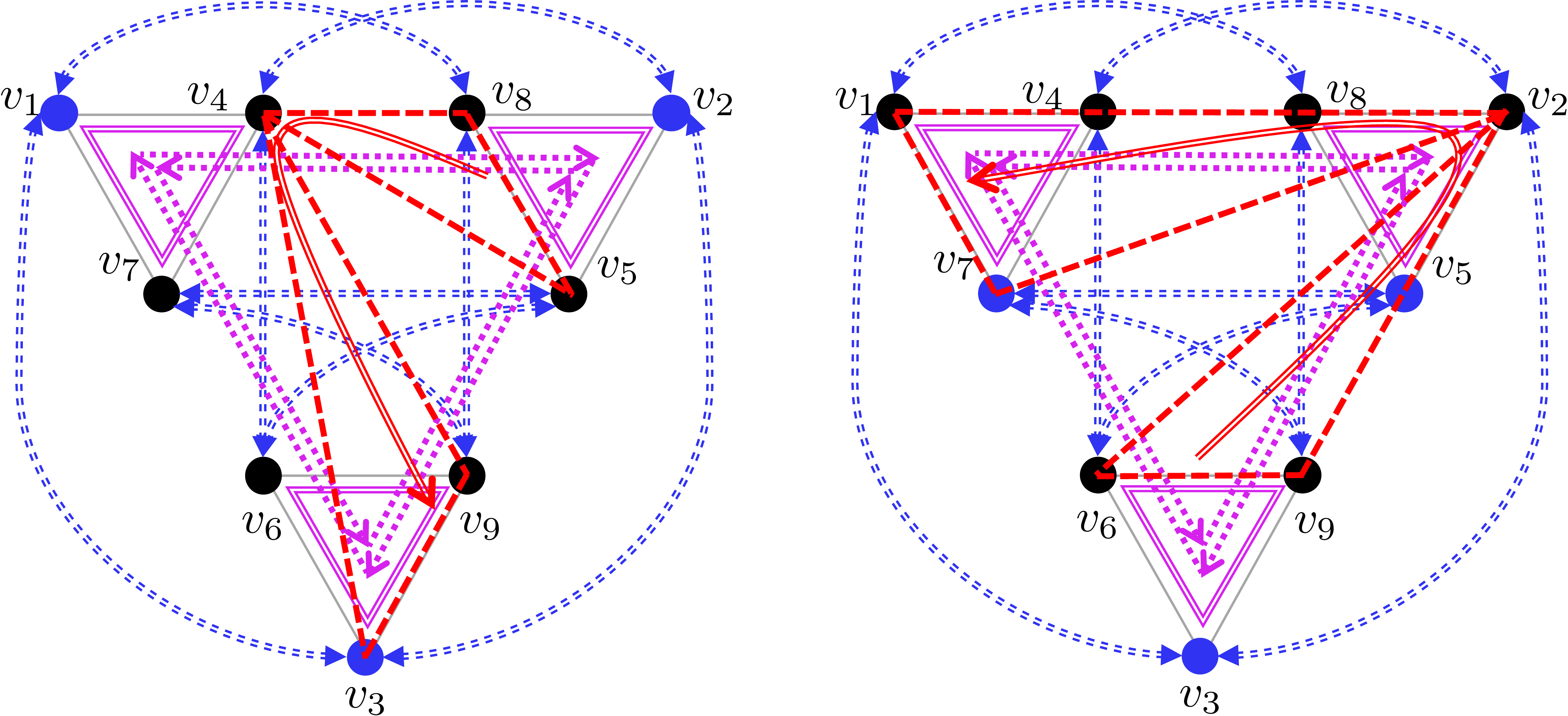}
\caption{Information coupling. The two graphs represent two different erasure patterns for the non-absorbing local matching graphs in Fig.~\ref{fig: matching}. While the information flow depicts the communication between any two nodes separated by a distance of $1$ or $2$ in their $1$-st level cooperation, it is not able to fully describe higher level cooperations. Two cycles in a local matching graph also help the decoding of nodes on each other through their shared nodes, and we call this ``information coupling''.}
\label{fig: information coupling}
\end{figure}

\subsection{Recoverable Erasure Patterns}
\label{subsec recoverable erasure patterns 2}
Recall the notion of ``decoding graph'' in \Cref{defi decoding graph}, under which recoverable erasure patterns of the single-level cooperative codes are described. However, in cases where higher-level cooperations are involved, \Cref{defi decoding graph} is not enough to define and enumerate all associated recoverable erasure patterns. In this section, we extend \Cref{defi decoding graph} into \Cref{defi HL decoding graph} to \textcolor{lara}{allow for the} multi-level cooperation. Recoverable erasure patterns of hierarchical codes are specified in \Cref{theo:erausre pattern hierarhical }.

\begin{defi} \label{defi HL decoding graph} \emph{\textbf{(Decoding Graph in Multi-Level Cooperation)}} Let $G(V,E)$ represent a DSN with $|V|=p$. Let $\mathcal{T}(\mathcal{V},\mathcal{E})$ denote a directed subgraph of $G$ associated with $v_j\in \Se{V}$. For all $v_i\in\mathcal{V}$, denote the set containing the children of $v_i$ by $\Se{V}_i^{\textup{C}}$, and the set containing all parents of $v_i$ by $\Se{V}_i^{\textup{P}}$. Suppose $v_j$ is the only node without parents, we call it the \textbf{root} of $\Se{T}$. We call any node without children a \textbf{leaf}. Suppose that all the leaves of $\Se{T}$ are not locally-recoverable, and any other $v_i\in \mathcal{V}$ satisfies either one of the following conditions.
\begin{enumerate}
\item The codeword stored at $v_i$ is locally-recoverable: there exists a set $\Se{L}\subseteq\{2,\dots,L_i\}$, with $|\Se{V}_i^{\textup{P}}\cap \Se{M}_i|\in\{0,1\}$, $|\Se{V}_i^{\textup{P}}\cap V_{i;{\ell}}|=1$, and $\Se{V}_i^{\textup{P}}\cup\Se{V}_i^{\textup{C}}$, that consists of all nodes with indices in $V_{i;{\ell}}$ for ${\ell}\in \Se{L}$ (and $\Se{M}_i$ if $|\Se{V}_i^{\textup{P}}\cap \Se{M}_i|=1$), where codewords stored at them are not locally-recoverable.
\item The codeword stored at $v_i$ is not locally-recoverable: codewords stored at nodes from $\Se{V}_i^{\textup{P}}\cup\Se{V}_i^{\textup{C}}$ are locally-recoverable.
\end{enumerate}
We call $\Se{T}$ a decoding graph at its root node $v_j$ over $G(V,E)$.
\end{defi}


\begin{theo} \emph{\textbf{(Flexible Erasure Patterns)}} \label{theo:erausre pattern hierarhical } Let $\Se{C}$ be a code with hierarchical cooperation on a DSN represented by $G(V,E)$, where $\Se{C}$ and all related parameters are specified according to \Cref{cons: 2}. Let $\bold{u}\in\mathbb{N}^p$ such that $\bold{u}\preceq\bold{n}$. Suppose $\Se{C}$ and $\bold{u}$ satisfy the following conditions:
\begin{enumerate}
\item Let $V^{\textup{NL}}$ represent the set contains all the nodes $v_i$, $i\in\MB{p}$ such that $u_i>r_i-\delta_i$. Let $V^{\textup{L}}=V\setminus V^{\textup{NL}}$. Then, for any $v_i\in V^{\textup{NL}}$, $\Se{M}_i\subset V^{\textup{L}}$.
\item For any $v_i\in V^{\textup{NL}}$, there exists a decoding graph $\mathcal{T}_i(\Se{V}_i,\Se{E}_i)$ at root $v_i$ over $G$. Moreover, for any leaf $v_j$ of $\mathcal{T}_i$, $u_j\leq r_j$; for any node $v_j\in \Se{V}_i\cap V^{\textup{NL}}$, $u_j\leq r_j+\sum\nolimits_{v_k\in\Se{M}_i\cap\Se{V}_j^{\textup{C}}}\delta_k+\sum\nolimits_{{\ell}=2}^{L_i}\sum\nolimits_{k\in V_{j;{\ell}}\cap X_{t;j},v_k\in\Se{V}_j^{\textup{C}}}\gamma_{k;t}$. 
\end{enumerate} 
Then, $\bold{u}$ is a \emph{\textbf{recoverable erasure pattern}} of $\Se{C}$ over $G(V,E)$.
\end{theo}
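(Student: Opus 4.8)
The plan is to replay the inductive argument from the single-level case, \Cref{theo: erausre pattern single}, now over the multi-level decoding graph of \Cref{defi HL decoding graph}, with \Cref{theo: ECcons2} supplying all the per-node accounting. For each $v_i\in V^{\textup{NL}}$ I would fix a decoding graph $\mathcal{T}_i$ as guaranteed by Condition 2, and define the \textbf{decoding depth} $l_i$ of $v_i$ to be the largest number of nodes of $V^{\textup{L}}$ lying on a directed path from $v_i$ to a leaf of $\mathcal{T}_i$. By the two cases of \Cref{defi HL decoding graph}, the nodes along any such path alternate between $V^{\textup{NL}}$-nodes and $V^{\textup{L}}$-nodes, the leaves lie in $V^{\textup{NL}}$, and the subgraph of $\mathcal{T}_i$ spanned by any $v_k\in\Se{V}_i\cap V^{\textup{NL}}$ together with its descendants is again a decoding graph at root $v_k$; hence every such $v_k\neq v_i$ has a well-defined decoding depth strictly smaller than $l_i$. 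I would then prove by induction on $l_i$ that every $v_i\in V^{\textup{NL}}$ is recoverable, the $V^{\textup{L}}$-nodes being recoverable by definition.

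For the base case $l_i=0$, the alternating structure forces $\Se{V}_i=\{v_i\}$. Condition 1 gives $\Se{M}_i\subset V^{\textup{L}}$, hence every maximum clique $S(g)$, $g\in A_i$, is locally recoverable (as $S(g)\subseteq\Se{M}_i$, noted in the proof of \Cref{theo: ECcons2}); so exactly as there, all cross parities $\{\bold{s}_{i;{\ell}}\}_{{\ell}\in\MB{L_i}}$ of $v_i$ can be computed and subtracted, leaving $v_i$ with the $r_i$ usable parity symbols of $\bold{m}_i\bold{A}_{i,i}$, and since $v_i$ is a leaf $u_i\leq r_i$ by Condition 2. For the inductive step, assume the claim for decoding depth at most $\ell$ and take $v_i\in V^{\textup{NL}}$ with $l_i=\ell+1$. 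By case 2 of \Cref{defi HL decoding graph} each child $v_j\in\Se{V}_i^{\textup{C}}$ is locally recoverable, and by case 1 every child of such a $v_j$ lies in $V^{\textup{NL}}$ with decoding depth at most $\ell$, hence is already recovered. Now fix a child $v_j$ and let ${\ell}_0$ be the level at which it helps $v_i$. Since $v_j$ is locally recoverable, the Cauchy structure of \Cref{cons: 2} (cf. the codeword expansion and the argument following it in the proof of \Cref{theo: ECcons2}) recovers $\bold{m}_j$ together with every cross-parity block $\bold{s}_{j;{\ell}'}$, ${\ell}'\in\MB{L_j}$. Writing $\bold{s}_{j;{\ell}_0}$ as the sum of the blocks $\bold{m}_k\bold{B}_{k,j}$ over $k\in\Se{M}_j$ (when ${\ell}_0=1$) or over $k\in V_{j;{\ell}_0}$ (when ${\ell}_0\geq 2$), every summand other than the one indexed by $v_i$ corresponds either to a locally recoverable node or to a child of $v_j$, hence is known; so $v_j$ isolates $\bold{m}_i\bold{B}_{i,j}$ and delivers to $v_i$ the corresponding additional parity symbols, namely $\delta_j$ of them when ${\ell}_0=1$ and the cycle-dependent number $\gamma_{j;t}$ when ${\ell}_0\geq 2$, for the cycle $C_t$ with $j\in X_{t;i}$; this is exactly the mechanism used in the second half of the proof of \Cref{theo: ECcons2}, where only one node of each relevant pair needs to play this role.

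Collecting these contributions over all children of $v_i$, and using (as in the base case) that $\Se{M}_i\subset V^{\textup{L}}$ so that $v_i$ may first strip off its own cross parities, node $v_i$ ends up with at least $r_i+\sum\nolimits_{v_k\in\Se{M}_i\cap\Se{V}_i^{\textup{C}}}\delta_k+\sum\nolimits_{{\ell}=2}^{L_i}\sum\nolimits_{k\in V_{i;{\ell}}\cap X_{t;i},\,v_k\in\Se{V}_i^{\textup{C}}}\gamma_{k;t}$ usable parity symbols for $\bold{m}_i$, which by Condition 2 is at least $u_i$; hence $v_i$ is recoverable and the induction closes. I expect the main obstacle to be the structural bookkeeping in the inductive step: one must verify that the strict alternation and the ``all non-locally-recoverable cooperators are parents or children'' clauses of \Cref{defi HL decoding graph} pin down precisely which summands of each $\bold{s}_{j;{\ell}_0}$ are already known, so that the single block destined for $v_i$ can indeed be isolated, and that the resulting count of extra parity symbols matches the right-hand side of Condition 2 term by term. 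This is the pattern-level analogue of the per-node accounting already carried out in \Cref{theo: ECcons2}, so the genuinely new content is confined to the interface between the decoding-graph structure and the cycle-group/clique mechanism.
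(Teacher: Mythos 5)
The paper actually states this theorem without a proof; after the theorem, the text moves directly to the examples (\Cref{exam: exam21}, \Cref{exam: exam22}). So there is no paper proof to compare against, but your proposal does exactly what the authors appear to intend: replay the decoding-depth induction from the proof of \Cref{theo: erausre pattern single} verbatim, with the single-level cross-parity isolation step replaced by the multi-level, cycle-group version established in the proof of \Cref{theo: ECcons2}, and the extra parity count at each step read off from Condition~2 of the theorem. The structural points you highlight are also the right ones to check: that the subgraph of $\mathcal{T}_i$ below any $v_k\in\Se{V}_i\cap V^{\textup{NL}}$ is again a decoding graph at root $v_k$, that alternation forces its decoding depth to drop, and that case~1 of \Cref{defi HL decoding graph} pins down exactly which summands of each $\bold{s}_{j;{\ell}'}$ are already known so the single block headed to $v_i$ can be isolated.

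Two caveats worth flagging. First, \Cref{defi HL decoding graph} allows a locally-recoverable child $v_j$ to have several parents simultaneously, one in $\Se{M}_j$ and one in each $V_{j;{\ell}}$ for ${\ell}\in\Se{L}$, so ``the level ${\ell}_0$ at which $v_j$ helps $v_i$'' is not a single integer; you should instead carry the whole set $\Se{L}\cup\{1\}$ of levels at which $v_j$ serves $v_i$. Your closing inequality already sums over all of them, so this is a phrasing issue rather than a logical gap, but a reader would notice. Second, the theorem statement as printed has several index mismatches that you silently correct: by \Cref{theo: ECcons2} the local threshold in Condition~1 should be $d_{i,0}=r_i-\delta_i-\sum_{{\ell}=2}^{L_i}\eta_{i;{\ell}}$ rather than $r_i-\delta_i$; in Condition~2 the subscripts in $\Se{M}_i$ and $L_i$ should read $j$ (they quantify over the node $v_j$, not the root $v_i$), the cycle index $t$ in the inner sum is implicitly the unique cycle through the pair $(k,j)$ at level ${\ell}$, and, given how $\gamma_{\cdot;t}$ is defined in \Cref{cons: 2}, the parity contribution to $v_j$ should be $\gamma_{j;t}$ rather than $\gamma_{k;t}$. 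A clean writeup should state these corrections rather than absorb them implicitly.
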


\textcolor{lara}{Consider} the DSN with the $1$-st level cooperation graph presented in \Cref{exam11}. \textcolor{lara}{We} add the $2$-nd level and the $3$-rd level cooperation graphs to the DSN and mark them in pink and olive, respectively, as shown in \Cref{exam: exam21} and \Cref{fig: DSNexam22}. Black and blue/green still refer to nodes where the stored codewords are locally-recoverable and non-locally-recoverable, respectively. Components marked in red represent local decoding graphs, which is the subgraph of the decoding graph corresponding to the local matching graph.

\begin{exam}\label{exam: exam21}
\begin{figure}
\centering
\includegraphics[width=0.47\textwidth]{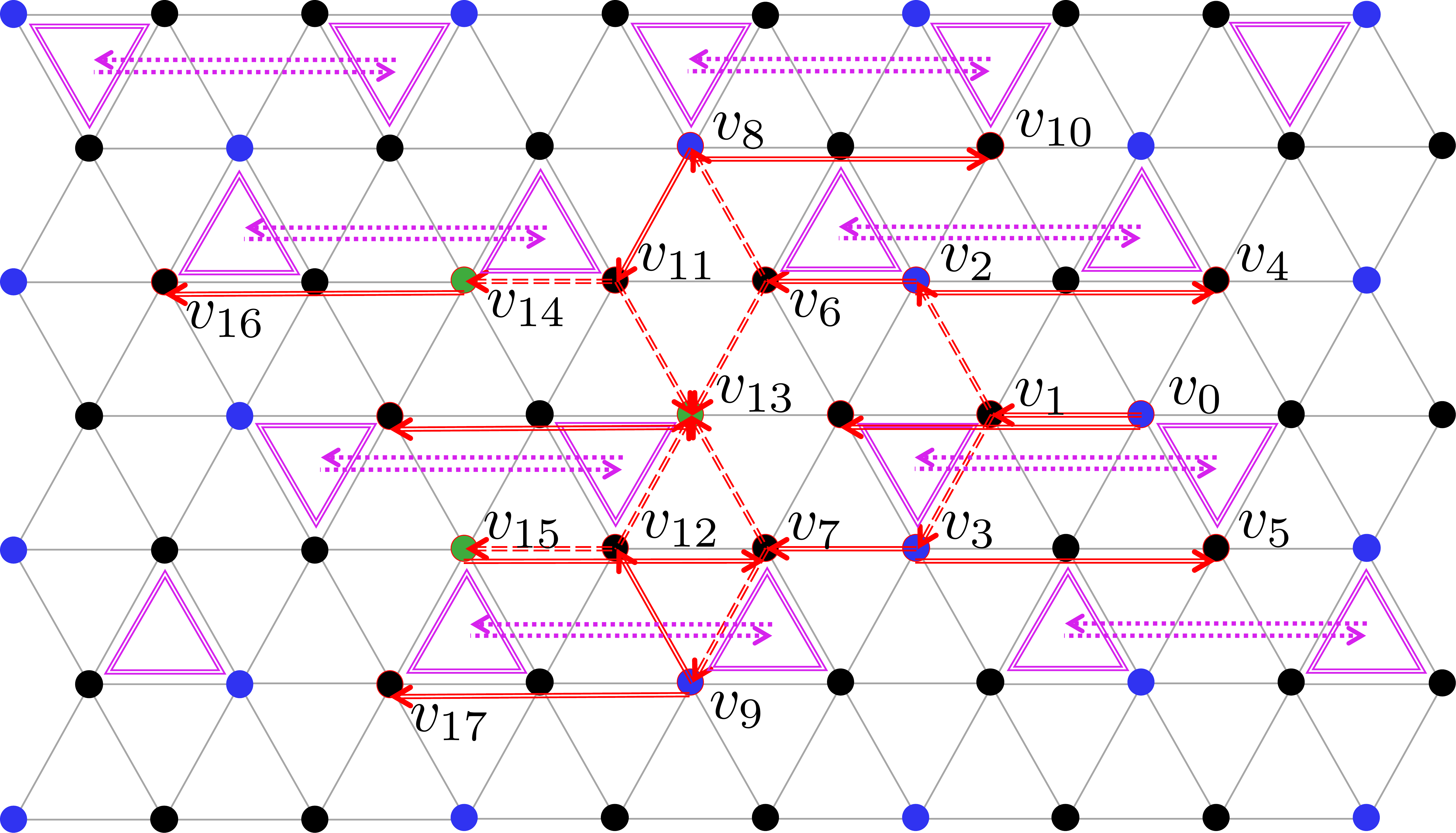}
\includegraphics[width=0.49\textwidth]{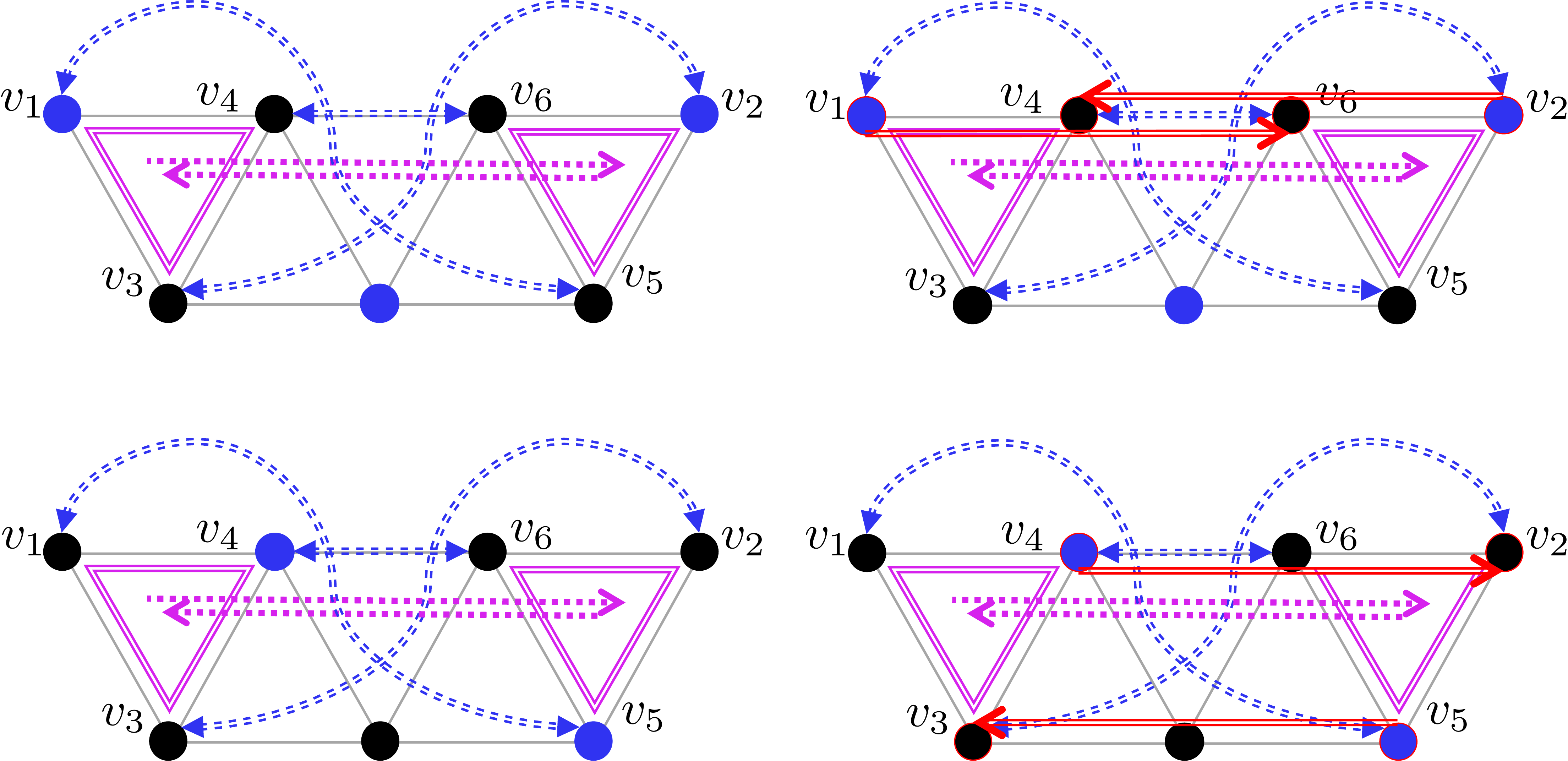}
\caption{DSN (left-most) and the local matching graph specified for $2$-nd level cooperation in \Cref{exam: exam21}.}
\label{fig: DSNexam21}
\end{figure}

Fig.~\ref{fig: DSNexam21} has five graphs. The left-most panel describes the cooperation graph resulting from adding the $2$-nd level cooperation among nodes to the DSN in \Cref{exam11}, where there exist two possible local matching graphs that are specified by the two graphs in the center (in the central panel). The right-most two panels present the subgraphs in local decoding graphs corresponding to the two possible local matching graphs. Let $\bold{u}=\SB{u_{1},u_{2},\dots,u_{p}}$ be an erasure pattern on this DSN. Under the EC solution specified in \Cref{theo: ECcons2}, suppose there exists $\gamma\in\Db{N}$ such that $\gamma_{i;t}=\gamma$, for all $t\in\MB{T}$, and $i\in X_t$. 

In the specified cooperation graph, for $i\in\MB{p}$, if $ v_i$ is black, then $0\leq u_{i}\leq r_i-\delta-\gamma$; else if $v_i$ is green, then $r_i-\delta-\gamma\leq u_i\leq r_i+\gamma$; otherwise $r_i+\gamma<u_{1,i}\leq r_i+\delta+\gamma$. Since each blue node is contained in an isolated local matching graph, it can obtain additional $\gamma$ cross parity symbols from its $2$-nd level cooperation according to the previous discussion about Fig.~\ref{fig: matching}. Therefore, all the non-locally-recoverable nodes are able to tolerate extra $\gamma$ erasures, which means that $\bold{u}$ is a recoverable erasure pattern of this graph but not a recoverable pattern of the left panel in Fig.~\ref{fig11}.

\end{exam}

\begin{exam}\label{exam: exam22}
\begin{figure}
\centering
\includegraphics[width=0.47\textwidth]{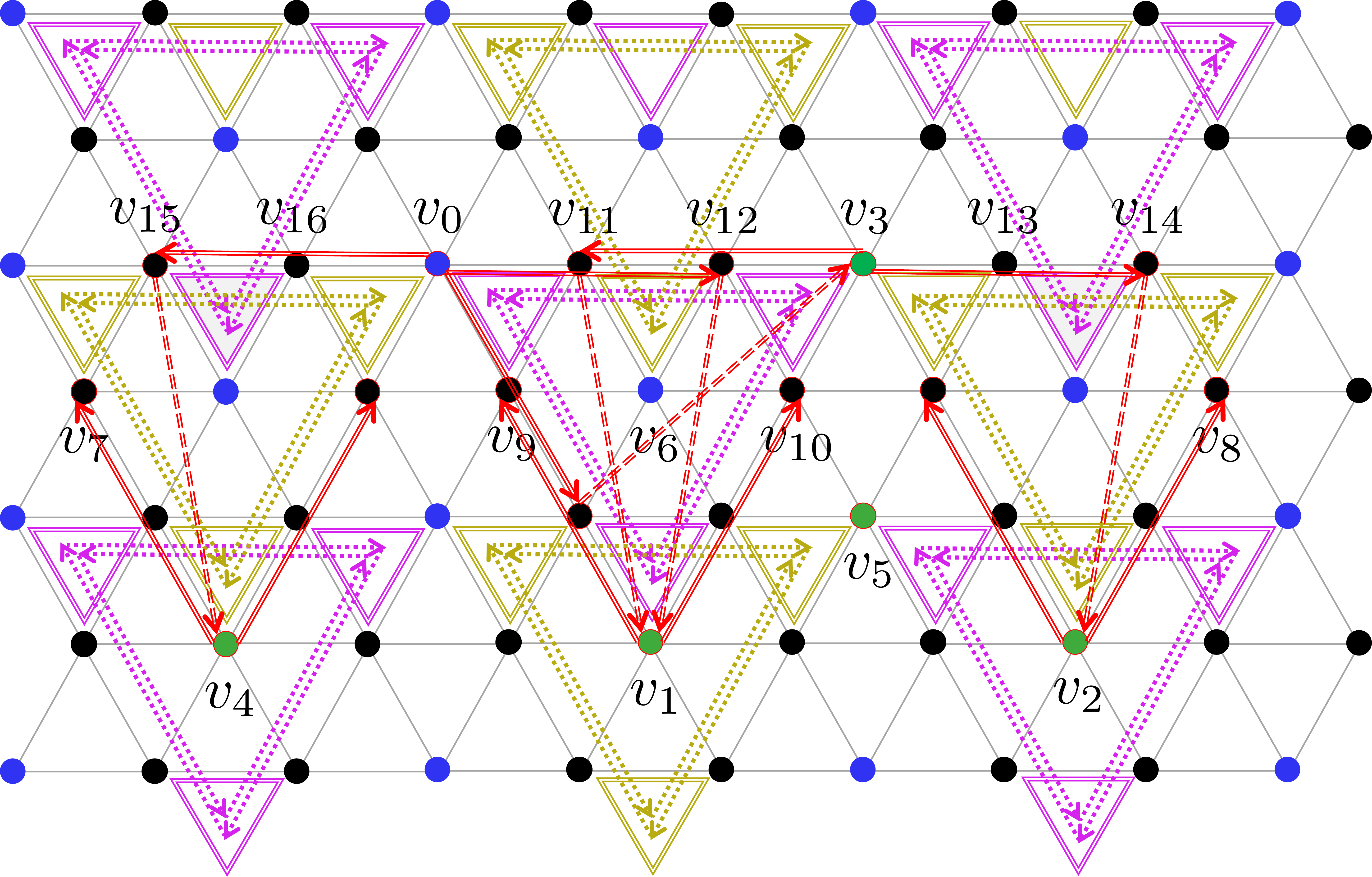}
\includegraphics[width=0.49\textwidth]{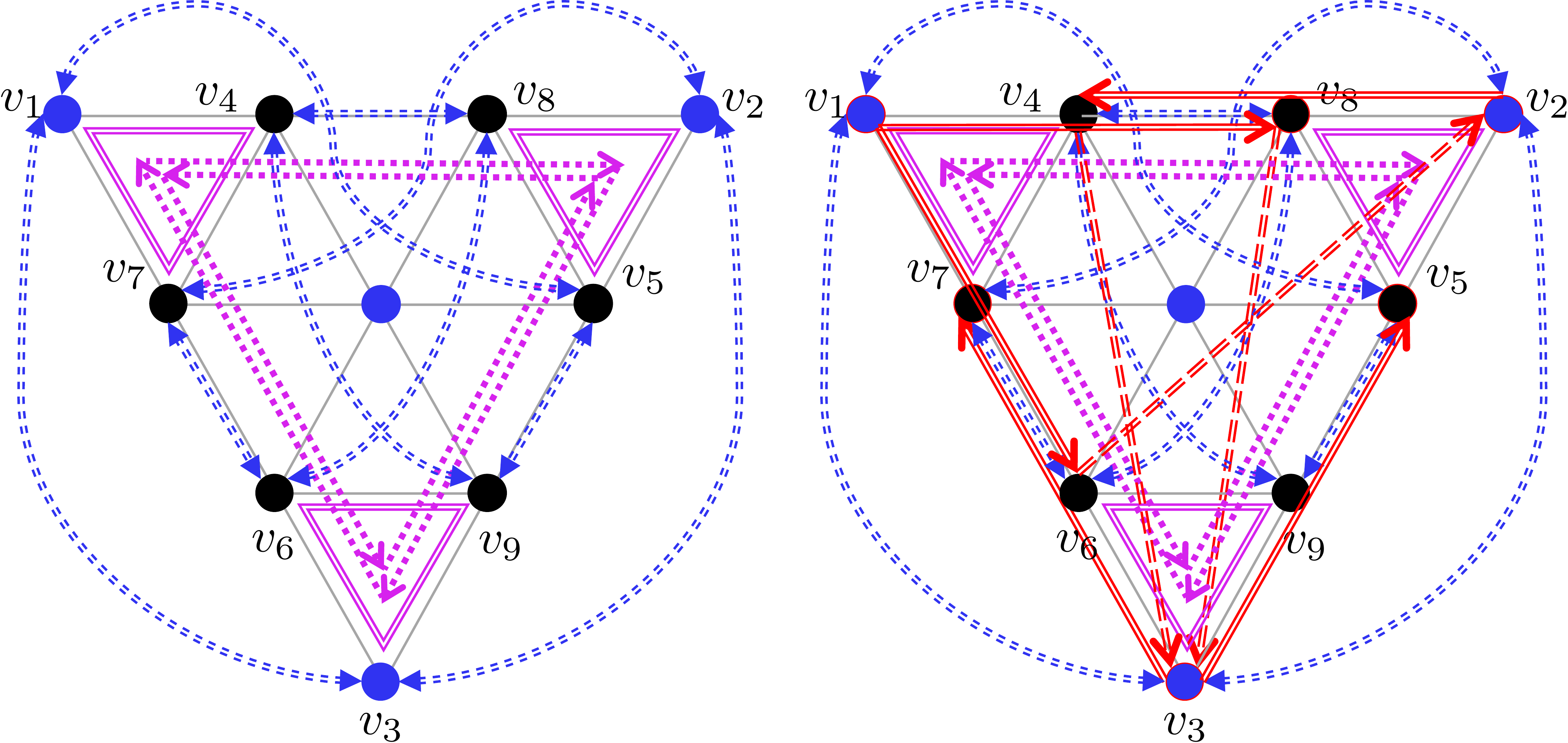}
\caption{DSN (left-most) and the local matching graph specified for $2$-nd and $3$-rd level cooperation in \Cref{exam: exam22}.}
\label{fig: DSNexam22}
\end{figure}

Fig.~\ref{fig: DSNexam22} has three graphs. The left-most one describes the cooperation graph resulting from adding the $2$-nd and the $3$-rd level cooperations among nodes to the DSN in \Cref{exam11}. We adopt the right-most local matching graph in Fig.~\ref{fig: localmatchinggraph} to specify local matching graphs in this example, and it is shown in the central panel. Note that we have exchanged the indices of $v_4$ and $v_7$, and those of $v_5$ and $v_8$ in the original graph to obtain the graph in the center. The right-most panel presents the subgraph in local decoding graphs corresponding to the local matching graph. Let $\bold{u}=\SB{u_{1},u_{2},\dots,u_{p}}$ be an erasure pattern on this DSN. Under the EC solution specified in \Cref{theo: ECcons2}, suppose there exists $\gamma\in\Db{N}$ such that $\gamma_{i;t}=\gamma$, for all $t\in\MB{T}$, and $i\in X_t$.

In the specified cooperation graph, for $i\in\MB{p}$, if $v_i$ is black and is connected to two triangles, then $0\leq u_{i}\leq r_i-\delta-2\gamma$; else if $v_i$ is black and is connected to only one triangle, then $0\leq u_{i}\leq r_i-\delta-\gamma$; else if $v_i$ is blue and is connected to only one triangle, $r_i-\delta-\gamma<u_{1,i}\leq r_i+2\gamma$; else if $v_i$ is green, then $r_i-\delta-2\gamma< u_i\leq r_i+2\gamma$; otherwise $r_i-\delta-2\gamma<u_{1,i}\leq r_i+3\gamma$. Since each non-locally-recoverable node, e.g., $v_1$, $v_2$, and $v_4$, is contained in an isolated local matching graph, i.e., a triangle, laying at the bottom of this triangle, it can obtain additional $2\gamma$ cross parity symbols from it according to the previous discussion in \Cref{exam: examlocal}. Then, $v_3$ and $v_5$ can also obtain extra $2\gamma$ parity symbols, where $\gamma$ of them are from the $2$-nd level cooperation, and the remaining $\gamma$ of them are from the $3$-rd level cooperation, respectively, according to \Cref{exam: examlocal}. After that $v_{0}$ is able to obtain $2\gamma$ cross parity symbols from the $2$-nd level cooperation (the pink triangle), and $\gamma$ cross parity symbols from the $3$-rd level cooperation (the olive triangle). Following a similar logic, all codewords in the non-locally-recoverable nodes are able to be recovered, which means that $\bold{u}$ is a recoverable erasure pattern of this graph but not a recoverable pattern of any of the graphs in Fig.~\ref{fig11}.
\end{exam}

\section{Topology Adaptivity, Scalability, and Flexibility}
\label{section desired properties}

In \Cref{section multi-level cooperation}, we have presented a construction of codes with hierarchical locality for a DSN with a given cooperation graph, which enables the system to offer multi-level access at each node while simultaneously reducing the latency by taking into account the communication cost between different nodes. However, multi-level accessibility is not the only property that is desirable in practical cloud storage applications. In this section, we therefore discuss topology adaptivity, scalability, and flexibility of our construction, which are especially critical in dynamic cloud storage. 

\subsection{Topology Adaptivity}
\label{subsec topology adaptivity}

As discussed in \Cref{sectoin: introduction}, varying topology is a critical property of DSNs because of the dynamic nature of practical networks. While discussing EC solutions for DSNs with a specific topology, the time cost in each communication link and the erasure statistics of each node should also be taken into consideration to have a good trade-off between low latency and high EC capability. Although hierarchical coding schemes over a DSN with a specified cooperation graph has been discussed in Subsection~\ref{subsec cooperation graphs}, the method of finding a cooperation graph over DSNs with arbitrary topology has not yet been discussed. \Cref{algo: search cooperation graph} searches for a cooperation graph over a given network; the existence of such a graph is implicitly proved in the algorithm. Here $G(V,E)$ denotes a DSN with the collection $S(V,E)$ of maximum cliques.

\begin{algorithm}
\caption{Cooperation Graph Search}\label{algo: search cooperation graph}
\begin{algorithmic}[1]
\Require
\Statex $G(V,E)$: existing DSN;

\Statex $a(S)$: the number of different cycle groups associated with the maximum clique $S\in S(V,E)$;
\Statex $b(g)$: the number of cycles within the cycle group $g$;
\Ensure
\Statex $\Se{G}(\Se{V},\Se{E})$: a cooperation graph over $G$;
\Statex //\emph{Find a cooperation graph}
\State $\Se{V}\gets V$, $\Se{E}\gets \varnothing$;
\For{$v_i\in V$}
\State Assign a subset of $\Se{N}_i$ to $\Se{M}_i$;
\State $  \Se{E}\gets \Se{E}\cup \{e_{i,j}:v_j\in \Se{M}_i\}$;
\EndFor
\State Find the collection $\Se{S}(\Se{V},\Se{E})$ of maximum cliques over $\Se{G}$;
\State $t \gets 1$, $g \gets 1$;
\For{$S\in \Se{S}(\Se{V},\Se{E})$}
\For{$1\leq i\leq a(S)$}
\For{$1\leq b\leq b(g)$}
\State Find an edge or a triangle contained in $S$ and denote the set consisting of indices of its vertices by $Y_t$;
\State Find another edge in $G$ if $|{Y}_t|=2$; else find a triangle such that there exists a bijection $f$ from $X_t$ to $Y_t$ and $\{e_{i,j}\}_{(i,j)\in X_t\times Y_t\setminus\{(i,f(i)):i\in X_t\}}\subseteq \bar{\Se{E}}$, where $X_t$ denotes the set consisting of indices of its vertices;
\State $\Se{E}\gets\Se{E}\cup \{e_{i,j}\}_{(i,j)\in X_t\times Y_t\setminus\{(i,f(i)):i\in X_t\}}$;
\State $X_{t;j}\gets X_t\setminus\{f^{-1}(j)\}$, $Y_{t;i}\gets Y_t\setminus\{f(i)\}$, $g_t\gets g$, $l_{t;i}\gets 0$, for $i\in X_t$, $j\in Y_t$; 
\State Add $C_t(X_t,Y_t,\{X_{t;j}\}_{j\in Y_t},\{Y_{t;i}\}_{i\in X_t},g_t,(l_{t;j})_{j\in Y_t})$ to $\Se{G}$;
\State $t\gets t+1$;
\EndFor
\State $g\gets g+1$;
\EndFor
\EndFor
\Statex //\emph{Assign associated cooperation levels to $\Se{G}$, following the notation specified in \Cref{tab: notationtable}}
\For{$v_i\in V$}
\State Find the set $A_i$ consisting of group numbers of those cycle groups that contain at least a cycle $C_t$ with $i\in Y_t$; 
\For{$g\in A_i$}
\State Find the set $R_{i;g}$ consisting of all cycles $C_t$ such that $g_t=g$ and $i\in Y_{t}$;
\State Denote the average distance of all nodes $v_j$, $j\in X_{t;i}$, $t\in R_{i;g}$, by $z_i$; 
\EndFor
\State Order $(z_i)_{i\in \MB{a(S)}}$ from the smallest to the largest, and obtain $(z_{\pi'(i)})_{i\in \MB{a(S)}}$, where $\pi'(i)$ is a permutation of elements from $\MB{a(S)}$;
\For{$g\in A_i$ and $t\in R_{i;g}$}
\State $l_{t;j}\gets \pi'(i)+1 $, for $j\in X_{t;i}$;
\EndFor
\EndFor
\end{algorithmic}
\end{algorithm}

\begin{rem} (Latency Optimization in Cooperation Graphs) One might observe that although \Cref{algo: search cooperation graph} presents a general method to search for a cooperation graph over a given DSN described by $G(V,E)$, the resulting code is not guaranteed to possess optimized latency. Optimization of the construction with the lowest latency is left for future work.
\end{rem}

\subsection{Scalability}
\label{subsec scalability}

As discussed in \Cref{sectoin: introduction}, scalability refers to the capability of expanding the backbone network to accommodate additional workload without rebuilding the entire infrastructure. More specifically, when a new cloud is added to the existing configuration, computing a completely different generator matrix results in changing all the encoding-decoding components in the system, and is very costly. The preferred scenario is that adding a new cloud does not change the encoding-decoding components of the existing clouds.

\begin{figure}[H]
\centering
\includegraphics[width=0.5\textwidth]{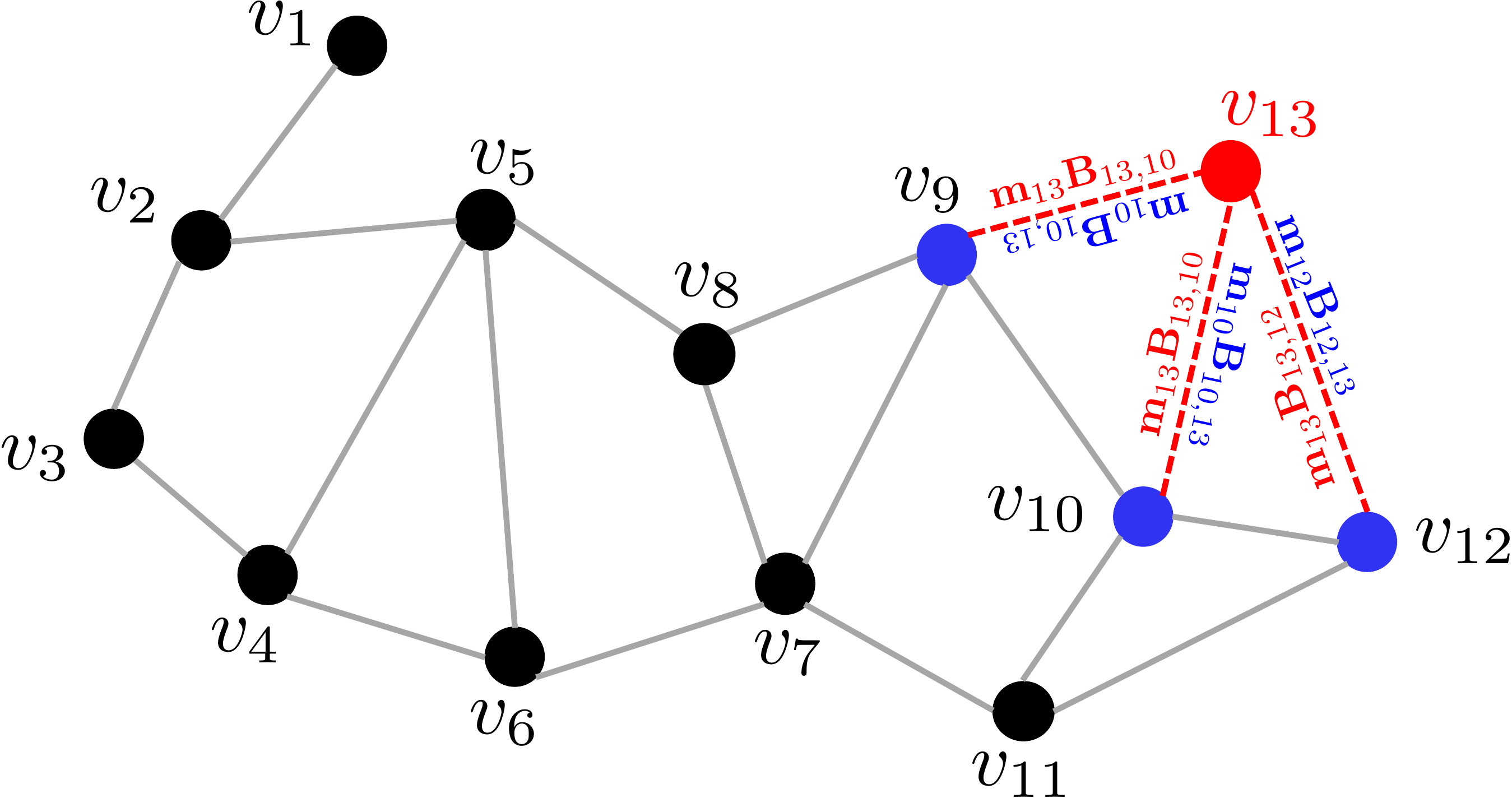}
\caption{Scalability: Add a node to existing DSN.}
\label{fig: addnodeDSN}
\end{figure}

We show that our construction naturally achieves this goal. For simplicity, we only discuss the scalability over constructions with single-level cooperation here. Observe that in \Cref{cons: CRScons}, the components $\bold{A}_{x,x}$, $\bold{U}_x$, and $\bold{B}_{x,i}$, $i\in\MB{p}\setminus\LB{x}$, are built locally. Suppose cloud $p+1$ is added into a double-level configuration adopting \Cref{cons: CRScons}. \Cref{algo: add node} presents a procedure for adding this cloud, which only results in adding some columns and rows to the original generator matrix without changing the existing ones. Thus, the existing infrastructure does not need to be changed; each node only needs to add cross parities it receives from the newly added node to its current parities. Moreover, with this algorithm, the erasure correction capabilities $\{d_{1,i}\}_{v_i\in \Se{N}_{p+1}}$ of neighboring nodes of $v_{p+1}$ are increased by $\delta_{p+1}$.

\begin{algorithm}
\caption{Node Addition}\label{algo: add node}
\begin{algorithmic}[1]
\Require
\Statex $G(V,E)$: existing DSN;
\Statex $p$: number of nodes in $G(V,E)$;
\Statex $v_{p+1}$: the newly added node;
\Statex $r_{p+1}$: the message length of $v_{p+1}$;
\Statex $k_{p+1}$: the number of parity symbols of the $v_{p+1}$;
\Statex $\delta_{p+1}$: the number of additional parities $v_{p+1}$ provides globally to the DSN;
\Statex $\Se{N}_{p+1}$: the set of nodes to be connected to $v_{p+1}$; 
\Statex $\bold{G}$: the original generator matrix;
\Ensure
\Statex $\bold{G}$: the updated generator matrix;
\State Node $v_{p+1}$ chooses its local parameters $\bold{A}_{p+1,p+1}$, $\bold{U}_{p+1}$, and $\bold{B}_{p+1,i}$, $v_i\in\Se{N}_{p+1}$;
\State Node $v_i$ chooses additional cross parity matrices $\bold{B}_{i,p+1}$, $v_i\in\Se{N}_{p+1}$; 
\State Node $v_{p+1}$ sends $\bold{m}_{p+1}\bold{B}_{p+1,i}$ to node $v_i$, $v_i\in \Se{N}_{p+1}$;
\State Node $v_i$ sends $\bold{m}_i\bold{B}_{i,p+1}$ to $v_{p+1}$, $v_i\in \Se{N}_{p+1}$;
\State Node $v_{p+1}$ computes the stored codeword $\bold{c}_{p+1}=\bold{m}_{p+1}\bold{A}_{p+1,p+1}+\sum\nolimits_{i\in\Se{N}} \bold{m}_i\bold{B}_{i,p+1}\bold{U}_{p+1}$;
\State Node $v_i$ adds $\bold{m}_{p+1}\bold{B}_{p+1,i}$ to its current parity symbols, $v_i\in\Se{N}_{p+1}$;
\State Update $\bold{G}$ accordingly;
\end{algorithmic}
\end{algorithm}

\begin{exam} Consider again the set up in \Cref{exam: exam1}. Suppose a node $v_{13}$ is to be added to the existing DSN and is to be connected to nodes $v_{9}$, $v_{10}$, and $v_{12}$, as shown in Fig.~\ref{fig: addnodeDSN}. 
The messages near the edges marked in red are sent from $v_{13}$, while those marked in blue are sent from the neighboring nodes $v_9$, $v_{10}$, and $v_{12}$, to $v_{13}$. 
Note that the new node $v_{13}$ has coding parameters chosen independently from the existing nodes according to \Cref{theo: ECcons1}, which means that it naturally achieves scalability. 
\end{exam}

\subsection{Flexibility}
\label{subsec flexibility}

The concept of flexibility was proposed and investigated for dynamic cloud storage in \cite{martnez2018universal}. In a dynamic cloud storage system, the rate of which a given piece of data is accessed is likely to change. When the data stored at a cloud become hot, i.e., of higher demand, splitting the cloud into two smaller clouds effectively reduces the latency. However, this action should be done without reducing the erasure correction capability of the rest of the system \textcolor{lara}{or} changing the remaining components. 

Specifically, if the data stored at a cloud $v_i\in V$ become unexpectedly hot, the DSN needs to split $v_i$ into two separate smaller clouds $v_{i^{\Tx{a}}}$ and $v_{i^{\Tx{b}}}$ to maintain relatively low latency; \Cref{algo: split node} presents the procedure to do this. For simplicity, we focus here on the case where only the $1$-st level cooperation is involved in, as presented in \Cref{cons: 1}.

\begin{figure}[H]
\centering
\includegraphics[width=0.5\textwidth]{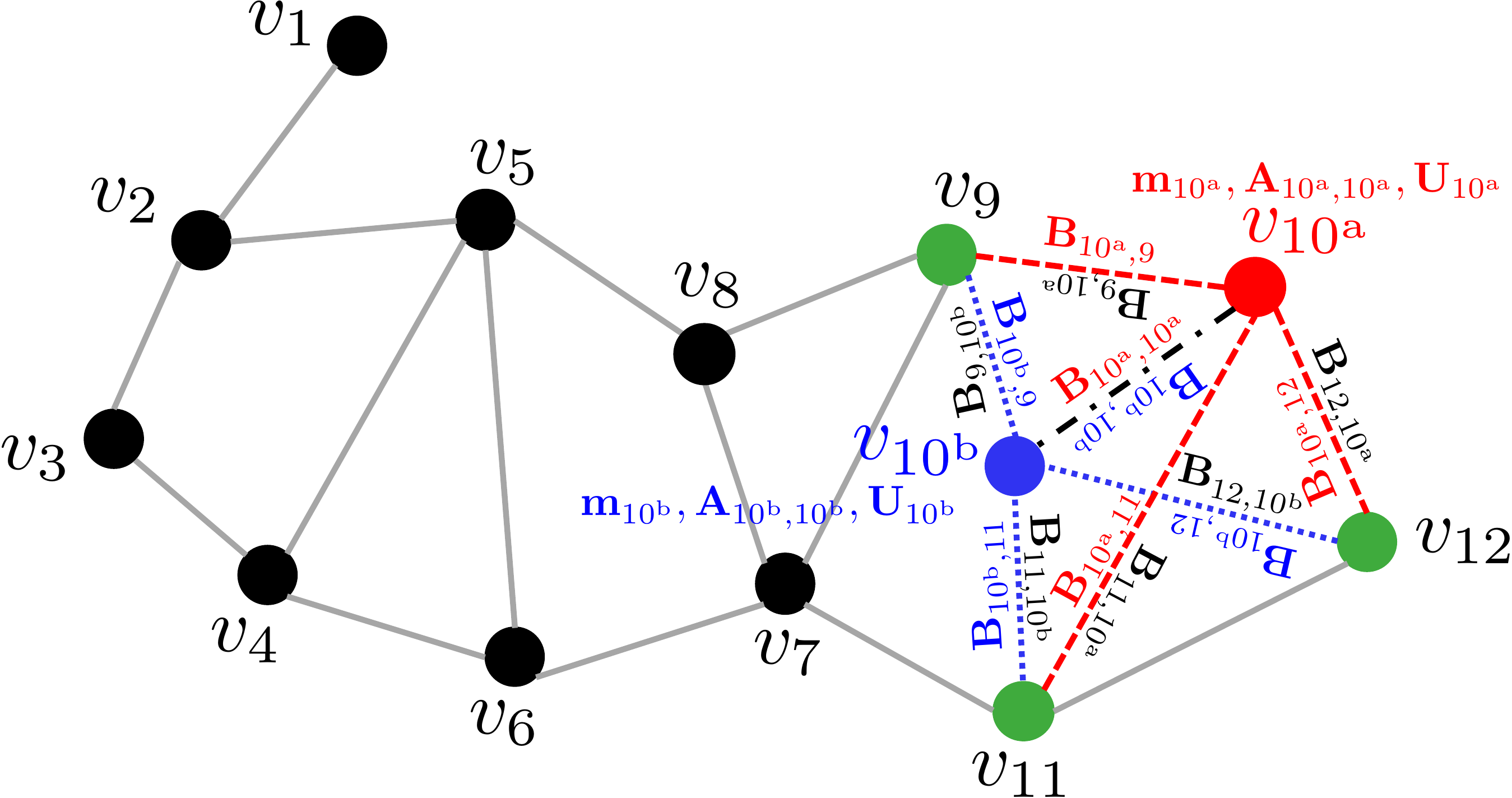}
\caption{Flexibility: Split a node to two nodes in a DSN when it gets hot such that accessing the codeword stored at each one of them achieves low latency.}
\label{fig: splitnodeDSN}
\end{figure}

\begin{algorithm}
\caption{Node Splitting}\label{algo: split node}
\begin{algorithmic}[1]
\Require
\Statex $G(V,E)$: existing DSN;
\Statex $v_i$: the node to be split in $G(V,E)$;
\Statex $\Se{N}_{i}$: the set of neighboring nodes of $v_i$;
\Statex $v_{i^{\textup{a}}}$, $v_{i^{\textup{b}}}$: nodes $v_i$ is split into;
\Statex $k_{i}$, $k_{i^\textup{a}}$, $k_{i^\textup{b}}$: the message lengths of $v_{i}$, $v_{i^{\textup{a}}}$, $v_{i^{\textup{b}}}$, respectively; $k_{i}=k_{i^\textup{a}}+k_{i^\textup{b}}$;
\Statex $r_{i}$, $r_{i^\textup{a}}$, $r_{i^\textup{b}}$: the number of parity symbols of $v_{i}$, $v_{i^{\textup{a}}}$, $v_{i^{\textup{b}}}$, respectively; $r_{i}=r_{i^\textup{a}}+r_{i^\textup{b}}$;
\Statex $\delta_{i}$, $\delta_{i^\textup{a}}$, $\delta_{i^\textup{b}}$: the number of additional parities $v_{i}$, $v_{i^\textup{a}}$, $v_{i^\textup{b}}$ provides globally to the DSN, respectively; $\delta_{i}=\delta_{i^\textup{a}}+\delta_{i^\textup{b}}$;
\Statex $\bold{G}$: the original generator matrix;
\Ensure
\Statex $\bold{G}$: the updated generator matrix;
\State Node $v_i$ splits $\bold{A}_{i,i}$ into $\bold{A}_{i^{\Tx{a}},i^{\Tx{a}}}$, $\bold{B}_{i^{\Tx{b}},i^{\Tx{a}}}$, $\bold{A}_{i^{\Tx{b}},i^{\Tx{b}}}$, $\bold{B}_{i^{\Tx{a}},i^{\Tx{b}}}$ as follows:
\Statex $\bold{A}_{i^{\Tx{a}},i^{\Tx{a}}}=\bold{A}_{i,i}\MB{1:k_{i^{\Tx{a}}}, 1: r_{i^{\Tx{a}}}}$, 
\Statex $\bold{B}_{i^{\Tx{b}},i^{\Tx{a}}}=\bold{A}_{i,i}\MB{k_{i^{\Tx{a}}}+1:k_i, 1: \delta_{i^{\Tx{a}}}}$, 
\Statex $\bold{A}_{i^{\Tx{b}},i^{\Tx{b}}}=\bold{A}_{i,i}\MB{k_{i^{\Tx{a}}}+1:k_i, r_{i^{\Tx{a}}}+1:r_i}$, 
\Statex $\bold{B}_{i^{\Tx{a}},i^{\Tx{b}}}=\bold{A}_{i,i}\MB{1: k_{i^{\Tx{a}}},r_{i^{\Tx{a}}}+1:r_{i^{\Tx{a}}}+\delta_{i^{\Tx{b}}}}$;
\State Node $v_i$ splits $\bold{B}_{i,j}$, $\forall v_j\in \Se{N}_i$, into $\bold{B}_{i^{\Tx{a}},j}$ and $\bold{B}_{i^{\Tx{b}},j}$ as follows:
\Statex $\bold{B}_{i^{\Tx{a}},j}=\bold{B}_{i,j}\MB{1: k_{i^{\Tx{a}}},1:\delta_j}$, 
\Statex $\bold{B}_{i^{\Tx{b}},j}=\bold{B}_{i,j}\MB{k_{i^{\Tx{a}}}+1:k_i,1:\delta_j}$;
\State Node $v_j$, $\forall v_j\in \Se{N}_i$, splits $\bold{B}_{j,i}$ into $\bold{B}_{j,i^{\Tx{a}}}$ and $\bold{B}_{j,i^{\Tx{b}}}$ as follows: 
\Statex $\bold{B}_{j,i^{\Tx{a}}}=\bold{B}_{j,i}\MB{1:k_j,1: \delta_{i^{\Tx{a}}}}$, 
\Statex $\bold{B}_{j,i^{\Tx{b}}}=\bold{B}_{j,i}\MB{1:k_j,\delta_{i^{\Tx{a}}}+1:\delta_i}$, $\forall v_j\in \Se{N}_i$;
\State Node $v_i$ splits $\bold{U}_i$ into $\bold{U}_{i^{\Tx{a}}}$ and $\bold{U}_i^{\Tx{b}}$ as follows:
\Statex $\bold{U}_{i^{\Tx{a}}}=\bold{U}_i\MB{1:\delta_{i^{\Tx{a}}},1:r_{i^{\Tx{a}}}}$,
\Statex $\bold{U}_{i^{\Tx{b}}}=\bold{U}_i\MB{\delta_{i^{\Tx{a}}}+1:\delta_1,r_{i^{\Tx{a}}}+1:r_{i^{\Tx{b}}}}$;
\State Compute the additional cross parities $\bold{s}_i$'s by solving the equation $\bold{s}_i\bold{U}_i=\bold{c}_i-\bold{m}_i\bold{A}_{i,i}$, where $i\in\MB{p}$. Find $\bold{s}_{i^{\Tx{a}}}\in \textup{GF}(q)^{\delta_{i^{\Tx{a}}}}$, $\bold{s}_{i^{\Tx{b}}}\in \textup{GF}(q)^{\delta_{1^{\Tx{b}}}}$ such that $\bold{s}_i=\MB{\bold{s}_{i^{\Tx{a}}},\bold{s}_{i^{\Tx{b}}}}$;
\State Compute the message stored at the node $v_{i^{\textup{a}}}$ and $v_{i^{\textup{a}}}$ as follows: 
\Statex $\bold{c}_{i^{\Tx{a}}}=\MB{\bold{m}_{i^{\Tx{a}}},\bold{m}_{i^{\Tx{a}}}\bold{A}_{i^{\Tx{a}},i^{\Tx{a}}}+\SB{\bold{m}_{i^{\Tx{b}}}\bold{B}_{i^{\Tx{b}},i^{\Tx{a}}}+\bold{y}_{i^{\Tx{a}}}}\bold{U}_{i^{\Tx{a}}}}$,
\Statex $\bold{c}_{i^{\Tx{b}}}=\MB{\bold{m}_{i^{\Tx{b}}},\bold{m}_{i^{\Tx{b}}}\bold{A}_{i^{\Tx{b}},i^{\Tx{b}}}+\SB{\bold{m}_{i^{\Tx{a}}}\bold{B}_{i^{\Tx{a}},i^{\Tx{b}}}+\bold{y}_{i^{\Tx{b}}}}\bold{U}_{i^{\Tx{b}}}}$;
\State Update $\bold{G}$ accordingly;
\end{algorithmic}
\end{algorithm}

Note that the matrix $\bold{B}_{i,j}$ is vertically split into $\bold{B}_{i^{\textup{a}},j}$ and $\bold{B}_{i^{\textup{b}},j}$, while $\bold{B}_{j,i}$ is horizontally split into $\bold{B}_{j,i^{\textup{a}}}$ and $\bold{B}_{j,i^{\textup{b}}}$, for all $v_j$ that are neighboring nodes of $v_i$. Therefore, it is obvious that $\bold{m}_i\bold{B}_{i,j}=\bold{m}_{i^{\textup{a}}}\bold{B}_{i^{\textup{a}},j}+\bold{m}_{i^{\textup{b}}}\bold{B}_{i^{\textup{b}},j}$ and one can prove that the local codeword $\bold{c}_j$ doesn't change for $v_j$ that is a neighboring node of $v_i$. Moreover, since both the local and the global parity-check matrices for each non-split cloud remain unchanged, the local and global erasure capabilities of them are not affected according to \Cref{lemma: DLcodedis}. Furthermore, one can prove that the local codewords stored at the new clouds $i^{\Tx{a}}$ and $i^{\Tx{b}}$ tolerate $(r_{i^{\Tx{a}}}-\delta_{i^{\Tx{a}}})$ and $(r_{i^{\Tx{b}}}-\delta_{i^{\Tx{b}}})$ local erasures, respectively.

\begin{exam} Consider again \Cref{exam: exam1}. If the data stored at node $v_{10}$ become unexpectedly hot, then we split $v_{10}$ into two separate nodes $v_{10^{\Tx{a}}}$ and $v_{10^{\Tx{b}}}$ following \Cref{algo: split node}, as shown in Fig.~\ref{fig: splitnodeDSN}. 

Originally, node $v_{10}$ needs to access all $n_{10}=k_{10}+r_{10}$ symbols to obtain message $\bold{m}_{10}$, which results in high latency when one has to access any set of symbols from $\bold{m}_{10}$ frequently. This \textcolor{lara}{operation} results in unnecessary cost in terms of data processing times, which can be solved by splitting $v_{10}$ into two nodes $v_{10^{\Tx{a}}}$ and $v_{10^{\Tx{b}}}$ that store $n_{10^\Tx{a}}=k_{10^\Tx{a}}+r_{10^\Tx{a}}$ and $n_{10^\Tx{b}}=k_{10^\Tx{b}}+r_{10^\Tx{b}}$ symbols, which contain the information of $\bold{m}_{10^\Tx{a}}$ and $\bold{m}_{10^\Tx{b}}$, respectively. Local access to each one of the two nodes will require significantly lower latency compared with a full access of the original node $v_{10}$. This approach improves the latency especially if the erasures are bursty, i.e., concentrated within any one of $\bold{c}_{10^\Tx{a}}$ or $\bold{c}_{10^\Tx{b}}$. Even if the erasures are distributed more evenly among $\bold{c}_{10^\Tx{a}}$ and $\bold{c}_{10^\Tx{b}}$, the total processing time to obtain $\bold{m}_{10}$ will be the maximum of their individual processing times, which is still much shorter than the original time.
\end{exam}

\section{Conclusion}
\label{section conclusion}

Hierarchical locally accessible codes in the context of centralized cloud networks have been discussed in various prior works, whereas those of DSNs (no prespecified topology) have not been explored. In this paper, we proposed a topology-adaptive cooperative data protection scheme for DSNs, which significantly extends our previous work on hierarchical coding for centralized distributed storage. We discussed the recoverable erasure patterns of our proposed scheme, demonstrating that our scheme corrects patterns pertaining to dynamic DSNs. Our scheme achieves faster recovery speed compared with existing network coding methods, and enables an intrinsic information flow from nodes with higher reliability to nodes with lower reliability that are close to them on the network. Moreover, our constructions are also proved to be scalable and flexible, making them a construction with great potential to be employed in dynamic DSNs. 

\IEEEpeerreviewmaketitle

\section*{Acknowledgment}
This work was supported in part by NSF under the grants CCF-BSF 1718389 and CCF 1717602, and in part by AFOSR under the grant FA 9550-17-1-0291.

\balance
\bibliographystyle{IEEEtran}
\bibliography{ref}

\begin{thebibliography}{10}
\providecommand{\url}[1]{#1}
\csname url@samestyle\endcsname
\providecommand{\newblock}{\relax}
\providecommand{\bibinfo}[2]{#2}
\providecommand{\BIBentrySTDinterwordspacing}{\spaceskip=0pt\relax}
\providecommand{\BIBentryALTinterwordstretchfactor}{4}
\providecommand{\BIBentryALTinterwordspacing}{\spaceskip=\fontdimen2\font plus
\BIBentryALTinterwordstretchfactor\fontdimen3\font minus
  \fontdimen4\font\relax}
\providecommand{\BIBforeignlanguage}[2]{{%
\expandafter\ifx\csname l@#1\endcsname\relax
\typeout{** WARNING: IEEEtran.bst: No hyphenation pattern has been}%
\typeout{** loaded for the language `#1'. Using the pattern for}%
\typeout{** the default language instead.}%
\else
\language=\csname l@#1\endcsname
\fi
#2}}
\providecommand{\BIBdecl}{\relax}
\BIBdecl
\renewcommand{\BIBentryALTinterwordstretchfactor}{4}

\bibitem{Yang2019HC}
\BIBentryALTinterwordspacing
S.~Yang, A.~Hareedy, R.~Calderbank, and L.~Dolecek, ``{H}ierarchical coding to
  enable scalability and flexibility in heterogeneous cloud storage,'' in
  \emph{IEEE Global Communications Conference (GLOBECOM)}, Dec. 2019. [Online].
  Available: \url{https://arxiv.org/abs/1905.02279}
\BIBentrySTDinterwordspacing

\bibitem{Yang2020DSN}
\BIBentryALTinterwordspacing
------, ``{T}opology-aware cooperative data protection in blockchain-based
  decentralized storage networks,'' 2020. [Online]. Available:
  \url{https://arxiv.org/abs/2001.04526}
\BIBentrySTDinterwordspacing

\bibitem{storj2018}
\BIBentryALTinterwordspacing
``Storj: A decentralized cloud storage network framework,'' Oct. 2018.
  [Online]. Available: \url{https://storj.io/storjv3.pdf}
\BIBentrySTDinterwordspacing

\bibitem{hassner2001integrated}
M.~Hassner, K.~Abdel-Ghaffar, A.~Patel, R.~Koetter, and B.~Trager,
  ``{I}ntegrated interleaving-a novel {E}{C}{C} architecture,''
  \emph{{I}{E}{E}{E} Transactions on Magnetics}, vol.~37, no.~2, pp. 773--775,
  Feb. 2001.

\bibitem{huang2017multi}
P.~Huang, E.~Yaakobi, and P.~H. Siegel, ``Multi-erasure locally recoverable
  codes over small fields,'' in \emph{55th Annual Allerton Conference on
  Communication, Control, and Computing (Allerton)}.\hskip 1em plus 0.5em minus
  0.4em\relax IEEE, Sep. 2017, pp. 1123--1130.

\bibitem{cassuto2017multi}
Y.~Cassuto, E.~Hemo, S.~Puchinger, and M.~Bossert, ``Multi-block interleaved
  codes for local and global read access,'' in \emph{Proceedings {IEEE} Int.
  Symp. Inf. Theory}, 2017, pp. 1758--1762.

\bibitem{wu2017generalized}
Y.~Wu, ``{G}eneralized integrated interleaved codes,'' \emph{IEEE Transactions
  on Information Theory}, vol.~63, no.~2, pp. 1102--1119, Nov. 2017.

\bibitem{ballentine2018codes}
\BIBentryALTinterwordspacing
S.~Ballentine, A.~Barg, and S.~Vladuts, ``Codes with hierarchical locality from
  covering maps of curves,'' 2018. [Online]. Available:
  \url{https://arxiv.org/abs/1807.05473}
\BIBentrySTDinterwordspacing

\bibitem{zhang2018generalized}
X.~Zhang, ``Generalized three-layer integrated interleaved codes,'' \emph{IEEE
  Communications Letters}, vol.~22, no.~3, pp. 442--445, Mar. 2018.

\bibitem{blaum2018extended}
M.~Blaum and S.~R. Hetzler, ``{E}xtended product and integrated interleaved
  codes,'' \emph{{IEEE} Transactions on Information Theory}, vol.~64, no.~3,
  pp. 1497--1513, Mar. 2018.

\bibitem{balaji2019tight}
S.~B. {Balaji}, G.~R. {Kini}, and P.~V. {Kumar}, ``A tight rate bound and
  matching construction for locally recoverable codes with sequential recovery
  from any number of multiple erasures,'' \emph{IEEE Transactions on
  Information Theory}, vol.~66, no.~2, pp. 1023--1052, 2020.

\bibitem{dimakis2010distributed}
A.~G. Dimakis, P.~B. Godfrey, Y.~Wu, M.~J. Wainwright, and K.~Ramchandran,
  ``{N}etwork coding for distributed storage systems,'' \emph{{IEEE}
  Transactions on Information Theory}, vol.~56, no.~9, pp. 4539--4551, Sep.
  2010.

\bibitem{kong2010decentralized}
Z.~Kong, S.~A. Aly, and E.~Soljanin, ``Decentralized coding algorithms for
  distributed storage in wireless sensor networks,'' \emph{IEEE Journal on
  Selected Areas in Communications}, vol.~28, no.~2, pp. 261--267, Feb. 2010.

\bibitem{ye2018cooperative}
M.~Ye and A.~Barg, ``Cooperative repair: Constructions of optimal {M}{D}{S}
  codes for all admissible parameters,'' \emph{IEEE Transactions on Information
  Theory}, vol.~65, no.~3, pp. 1639--1656, Mar. 2018.

\bibitem{tebbi2019multi}
A.~Tebbi, T.~H. Chan, and C.~W. Sung, ``Multi-rack distributed data storage
  networks,'' \emph{IEEE Transactions on Information Theory}, vol.~65, no.~10,
  pp. 6072--6088, Oct. 2019.

\bibitem{hou2019rack}
H.~{Hou}, P.~P.~C. {Lee}, K.~W. {Shum}, and Y.~{Hu}, ``Rack-aware regenerating
  codes for data centers,'' \emph{IEEE Transactions on Information Theory},
  vol.~65, no.~8, pp. 4730--4745, Aug. 2019.

\bibitem{chen2019explicit}
\BIBentryALTinterwordspacing
Z.~Chen and A.~Barg, ``Explicit constructions of {M}{S}{R} codes for clustered
  distributed storage: The rack-aware storage model,'' 2019. [Online].
  Available: \url{https://arxiv.org/abs/1901.04419}
\BIBentrySTDinterwordspacing

\bibitem{prakash2018storage}
N.~Prakash, V.~Abdrashitov, and M.~M{\'e}dard, ``The storage versus
  repair-bandwidth trade-off for clustered storage systems,'' \emph{IEEE
  Transactions on Information Theory}, vol.~64, no.~8, pp. 5783--5805, Aug.
  2018.

\bibitem{li2010tree}
J.~Li, S.~Yang, X.~Wang, and B.~Li, ``Tree-structured data regeneration in
  distributed storage systems with regenerating codes,'' in \emph{Proceedings
  of 2010 IEEE INFOCOM}, 2010, pp. 1--9.

\bibitem{pernas2013non}
J.~Pernas, C.~Yuen, B.~Gast{\'o}n, and J.~Pujol, ``Non-homogeneous two-rack
  model for distributed storage systems,'' in \emph{IEEE International
  Symposium on Information Theory}, Jun. 2013, pp. 1237--1241.

\bibitem{wang2014heterogeneity}
Y.~{Wang}, D.~{Wei}, X.~{Yin}, and X.~{Wang}, ``Heterogeneity-aware data
  regeneration in distributed storage systems,'' in \emph{Proceedings of 2010
  IEEE INFOCOM}, 2014, pp. 1878--1886.

\bibitem{ibrahim2016green}
A.~M. Ibrahim, A.~A. Zewail, and A.~Yener, ``Green distributed storage using
  energy harvesting nodes,'' \emph{IEEE Journal on Selected Areas in
  Communications}, vol.~34, no.~5, pp. 1590--1603, May 2016.

\bibitem{sipos2018network}
M.~Sipos, J.~Gahm, N.~Venkat, and D.~Oran, ``Network-aware feasible repairs for
  erasure-coded storage,'' \emph{IEEE/ACM Transactions on Networking}, vol.~26,
  no.~3, pp. 1404--1417, Mar. 2018.

\bibitem{sipos2016erasure}
------, ``Erasure coded storage on a changing network: The untold story,'' in
  \emph{IEEE Global Communications Conference (GLOBECOM)}, Dec. 2016, pp. 1--6.

\bibitem{rimal2009taxonomy}
B.~P. {Rimal}, E.~{Choi}, and I.~{Lumb}, ``A taxonomy and survey of cloud
  computing systems,'' in \emph{5th International Joint Conference on INC, IMS
  and IDC}, Aug. 2009, pp. 44--51.

\bibitem{martnez2018universal}
U.~Martnez-Penas and F.~R. Kschischang, ``Universal and dynamic locally
  repairable codes with maximal recoverability via sum-rank codes,'' in
  \emph{56th Annual Allerton Conference on Communication, Control, and
  Computing (Allerton)}, Oct. 2018, pp. 792--799.

\bibitem{van1986minimum}
J.~Van~Lint and R.~Wilson, ``On the minimum distance of cyclic codes,''
  \emph{IEEE Transactions on Information Theory}, vol.~32, no.~1, pp. 23--40,
  Jan. 1986.

\bibitem{bloemer1995xor}
J.~Bloemer, M.~Kalfane, R.~Karp, M.~Karpinski, M.~Luby, and D.~Zuckerman, ``An
  {X}{O}{R}-based erasure-resilient coding scheme,'' 1995.

\bibitem{plank2006optimizing}
J.~S. Plank and L.~Xu, ``Optimizing {C}auchy {R}eed-{S}olomon codes for
  fault-tolerant network storage applications,'' in \emph{5th IEEE
  International Symposium on Network Computing and Applications (NCA'06)}, Jul.
  2006, pp. 173--180.

\bibitem{wu2015efficient}
S.~Wu, Y.~Xu, Y.~Li, and Z.~Yang, ``I/{O}-efficient scaling schemes for
  distributed storage systems with {C}{R}{S} codes,'' \emph{IEEE Transactions
  on Parallel and Distributed Systems}, vol.~27, no.~9, pp. 2639--2652, Sep.
  2015.

\bibitem{zhu2019blockchain}
Y.~Zhu, C.~Lv, Z.~Zeng, J.~Wang, and B.~Pei, ``Blockchain-based decentralized
  storage scheme,'' in \emph{Journal of Physics: Conference Series}, vol. 1237,
  no.~4, Apr. 2019, p. 042008.

\bibitem{underwood2016blockchain}
S.~Underwood, ``Blockchain beyond bitcoin,'' \emph{Communications of the ACM},
  no.~11, Nov. 2016.

\end{thebibliography}

\end{document}